\documentclass[twoside,11pt]{article}
\usepackage[preprint]{jmlr2e}
\usepackage{etoolbox}
\makeatletter
\patchcmd{\@maketitle}{\if@preprint}{\if@preprint
  \noindent{\small\vphantom{Ay}}\vskip\aftermaketitskip
}{}{\PackageError{manuscript}{Title spacing patch failed}{}}
\makeatother
\usepackage{smile}
\usepackage{amsthm,amsmath,amssymb}
\usepackage{natbib}
\usepackage{graphicx}
\usepackage{epsfig}
\usepackage{multirow}
\usepackage{subfigure}
\usepackage{makecell}
\usepackage{booktabs}
\usepackage{array}
\usepackage{url}
\usepackage{xr}
\usepackage{changes}
\usepackage{algorithm}
\usepackage{algorithmic}
\usepackage{bm}
\usepackage{smile}
\usepackage{mathtools}
\usepackage{wrapfig}
\usepackage{lipsum}
\usepackage{mathrsfs}
\usepackage{dsfont}
\usepackage{rotating}
\usepackage{float}
\usepackage{natbib}
\usepackage{multirow}
\usepackage{apalike}
\usepackage{xcolor}
\usepackage{romannum}
\usepackage{hyperref}
\usepackage{pifont}
\usepackage{boldline}
\usepackage{lastpage}
\usepackage[utf8]{inputenc}
\usepackage{diagbox}
\usepackage{threeparttable} %

\ShortHeadings{Tucker Personalized Subspace PCA}{Kangxiang Qin, et al.}
\firstpageno{1}

\def\reals{{\mathbb R}}
\def\P{{\mathbb P}}

\def\E{{\mathbb E}}
\def\argmin{\mathop{\text{\rm arg\,min}}}

\def\op{\text{op}}

\def\argmax{\mathop{\text{\rm arg\,max}}}

\def\span{{\mathop{\text{\rm span}}}}
\def\tr{{\mathop{\text{\rm tr}}}}
\def\vec{{\mathop{\text{\rm vec}}}}

\begin{document}
\pagenumbering{arabic}
\addtocontents{toc}{\protect\setcounter{tocdepth}{-10}}

\title{Optimal Personalized Subspace Learning for Multi-view Tensor Observations}

\author{\name Kangxiang Qin$^*$ \email kangxiangqin@mail.sdu.edu.cn \\
       \addr Institute for Financial Studies\\
       Shandong University
       \AND
       \name Zeyu Li$^*$ \email zeyuli@seu.edu.cn \\
       \addr School of Statistics and Data Science
       \\
       Southeast University
       \AND
       \name
       Xinbing Kong$^\dagger$ \email
       xinbingkong@126.com
       \\
       \addr School of Statistics and Data Science
       \\
       Southeast University
          \AND
       \name
       Wang Zhou
       \email
       wangzhou@nus.edu.sg
       \\
       \addr Department of Statistics and Applied Probability
       \\
       National  University of Singapore
       }

\maketitle
\thispagestyle{plain}
\footnotetext[1]{* Equal contribution.}
\footnotetext[2]{$\dagger$ Corresponding author.}

\begin{abstract}
In this work, we model the observed multi-view tensors by decomposing the underlying signal in each view into two components: (i) the shared component that captures common dynamics across all views, and (ii) the private component that accounts for view-wise unique variations. To decouple the shared and private components, we introduce a novel Tucker personalized subspace principal component analysis (TPS-PCA) approach for tensors, which admits a one-step closed-form solution and serves as an ideal surrogate for our extended tensor-version personalized PCA (TP-PCA), adapted from the seminal work by \cite{shi2024personalized}. The theoretical analysis reveals that the proposed TPS-PCA estimators reach the minimax lower bound in terms of view-wise tensor decoupling, whereas the TP-PCA estimators only achieve a rate of average decoupling error across views, which is still slower than that of the TPS-PCA estimators. Extensive numerical experiments are conducted on synthetic and real datasets, demonstrating the wide applicability of the proposed method in fields including power management, financial analysis, and activity recognition.
\end{abstract}

\begin{keywords}
Multi-view tensors; principal component analysis; minimax optimality.
\end{keywords}

\section{Introduction}
Due to the substantial increase in computing capacity, tensors are now common subjects in statistics, applied mathematics, and engineering \citep{kolda2009tensor}, thanks to their ability to naturally represent multi-way data while preserving the underlying structural information across different modes. Tensor-oriented statistical
methods, such as tensor SVD \citep{zhang2014novel,zhang2018tensor}, tensor completion \citep{xia2019polynomial,xia2021statistically,xia2021statistical}, tensor PCA and factor analysis 
\citep{han2022tensor,han2024tensor,barigozzi2026statistical,chen2026estimation}, and tensor inference \citep{xu2025statistical,agterberg2026statistical,wen2026online}, have experienced rapid development over the last few years. These representative methods are useful and easy to implement. Therefore, they are widely applied in recommendation systems \citep{bi2018multilayer,zhang2021dynamic}, multi-imaging data analysis \citep{zhou2013tensor,tang2020individualized}, econometrics \citep{li2015tensor,wang2024high}, and network structure analysis \citep{ke2019community,jing2021community}. 

However, in some data-intensive fields such as computer science and finance \citep{li2015multi,zhou2023fast}, a single tensor is often insufficient to fully characterize the statistical objects in hand. It is increasingly common to observe  heterogeneous data that are naturally structured as multiple tensors \citep{DBLP:journals/corr/abs-1911-04226,gahrooei2021multiple}. These tensors tend to align along a certain mode of objects, such as time or experimental subjects, while each tensor captures the corresponding characteristics of the objects from a distinct view \citep{wu2019essential,li2023orthogonal}. Indeed, multi-view tensors are intrinsically connected to each other, and the cross-view dependence is often driven by a latent shared factor structure  \citep{khan2014bayesian,cheng2018tensor}. Meanwhile, as each view provides distinct characteristics of the objects, the corresponding tensor typically contains view-specific information \citep{shi2025tensor,wang2026embedded}, resulting in a private structure. As such, multi-view tensor data typically exhibit both shared and view-specific structures.

To illustrate this point, we present two concrete examples. In activity recognition tasks, signals from sensors attached to different body parts are collected simultaneously while a given subject performs the same activity (see the left panel of Figure \ref{fig:intro_examples_both}) \citep{barshan2014recognizing,wang2018stratified}. The resulting multi-view tensors naturally admit both shared and view-specific structures \citep{yu2019transfer,chen2020fedhealth,ray2023transfer}.
Similarly, in the financial system, stock returns and financial characteristics \citep{xu2021tensor,fan2022structural,mo2025act} can be collected at aligned timestamps (see the right panel of Figure \ref{fig:intro_examples_both}), and these two views also tend to contain shared and view-specific dynamics \citep{alti2019dynamic,qu2022identification}.
To analyze datasets similar to the mentioned examples, it is natural to consider the following question: \textit{how can we extract meaningful shared and view-specific private features given a set of multi-view tensor observations?}

\begin{figure}[h]
    \centering
    \begin{minipage}{\linewidth}
        \centering
        \includegraphics[width=\linewidth]{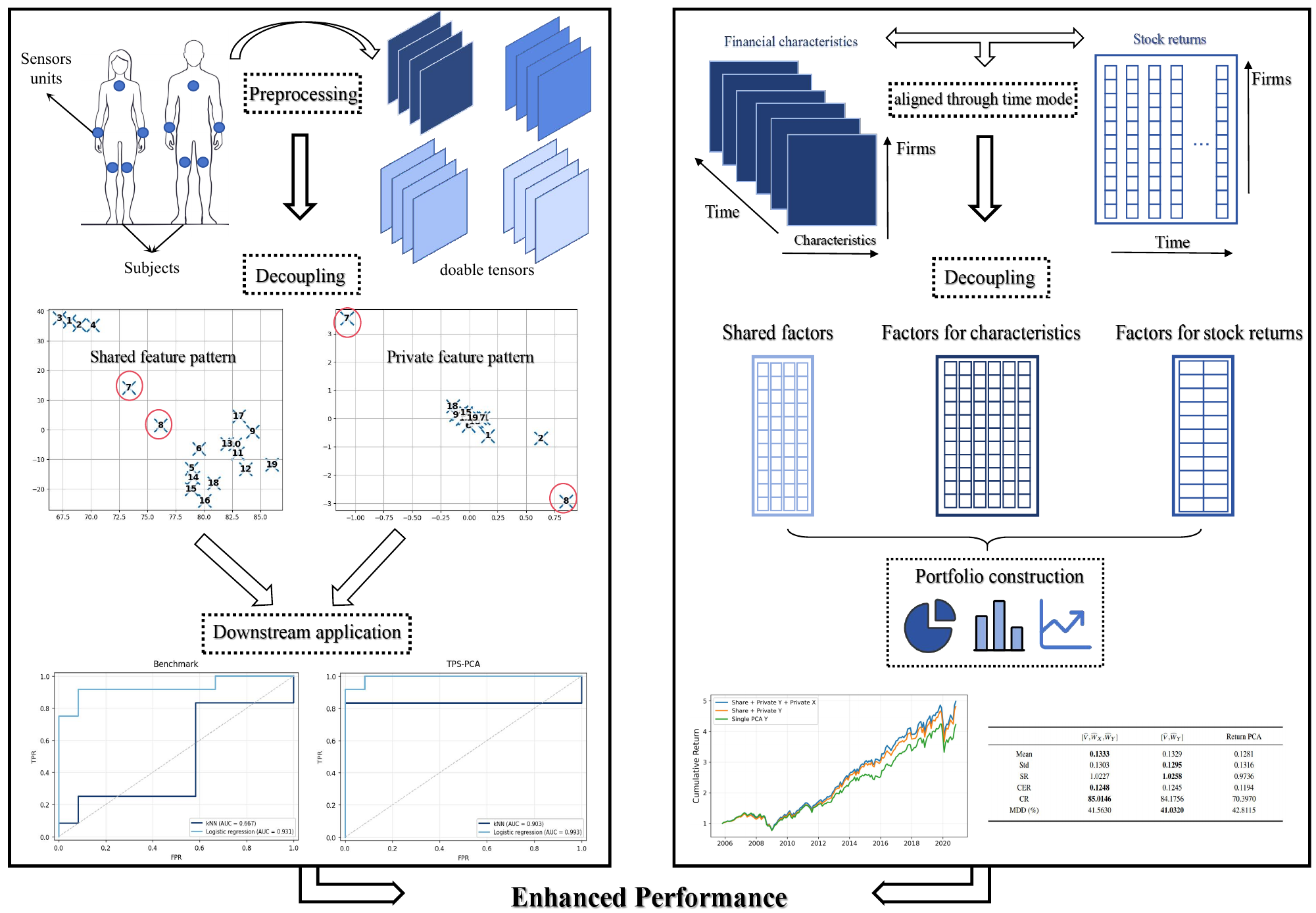} 
    \end{minipage}
    \caption{
    Two multi-view tensor data examples. $(1)$ The left panel provides a multi-view activity recognition dataset. The sensors are attached to different body parts of a participant. The collected signals from each body part naturally form multiple views to describe the same activity of a given subject. $(2)$ The right panel presents a two-view financial example. Both views are collected from the Center for Research in Security Prices (CRSP) and are respectively formulated into a matrix (stock returns) and a tensor (financial characteristics). In both examples, we identify the shared and view-specific private structures, so as to enhance the performance of downstream applications such as activity classification (left panel) and portfolio construction (right panel).  More details can be found in Section \ref{sec:4}.
    \label{fig:intro_examples_both}}
\end{figure}

In essence, the idea of identifying the commonality and specificity among multiple datasets is ubiquitous in statistical learning, which can be found in fields like supervised transfer learning \citep{he2022transfer,li2023transfer,tian2022transfer,li2024simultaneous,he2024representational}, unsupervised transfer learning \citep{li2024knowledge,he2025transpca,hu2025aggregated}, federated learning \citep{li2024federated,cai2024optimal,shi2024personalized}, etc. Among these studies, \cite{shi2024personalized} is one of the first works to consider cross-dataset commonality and dataset-specific heterogeneity in an unsupervised learning setting. They proposed a novel optimization problem named personalized PCA that can identify both common and source-specific components for vector-type datasets. Meanwhile, \cite{hu2025aggregated} used an aggregated projection method to identify the common and private factors across multiple economic panels, whereas \cite{sergazinov2026spectral} applied canonical-correlation-type analysis to characterize shared and private components in biological datasets.

Despite the growing literature, several intrinsic gaps still remain. First, most existing methods for identifying shared and private components are tailored for vector-shaped observations. For tensor observations, however, the direct vectorization not only increases computational costs but also breaks down the latent tensor structure \citep{feng2023tensorir,wang2024low}, thereby reducing statistical efficiency for common component estimation and downstream analysis.  We refer to Table \ref{table:subspace_comparison} in Section \ref{sec:simulation} for a numerical comparison between the inherent tensor-based and vectorization-based algorithms. 

Second, while statistical learning for multiple tensors has been recently considered in studies such as \cite{chen2025distributed,hu2025personalized}, the existing methods typically require all tensors to have identical shapes. In contrast, multi-view tensors often exhibit heterogeneous structures (e.g., different mode dimensions and even different numbers of modes) across different views. For example, in the financial example presented in the right panel of Figure \ref{fig:intro_examples_both}, the firm stock returns at each time point form a vector (order-one), while the firm characteristic matrix is of order-two. As far as we know, the existing multi-source tensor methods are not directly applicable to multi-view tensors with different shapes \citep{chen2025distributed,hu2025personalized}. 

Third, the computational efficiency and statistical optimality of personalized decomposition methods are still not fully understood. 
Although sharp convergence rates and statistical lower bounds have been established for some single-tensor estimation problems 
\citep{han2024tensor}, existing shared-private decomposition methods typically rely on iterative algorithms and mainly provide consistency or average-error guaranties \citep{shi2024personalized,hu2025personalized}. That is, computationally efficient estimation with view-wise optimal guaranties for heterogeneous multi-view tensors remains largely unexplored.

To address the methodological and theoretical gaps mentioned above, inspired by vector personalized PCA in \cite{shi2024personalized}, we introduce an optimization problem named \textit{Tucker personalized subspace principal component analysis (TPS-PCA)} on top of our extended tensor-version personalized PCA (TP-PCA). We summarize our main contributions in the following section.

\subsection{Contributions}  
The contributions of this work lie in the following aspects.

First, we model the observed multi-view tensors by decomposing each tensor into (i) a shared component that captures common patterns across all views and (ii) a view-specific private component that accounts for idiosyncratic variation unique to each view. Unlike the existing algorithms that require identically shaped tensors \citep{chen2025distributed,hu2025aggregated}, our model can handle multi-view tensors regardless of the tensor orders (shapes) and mode dimensions (sizes). It not only facilitates model interpretability but also enhances the proposed method's practical utility in downstream tasks with potentially heterogeneous tensor shapes and sizes; see Section \ref{sec:4} for details.

Second, inspired by \cite{shi2024personalized}, we introduce an optimization-based method named Tucker personalized subspace PCA (TPS-PCA) to identify the shared and private components across multi-view tensors in a computationally efficient manner: our optimization problem admits a one-step global optimum (Theorem \ref{theorem:go}). For comparison, the personalized PCA method in \cite{shi2024personalized} uses a multi-step Stiefel gradient descent algorithm \citep{boumal2023introduction}, whose local numerical convergence relies on Polyak-Lojasiewicz-type conditions that can be difficult to verify in practice \citep{karimi2016linear,chen2022faster}. Also, due to the joint optimization nature, the statistical property of personalized PCA can only be theoretically analyzed from a view-averaged perspective (see Theorem $1$ of \cite{shi2024personalized}). The proposed TPS-PCA estimators, on the other hand, possess a one-step minimax view-wise decoupling property which, to the best of our knowledge, has not been addressed in the literature \citep{zhang2019cross,han2022optimal}.

Third, we validate our proposed algorithm through extensive simulation studies and real-data applications. The numerical results show that TPS-PCA can effectively recover both the shared and private components in a computationally-efficient manner. Detailed analyzes of the electricity and temperature dataset \citep{chen2021tensor}, the financial dataset \citep{fan2022structural}, and the activity recognition dataset \citep{barshan2014recognizing} are also performed, which further demonstrate the potential applicability of the proposed personalized subspace learning method.

\subsection{Organization and Notations}
In Section \ref{sec:1}, we formally introduce the statistical problem of decoupling multi-view tensor observations. In Section \ref{sec:2}, we introduce the optimization-based method named Tucker personalized subspace PCA (TPS-PCA). 
Section \ref{sec:3} considers theoretical analyzes for the proposed method, including estimator consistency, minimax optimality, and shared dimension selection consistency. Section \ref{sec:4} validates the applicability of our algorithms from both simulation and real data analysis perspectives. We end our main article with Section \ref{sec:5}. Further discussions of the main text, along with the detailed theoretical proofs of the main theoretical results, can be found in the appendix.

Throughout the article, we use lowercase letters $x,y,v\in\mathbb{R}^{d}$ and uppercase letters $A,B\in\mathbb{R}^{d_1\times d_2}$ to denote (random) vectors and matrices, respectively. Let $v[i]$ be the $i$-th element of $v$. For $q\geq 1$, the $\ell_q$-norm of a column (row) vector $v$ $(v^\top\in\mathbb{R}^{d})$ is denoted by $\|v\|_q$. We abbreviate $\|v\|_2$ as $\|v\|$ if $q=2$. For vectors $v_1$ and $v_2$, we let $v_1^{\top}v_2$
denote the inner product of $v_1$ and $v_2$.

Let $I_d$ and $0_{d_1\times d_2}$ denote the $d$-dimensional identity matrix and the $d_1$ by $d_2$ zero matrix, respectively. We omit the subscript of the zero matrix when no confusion arises.
For any real matrix $A$, we define $A[i,j]$ as the element in the $i$-th row and $j$-th column. We further use $\text{rank}(A)$ and $\text{span}(A)$ to denote the rank of $A$ and the subspace spanned by the column vectors of $A$, respectively. We also let $\text{vec}(A)$ denote the column-wise vectorization of $A$. Let $\sigma_i(A)$ be the $i$-th largest non-zero singular value of $A$. For any symmetric matrix $A$, we denote $\lambda_i(A)$ as the $i$-th largest eigenvalue of $A$. We abbreviate $\lambda_i(A)$ as $\lambda_i$ without introducing ambiguity. We further use 
$\|A\|_{\op}$, $\|A\|_F$, and $\tr(A)$ to denote the operator norm, the Frobenius norm, and the trace of $A$, respectively. Let $\langle A,B\rangle\triangleq\tr(A^\top B)$ and $A\otimes B\in\mathbb{R}^{d_1^2\times d_2^2}$ be the inner product and the Kronecker product between $A$ and $B$, respectively. Define $[A,B]$ as the column-wise concatenation of $A$ and $B$. Let $C_1,c_1,C_2,\cdots$ denote the constants whose values may vary from line to line. 
For any $a_n,b_n$, we let $a_n\lesssim b_n$ and $a_n=O(b_n)$ be $a_n\leq Cb_n$ for sufficiently large $n$, while $a_n\asymp b_n$ means $a_n\lesssim b_n$ and $b_n\lesssim a_n$ simultaneously. Furthermore, we denote $a_n=o(b_n)$ as $|a_n/b_n|\rightarrow0$ for $n\rightarrow\infty$. For any $\epsilon>0$, let $a_n=O_p(b_n)$ denote $\mathbb{P}(|a_n/b_n|> c_{\epsilon})<\epsilon$ for some constant $c_\epsilon$. Let $a_n=o_p(b_n)$ denote $\mathbb{P}(|a_n/b_n|> c)\rightarrow0$ for any constant $c>0$.

We use $\cX\in\mathbb{R}^{d_1\times\cdots\times d_L}$ and $\cX[i_1,\cdots,i_L]$ to denote an order-$L$ tensor and its $(i_1,\cdots,i_L)$-th entry, respectively. For two order-$L$ tensors $\cX$ and $\cT$ with identical shapes, let $[\cX,\cT]_l\in\mathbb{R}^{d_1\times \cdots\times2d_l\times\cdots\times d_L}$ and $[\cX,\cT]_{L+1}\in\mathbb{R}^{d_1\times \cdots\times d_L\times2}$ denote the tensors obtained by concatenating $\cX$ and $\cT$ along the $l$-th mode and the $(L+1)$-th mode, respectively. 
For an order-$L$ tensor $\cX\in\mathbb{R}^{d_1\times\cdots\times d_L}$,
we define its mode-$l$ matricization
$X:=\cM_l(\cX)\in\mathbb{R}^{d_l\times\prod_{m\neq l}d_m}$
by arranging the remaining modes in the cyclic order
$(l+1,\ldots,L,1,\ldots,l-1)$.
More precisely, let $(m_1,\ldots,m_{L-1})$ denote this ordered list.
Then
$$
X[i_l,j]=\cM_l(\cX)[i_l,j]=\cX[i_1,\ldots,i_L],
\ \text{where}\ 
j=1+\sum_{t=1}^{L-1}(i_{m_t}-1)
\prod_{u=t+1}^{L-1}d_{m_u},
$$
The product is interpreted as $1$ if $L-1\leq t+1$ for $t=L-3,L-2$. 
Given $B\in\mathbb{R}^{d\times d_l}$, we define the mode-$l$ product of $\cX$ as
$$
\mathcal{B} = \mathcal{X} \times_l B,\ \text{where}\ 
\mathcal{B}[i_1,\cdots,j,\cdots,i_L]:=\sum_{i_l=1}^{d_l}\mathcal{X}[i_1,\cdots,i_l,\cdots,i_L]B[j,i_l].
$$
For a tensor $\mathcal{X}$, its Tucker decomposition is defined as
$$
\left\{
\begin{matrix}
\mathcal{X}=\mathcal{C}\times_1U_1\times_2U_2\cdots\times_LU_L,\ \cC\in\mathbb{R}^{r_1\times r_2\times\cdots\times r_L},\\
\cX[i_1,i_2,\cdots,i_L]=\sum\limits_{m_1=1}^{r_1}\sum\limits_{m_2=1}^{r_2}\cdots\sum\limits_{m_L=1}^{r_L}\cC[m_1,m_2,\cdots,m_L]U_1[i_1,m_1]U_2[i_2,m_2]\cdots U_L[i_L,m_L].
\end{matrix}
\right.
$$
$\cC$ denotes the core tensor, and each $U_l\in\cO(d_l, r_l):=\{U\in\mathbb{R}^{d_l\times r_l}|U^\top U=I_{r_l}\}$ denotes the loading matrix for the $l$-th mode, where $\cO(d_l, r_l)$ consists of all $d_l\times r_l$ column-orthonormal matrices. We denote $\|\cX\|_F:=(\sum_{i_1=1}^{d_1}\cdots\sum_{i_L=1}^{d_L}\cX^2[i_1,i_2,\cdots,i_L])^{1/2}$ as the tensor Frobenius norm for $\cX$.
Further detailed tutorials for tensor computations and applications can be found in \cite{liu2022tensor}.

\section{Problem Formulation}
\label{sec:1}
To formulate the statistical problem, assume that we have multi-view $(K\geq2)$ tensor observations denoted as 
$\cD_k :=\{\cX_1^{(k)},\cdots,\cX_n^{(k)}\},$
where $k\in[K]$ is the view index. For each $k$, the $i$-th observation $\cX_i^{(k)}\in \reals^{p^{(k)}_1\times\cdots\times p^{(k)}_{L_k}}$ is an order-$L_k$ tensor, and $p^{(k)}_j$ is the dimension of the $j$-th mode. To model the shared-private structure for each $\cX_i^{(k)}$, we impose the following model:
\begin{equation}\label{eq:spm}
    \cX_i^{(k)}=\sum_{m_1=1}^{r_s}v_{i,m_1} \cS_{m_1}^{(k)} + \sum_{m_2=1}^{r_k-r_s}w^{(k)}_{i,m_2} \cP_{m_2}^{(k)} + \cE_i^{(k)},
\end{equation}
where $\cS_{m_1}^{(k)}$ and $\cP_{m_2}^{(k)}$ are shared and private basis tensors, while $v_{i,m_1}$ and $w^{(k)}_{i,m_2}$ are the corresponding factor coefficients for each subject. If the intrinsic tensor structures are ignored and the tensor observations within each view are directly vectorized, \eqref{eq:spm} reduces to vector personalized PCA model \citep{shi2024personalized}.
\begin{equation}
\label{eq:spm_vec}
\begin{aligned}
\underbrace{\left[\vec(\cX_1^{(k)}),\cdots,\vec(\cX_n^{(k)})\right]}_{p^{(k)}_1p^{(k)}_2\cdots p^{(k)}_{L_k} \times n}&= \underbrace{\left[\vec(\cS_{1}^{(k)}),\cdots,\vec(\cS_{r_s}^{(k)})\right]}_{p^{(k)}_1p^{(k)}_2\cdots p^{(k)}_{L_k} \times r_s}V^{\top}\\&\quad+ \underbrace{\left[\vec(\cP_{1}^{(k)}),\cdots,\vec(\cP_{r_k-r_s}^{(k)})\right]}_{p^{(k)}_1p^{(k)}_2\cdots p^{(k)}_{L_k} \times (r_k-r_s)}(W^{(k)})^{\top}\\&\quad + \underbrace{\left[\vec(\cE_1^{(k)}),\cdots,\vec(\cE_n^{(k)})\right]}_{p^{(k)}_1p^{(k)}_2\cdots p^{(k)}_{L_k} \times n},
\end{aligned}
\end{equation}
where $V\in\mathbb{R}^{n\times r_s}$ and $W^{(k)}\in\mathbb{R}^{n\times (r_k-r_s)}$ satisfy  $V^{\top} V = I_{r_s}$, $(W^{(k)})^{\top} W^{(k)} = I_{r_k-r_s}$, and $V^{\top}W^{(k)}=0$ for each $k\in[K]$. Without loss of generosity, we require $0<r_s<\min_k\{r_k\}\leq\max_k\{r_k\}\ll n$. The cases $r_k=r_s$ and $r_s=0$ correspond to degenerate settings, where $W^{(k)}$ and $V$ are set to zero matrices, respectively.
Although it is  theoretically feasible to decouple 
$\span(V)$ and $\span(W^{(k)})$ based on \eqref{eq:spm_vec}, vectorizing $\cX_i^{(k)}$ inevitably results in computational inefficiency. For instance, an order-$3$ tensor with $p_j=100$ for each $j\in[3]$ becomes a vector of dimension $1,000,000$, imposing a heavy burden on numerical computation.

Fortunately, $\cX_i^{(k)}$ typically possesses latent low-rank structures in practice \citep{kolda2009tensor}, where the latent shared and private basis tensors, $\cS_{m_1}^{(k)}$ and $\cP_{m_2}^{(k)}$, exhibit Tucker-form \citep{malik2018low,sun2020lowrank} or CP-form decomposition structures \citep{han2024cp}. Such latent structural information can be exploited to improve statistical performance, whereas direct vectorization, such as \eqref{eq:spm_vec}, breaks these intrinsic structures, causing a loss of statistical efficiency \citep{he2023one}. In these low-rank cases, we can, in general, stack $\{\cS_{m_1}^{(k)}\}_{m_1}$ and $\{\cP_{m_2}^{(k)}\}_{m_2}$ along the $(L_k+1)$-th mode to obtain
\begin{equation}
\label{eq:spm_conca}
\cX^{(k)} 
= \cS^{(k)}\times_{L_k+1} V + \cP^{(k)}\times_{L_k+1} W^{(k)} + \cE^{(k)}
=
[\cS^{(k)},\cP^{(k)}]_{L_k+1}\times_{L_k+1}[V,W^{(k)}] + \cE^{(k)},
\end{equation}
where
$\cS^{(k)}
=[\cS_1^{(k)},\cdots,\cS_{r_s}^{(k)}]_{L_k+1}$ and  $\cP^{(k)}=[\cP_1^{(k)},\cdots,\cP_{r_k-r_s}^{(k)}]_{L_k+1}$ are the concatenated shared and private tensors for view $k$, respectively. 
Owing to the modification from tensor vectorization in \eqref{eq:spm_vec} to tensor concatenation in \eqref{eq:spm_conca}, the low-rank structures of
$\cS_{m_1}^{(k)}$ and $\cP_{m_2}^{(k)}$, regardless of  Tucker-form or CP-form, can be preserved in $[\cS^{(k)},\cP^{(k)}]_{L_k+1}$. As such, we can directly
analyze $[\cS^{(k)},\cP^{(k)}]_{L_k+1}$ without losing latent structural information. Due to the fact that CP decomposition is a special case of the Tucker decomposition \citep{kolda2009tensor}, in this work, we impose a Tucker low-rank structure on $[\cS^{(k)},\cP^{(k)}]_{L_k+1}$ such that
$\left[\cS^{(k)},\cP^{(k)}\right]_{L_k+1}= \cC^{(k)} \times_1U^{(k)}_{1}\cdots \times_{L_k} U^{(k)}_{L_k}$
where $\cC^{(k)}\in \reals^{r^{(k)}_1\times\cdots\times r^{(k)}_{L_k}\times r_k}$ and $U^{(k)}_{j} \in \reals^{p^{(k)}_j\times r^{(k)}_j}$ are the corresponding core tensor and loading matrices. Then, we can reformulate \eqref{eq:spm} as the following equivalent form
\begin{equation}\label{TJIVE model}
\begin{aligned}
\cX^{(k)} 
&=
[\cS^{(k)},\cP^{(k)}]_{L_k+1}\times_{L_k+1}[V,W^{(k)}] + \cE^{(k)}\\
&=\underbrace{\cC^{(k)} \times_1U^{(k)}_{1}\cdots \times_{L_k} U^{(k)}_{L_k}\times_{L_k+1}[V,W^{(k)}]}_{\cT^{(k)}}+\cE^{(k)},
\end{aligned}
\end{equation}
where $\cX^{(k)}=[\cX_1^{(k)},\cdots,\cX_n^{(k)}]_{L_k+1}$, and $\cE^{(k)}$ is the error tensor. 

In essence, \eqref{TJIVE model} is closely related to a broad class of existing methods and models. For instance, the two-view ($K=2$) tensor canonical correlation analysis 
(tensor CCA, see also \cite{chen2021tensor,min2019tensor})
aims to identify canonical tensors $\cS^{(1)}$ and $\cS^{(2)}$ given two tensors $\cX^{(1)}$ and $\cX^{(2)}$  such that
$$
\left\{\hat{\cS}^{(1)}, \hat{\cS}^{(2)}\right\}
=
\argmax_{\cS^{(1)}, \cS^{(2)}}
\text{Corr}\left(\langle \cS^{(1)},\cX^{(1)}\rangle,\langle \cS^{(2)},\cX^{(2)}\rangle\right),
$$
where $\text{Corr}(\cdot,\cdot)$ denotes the correlation between two random variables, and $
\langle \cU,\cX\rangle$ represents the tensor inner product. In the same spirit, we aim to retrieve the shared components from multiple ($K\geq 2$) tensor views, while \eqref{TJIVE model} explicitly models the private components as well. In parallel, the classical joint and individual variation explained 
(JIVE) model \citep{lock2013joint,yang2025estimating} decouples multiple data views into a joint component $(J^{(k)})$ shared across views and 
individual components $(A^{(k)})$ specific to each view:  
$$
X^{(k)}
=
J^{(k)}
+
A^{(k)}
+
E^{(k)}
=
S^{(k)}V^\top
+
P^{(k)}(W^{(k)})^\top
+
E^{(k)},
\  k\in[K].
$$
When $\cX_i^{(k)}$ is in vector form $(L_k=1\ \text{for each}\ k\in[K])$, $\cX^{(k)}$ reduces to a data matrix and \eqref{TJIVE model} coincides with the classical JIVE model. As such, \eqref{TJIVE model} can also be viewed as a tensor-version JIVE (TJIVE) model.

The above discussion highlights the significance of investigating \eqref{TJIVE model} by identifying the 
shared and view-specific subspaces, i.e., $\span(V)$ and $\span(W^{(k)})$. Motivated by these connections, our main statistical task narrows down to estimating $\span(V)$ and $\span(W^{(k)})$ from multi-view tensors $\{\cX^{(k)}\}_k$.

\section{Methodology}
\label{sec:2}
In this section, we present the decomposition method for multi-view tensor observations. The first step is to obtain sufficiently accurate mode-wise subspace estimators $\{\hat{U}^{(k)}_j\}^{k\in[K]}_{j\in[L_k]}$. Given a specific $k\in[K]$, one may consider borrowing subspace information from  other views to estimate $\{U_j^{(k)}\}_{j\in[L_k]}$. However, the tensor mode dimensions from different views are typically different in many applications, except for the mode corresponding to $n$ subjects (see the right panel of Figure \ref{fig:intro_examples_both} for example). Hence, $\{U^{(k)}_j\}^{k\in[K]}_{j\in[L_k]}$ cannot be estimated in a cross-view manner in a general way. Nevertheless, numerous tools are available for obtaining well-estimated $\{\hat{U}^{(k)}_j\}_{j\in[L_k]}$ under mild conditions on $\cX^{(k)}$. For instance, they can be estimated either by directly applying HOOI \citep{de2000best} to each $\cX^{(k)}$, or by leveraging auxiliary knowledge from previous statistical tasks (e.g., via transfer learning) \citep{he2024representational}.  As such, following \cite{zhang2018tensor,han2022optimal,chen2025distributed}, $\{\hat{U}^{(k)}_j\}^{k\in[K]}_{j\in[L_k]}$ can be regarded as given subspace estimators satisfying certain conditions (see Assumption \ref{assum:2} in Section \ref{sec:3}). Then, we can construct
\begin{equation}\label{eq:Mj}
    \tilde{M}^{(k)}:= \cM_{L_k+1}\left(\cX^{(k)}\times_1(\hat{U}^{(k)}_{1})^{\top}\cdots \times_{L_k}(\hat{U}^{(k)}_{L_k})^{\top} \right).
\end{equation}
The projection mindset in \eqref{eq:Mj} is a standard technique in tensor decomposition tasks; see 
\citep{yu2022projected,he2024matrix,barigozzi2026statistical} for further discussion.

If the goal is only to estimate the loading subspace of the subject-mode, one can directly take the leading $r_k$ left singular vectors of $\tilde{M}^{(k)}$, denoted by $\tilde{V}^{(k)}$, and further achieve dimension reduction for $\cX^{(k)}$ along the $(L_k+1)$-th mode. As our main statistical goal is to identify the shared and view-specific components, we need to further retrieve $V$ and $W^{(k)}$ from these multi-view $\tilde{M}^{(k)}$'s. To acquire a reliable baseline estimator, we start by proposing the tensor-version personalized PCA (TP-PCA) problem, adapted from the seminal personalized PCA problem from \cite{shi2024personalized}:
\begin{equation}\label{eq:optimization}
\begin{aligned}    
\left\{\tilde{A},\left\{\tilde{B}^{(k)}\right\}_k\right\}=    
&\argmax_{A,\{B^{(k)}\}_k}\sum_{k=1}^{K} \tr\left[ A^{\top}\tilde{M}^{(k)}(\tilde{M}^{(k)})^{\top} A\right]+ \sum_{k=1}^{K} \tr\left[(B^{(k)})^{\top}\tilde{M}^{(k)}(\tilde{M}^{(k)})^{\top} B^{(k)}\right],\\
    &\text{subject to }\; A^{\top}A= I_{r_s},\; (B^{(k)})^{\top}B^{(k)}=I_{r_k-r_s},\; A^{\top}B^{(k)}=0,\ k\in[K].
    \end{aligned}
\end{equation}

Following the arguments in \cite{shi2024personalized}, we can show that the global optimum of \eqref{eq:optimization} can consistently decouple the shared components from the private ones (see Proposition \ref{prop:1} for details). However, \eqref{eq:optimization} inherits the same computational limitations as vector personalized PCA, where one needs to apply iterative Stiefel gradient descent algorithms for a locally optimal solution. Moreover, as we shall see in Section \ref{sec:3}, due to the joint optimization nature of personalized PCA, the statistical estimation error bounds for $\{\tilde{A},\{\tilde{B}^{(k)}\}_k\}$ are presented in a view-averaged form, thereby lacking view-wise statistical guaranties for $\tilde{A}$ and each $\tilde{B}^{(k)}$ separately. This limitation motivates us to take TP-PCA as a starting point and develop an alternative optimization problem. Specifically, we consider the following \emph{Tucker personalized subspace PCA (TPS-PCA)} problem as an ideal surrogate for \eqref{eq:optimization}. We can solve
\begin{equation}\label{eq:personalized_subspace}
\begin{aligned} 
    \left\{\hat{V},\left\{\hat{W}^{(k)}\right\}_k\right\}=&\argmin_{A,\{B^{(k)}\}_k}\sum_{k=1}^{K}  \left\|AA^\top+B^{(k)}(B^{(k)})^{\top}-\tilde{V}^{(k)}(\tilde{V}^{(k)})^{\top}\right\|_F^2,\\
    &\text{subject to }\; A^{\top}A= I_{r_s},\; (B^{(k)})^{\top}B^{(k)}=I_{r_k-r_s},\; A^{\top}B^{(k)}=0,\ k\in[K].
\end{aligned}
\end{equation}
After some simple algebraic operations, \eqref{eq:personalized_subspace} can be transformed into the following equivalent form:
\begin{equation}\label{eq:personalized_subspace2}
\begin{aligned} \left\{\hat{V},\left\{\hat{W}^{(k)}\right\}_k\right\}=&\argmax_{A,\{B^{(k)}\}_k}\sum_{k=1}^{K}\tr\left[ A^{\top}\tilde{V}^{(k)}(\tilde{V}^{(k)})^{\top} A\right]+ \sum_{k=1}^{K} \tr\left[ (B^{(k)})^{\top}\tilde{V}^{(k)}(\tilde{V}^{(k)})^{\top} B^{(k)}\right],\\
    &\text{subject to }\; A^{\top}A= I_{r_s},\; (B^{(k)})^{\top}B^{(k)}=I_{r_k-r_s},\; A^{\top}B^{(k)}=0,\ k\in[K].
    \end{aligned}
\end{equation}
Essentially, \eqref{eq:personalized_subspace2} replaces $\tilde{M}^{(k)}(\tilde{M}^{(k)})^{\top}$ in \eqref{eq:optimization} with local projection matrices $\tilde{V}^{(k)}(\tilde{V}^{(k)})^{\top}$, where $\tilde{V}^{(k)}$ consists of the leading $r_k$ left singular vectors of $\tilde{M}^{(k)}$. This modification leads to a different procedure for solving \eqref{eq:personalized_subspace2}. Indeed, unlike \eqref{eq:optimization} where only a local optimum can be theoretically found after multiple iterations, the global optimum of \eqref{eq:personalized_subspace} can be achieved in one-step. We formalize this result in the following theorem.
\begin{theorem}[One-step global optimum of \eqref{eq:personalized_subspace}]\label{theorem:go}
    The one-step estimators $\{\hat{V},\{\hat{W}^{(k)}\}_k\}$ formed by Algorithm \ref{alg:decouple} below achieve the global optimum of \eqref{eq:personalized_subspace}.
\end{theorem}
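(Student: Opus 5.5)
We have not seen Algorithm \ref{alg:decouple}. The natural guess is that it works in two steps. First, take $\hat V$ as the leading $r_s$ eigenvectors of $\sum_k \tilde V^{(k)}(\tilde V^{(k)})^\top$. Then, for each $k$, take $\hat W^{(k)}$ as the leading $r_k-r_s$ eigenvectors of $(I-\hat V\hat V^\top)\tilde V^{(k)}(\tilde V^{(k)})^\top(I-\hat V\hat V^\top)$. The plan is to prove optimality of this pair for \eqref{eq:personalized_subspace2}; by the stated equivalence, this is the same as optimality for \eqref{eq:personalized_subspace}. The equivalence itself follows by expanding the Frobenius norm. Under the constraints, $AA^\top+B^{(k)}(B^{(k)})^\top$ is an orthogonal projection of rank $r_k$, so its squared norm is the constant $r_k$. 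The term $\|\tilde V^{(k)}(\tilde V^{(k)})^\top\|_F^2=r_k$ is also constant. Only the cross term $-2\tr[(AA^\top+B^{(k)}B^{(k)\top})P_k]$ remains, where $P_k:=\tilde V^{(k)}(\tilde V^{(k)})^\top$.

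\textbf{Step 1: profile out the $B^{(k)}$.} Fix $A$. The constraints on $B^{(k)}$ decouple across $k$, and each says that $B^{(k)}$ is an orthonormal frame in the complement of $\span(A)$. Write $Q_A:=I-AA^\top$. By Ky Fan, the best value of $\tr[B^\top P_kB]$ is the sum of the top $r_k-r_s$ eigenvalues of $Q_AP_kQ_A$. So the problem reduces to maximizing, over $A\in\cO(n,r_s)$,
\[
F(A)=\sum_{k=1}^K\Big\{\tr(A^\top P_kA)+\sum_{j\le r_k-r_s}\lambda_j(Q_AP_kQ_A)\Big\}.
\]

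\textbf{Step 2: bound each summand.} For each $k$, the combined projection $\Pi=AA^\top+BB^\top$ has rank $r_k$ and must contain $\span(A)$. The key claim is $\tr(\Pi P_k)\le r_s+\sum_{j\le r_k-r_s}\lambda_j(Q_AP_kQ_A)$ for every $A$. The summand in $F(A)$ equals $\tr(A^\top P_kA)+\sum_{j\le r_k-r_s}\lambda_j(Q_AP_kQ_A)$, which is at most $r_k$. It equals $r_k$ exactly when $\span(A)\subset\span(\tilde V^{(k)})$. That case is impossible in general, so the bound is not attained per summand, and Step 3 has to handle this.

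\textbf{Step 3: identify the maximizer over $A$ (main obstacle).} The difficulty is that the eigenvalue part depends on $A$ nonlinearly, so we cannot just quote Ky Fan for $\sum_kP_k$. The approach is to use the $CS$-decomposition between $\span(A)$ and $\span(\tilde V^{(k)})$. Let $\cos\theta_1,\dots$ be the principal angles between the two subspaces. We have $\tr(A^\top P_kA)=\sum_i\cos^2\theta_i$. The operator $Q_AP_kQ_A$ has eigenvalues $1$ with multiplicity $r_k-r_s$, provided the dimension count allows this. Its remaining eigenvalues are $\sin^2\theta_i$, coming from the component of $\tilde V^{(k)}$ rotated out of $\span(A)$. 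Counting carefully should show that the top $r_k-r_s$ eigenvalues of $Q_AP_kQ_A$ are all equal to $1$ whenever $n\ge r_k+r_s$, which holds since $r_k\ll n$. In that case $F(A)=\sum_k\tr(A^\top P_kA)+\sum_k(r_k-r_s)$. Ky Fan applied to $\sum_kP_k$ then shows that the maximizing $A$ is the top-$r_s$ eigenspace, and the $\hat W^{(k)}$ are any orthonormal basis of the unit-eigenvalue part of $Q_{\hat V}P_kQ_{\hat V}$. That eigenspace is exactly what the algorithm's eigen-step for $\hat W^{(k)}$ returns.

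The step we must verify most carefully is the eigenvalue-$1$ multiplicity claim. We expect to verify it by the dimension bound $\dim(\span(\tilde V^{(k)})\cap\span(A)^\perp)\ge r_k-r_s$, which follows from $\dim\span(\tilde V^{(k)})=r_k$ and $\dim\span(A)=r_s$. Any unit vector in this intersection is fixed by $Q_AP_kQ_A$, which gives the required multiplicity. Since all eigenvalues are at most $1$, this pins down the top $r_k-r_s$ eigenvalues exactly.
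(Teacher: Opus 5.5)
Your proposal is correct and follows the paper's proof. You profile out each $B^{(k)}$ via Ky Fan, use $\dim(\operatorname{span}(A)^\perp\cap\operatorname{span}(\tilde V^{(k)}))\ge r_k-r_s$ to show that the private maximum equals $r_k-r_s$ for every feasible $A$, and then solve the remaining shared problem with the top-$r_s$ eigenspace of $\tilde\Sigma_0$. Your Step 2 detour and the hypothesis $n\ge r_k+r_s$ are unnecessary, since the dimension count gives the needed multiplicity for any $n$.
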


We now present the one-step decomposition procedure. First, we aggregate $\{\tilde{V}^{(k)}(\tilde{V}^{(k)})^\top\}$ and take the leading $r_s$ eigenvectors of the average projection matrix
\begin{equation}\label{eq:apm}
    \tilde{\Sigma}_{0}:=\frac{1}{K}\sum_{k=1}^{K}\tilde{V}^{(k)}(\tilde{V}^{(k)})^{\top},
\end{equation}
to acquire $\hat{V}$. Then, we can obtain the view-specific estimator $\hat{W}^{(k)}$ by taking the leading $r_k-r_s$ left singular subspace of $(I_n-\hat{V}\hat{V}^{\top})\tilde{V}^{(k)}$.
In fact, $\hat{V}\hat{V}^{\top}$ minimizes the mean squared projection metric to all $\tilde{V}^{(k)}(\tilde{V}^{(k)})^{\top}$ and is often referred to as the physical barycenter of $\span(\tilde{V}^{(k)})$. 
Indeed, the average projection matrix has long been used in multiple fields that require the aggregation of subspaces, such as satellite meteorology \citep{crone1995statistical}, transfer learning \citep{li2024knowledge,he2024representational}, multi-source decomposition \citep{feng2018angle,hu2025aggregated,yang2025estimating}, and robust model averaging \citep{liski2016combining,fan2019distributed,he2025distributed,li2025robust}. For clarity, we summarize the two procedures above in Algorithm \ref{alg:decouple}.
\renewcommand{\algorithmicrequire}{\textbf{Require}}
\renewcommand{\algorithmicensure}{\textbf{Ensure}}
\renewcommand{\algorithmicreturn}{\textbf{Return}}
\newcommand{\RETURNVALUE}{\item[\algorithmicreturn]}
\begin{algorithm}
\caption{Decoupling shared and private features across multi-view tensors.}\label{alg:decouple}
\begin{algorithmic}[1]
\REQUIRE ~~\\
    $\{\cX^{(k)}\}_{k=1}^{K}$; data mode estimators $\{\hat{U}^{(k)}_{1},\cdots, \hat{U}^{(k)}_{L_k}\}_{k=1}^{K}$;\\
\ENSURE ~~\\
    \STATE Initialization: acquire $\tilde{M}^{(k)}$ via \eqref{eq:Mj}, take the leading $r_k$ left singular vectors of $\tilde{M}^{(k)}$, denoted as $\tilde{V}^{(k)}$;
    \STATE Shared subspace: acquire the shared subspace estimator $\hat{V}$ by taking the leading $r_s$ eigenvectors of $\tilde{\Sigma}_{0}:=\sum_{k=1}^{K}\tilde{V}^{(k)}(\tilde{V}^{(k)})^{\top}/K$;
    \STATE Private subspaces: after obtaining $\hat{V}$, for $k\in[K]$, acquire $\hat{W}^{(k)}$ by taking the leading $r_k-r_s$ left singular vectors of $(I_n-\hat{V}\hat{V}^{\top})\tilde{V}^{(k)}$;
\RETURNVALUE $\hat{V}, \hat{W}^{(k)}$ for $k\in [K]$.
\end{algorithmic}
\end{algorithm}

To understand Theorem \ref{theorem:go}, note that the maximum of $\tr[(B^{(k)})^{\top}\tilde{V}^{(k)}(\tilde{V}^{(k)})^{\top} B^{(k)}]$ equals $r_k-r_s$ regardless of the choice of $A$. Therefore, the private part of \eqref{eq:personalized_subspace} can be discarded from the joint optimization, which is then reduced to a shared sub-problem, i.e., solving an eigenvalue decomposition of $\tilde\Sigma_0$ in \eqref{eq:apm}. On the other hand, the maximum of $\tr[(B^{(k)})^{\top}\tilde{M}^{(k)}(\tilde{M}^{(k)})^{\top} B^{(k)}]$ is a function of $A$. As such, the private part of \eqref{eq:optimization} is coupled with the shared part, and an iterative algorithm that alternately updates $A$ and $B^{(k)}$ is required. Hereafter, we shall refer to $\{\hat{V},\{\hat{W}^{(k)}\}_k\}$ collectively as the TPS-PCA estimators.

As a by-product of Algorithm \ref{alg:decouple}, $\tilde{\Sigma}_0$ in \eqref{eq:apm} also offers a natural selection criterion for the shared subspace dimension $r_s$, whereas \cite{shi2024personalized} did not provide a direct procedure for shared dimension selection. The statistical insight is that the first $r_s$ eigenvalues of $\tilde{\Sigma}_0$ approximate $1$ under mild model and signal-to-noise assumptions, while the others are strictly smaller than $1$. As such, the spectral signal in $\tilde{\Sigma}_0$ carries direct information about $r_s$. Similar dimension selection ideas
based on aggregated projection matrices have been recently adopted in studies including \cite{li2024knowledge,hu2025aggregated,chen2025distributed}. Likewise, we apply the following truncated eigenvalue criterion to $\tilde{\Sigma}_0$ to select $r_s$.

Specifically, let
$\hat\lambda_1\geq\cdots\geq\hat\lambda_n$
denote the eigenvalues of $\tilde{\Sigma}_0$. We estimate $r_s$ by
retaining the leading eigenvalues whose deviations from $1$ remain below
a prespecified tolerance $\zeta$. Equivalently, we adopt the following
truncation criterion:
\begin{equation}
\label{eq:shared-rank-selection}
\begin{aligned}
\hat r_s
&:=
\max\left\{0,\sum_{r=1}^{R_s}
\mathbb{I}\left\{1-\hat\lambda_r\leq\zeta\right\}\right\}.
\end{aligned}
\end{equation}
Here, $R_s\in(0,n]$ denotes the maximum range for searching for a
suitable estimator of $r_s$, while $\zeta$ serves as a tolerance
threshold for the individual deviations of the leading eigenvalues
$\{\hat\lambda_\ell\}_{l=1}^{R_s}$ from $1$. 
A larger value of $\zeta$ corresponds to a lower truncation threshold
$1-\zeta$ and therefore generally permits a larger estimated shared
dimension, while a smaller value of $\zeta$ leads to a more conservative
selection. An appropriately chosen $\zeta$ ensures that $\hat r_s$ is a
consistent estimator of $r_s$ (see Corollary 
\ref{cor:rank consistency} for further details). Note that 
\eqref{eq:shared-rank-selection} assumes
$\{\{r_j^{(k)}\}_{j\in[L_k]},r_k\}_{k\in[K]}$
to be known in advance. However, these quantities are typically unavailable
and need to be estimated beforehand. Following \cite{chen2025distributed}, 
we can apply the eigenvalue-ratio method \citep{han2022rank} to the left singular
space of $\cM_j(\cX^{(k)})$ for consistently selecting
$r_j^{(k)}$ and $r_k$. One can also consider the
existing rank determination methods, such as the bootstrap-based procedure \citep{yu2025testing}, eigenvalue-ratio-type estimators \citep{xia2015consistently,yu2019robust,wang2022high}, and information criteria \citep{bai2002determining,han2022rank}.

\section{Theory}\label{sec:3}
In this section, we conduct a thorough theoretical analysis of the estimators proposed in Section \ref{sec:2}. We start by imposing some technical assumptions.

\begin{assumption}[Sub-Gaussian noise]\label{assum:1}
    For each $k\in [K]$, we assume that $\cE^{(k)}$ has i.i.d. sub-Gaussian entries with mean zero and variance proxy $e_k^2$ with $e:=\max_{k\in[K]}e_k<+\infty$. Moreover, we assume that $\{\mathcal E^{(k)}\}_{k\in[K]}$ are mutually independent across views.
\end{assumption}

\begin{assumption}[Data mode estimators]\label{assum:2}
For any $U,V\in\mathbb R^{p\times r}$ with orthonormal columns, we define $
\sin\Theta(U,V):=\text{diag}(\sin\theta_1,\cdots,\sin\theta_r)
$ where $\theta_i=\arccos(\sigma_i(U^\top V))$. Assume that we have acquired sufficiently good data mode subspace estimators $\{\hat{U}^{(k)}_j\}^{k\in[K]}_{j\in[L_k]}$, such that
\begin{equation}\label{eq:init_data}
\max_{k\in[K]}\max_{j\in[L_k]}\left[\left\|\sin\Theta\left(\hat{U}^{(k)}_j,U^{(k)}_j\right)\right\|_{\op}\right]\leq \frac{1}{2}.
\end{equation}
\end{assumption}

\begin{assumption}[Misalignment of the private subspaces]\label{assum:3}
    Assume that there exists a positive constant $\theta\in (0,1)$, such that
    $$\left\|\frac{1}{K}\sum_{k=1}^{K} W^{(k)}(W^{(k)})^{\top}\right\|_{\op}\leq 1-\theta.$$
\end{assumption}

\begin{assumption}
[Signal-to-noise regime]
\label{assum:4}
Suppose that $\max_{k\in[K]}\{L_k, r^{(k)}_1,\cdots, r^{(k)}_{L_k}, r_k\}$ remain bounded while $\max_{k\in[K]}\max_{j\in[L_k]}\{p_j^{(k)}\}\lesssim n$. Let $\sigma_{\max}$ and $\sigma_{\min}$ denote the maximal and minimal singular values of $\cM_{L_k+1}(\cC^{(k)})/\sqrt{n}\in \reals^{r_k \times r^{(k)}_{\Pi}}$ across all $k\in[K]$ where $\cC^{(k)}$ is defined in \eqref{TJIVE model} and $r^{(k)}_{\Pi}:=\prod_{j=1}^{L_k}r_j^{(k)}$. Then, we further assume that the signal strength for each tensor satisfies  $\sigma_{\min}>C$ for a positive constant $C$.
\end{assumption}

Assumption \ref{assum:1} is standard in the tensor decomposition literature; see
\cite{xia2019sup,zhou2022optimal} for similar assumptions.
Assumption \ref{assum:2} requires $\span(\hat{U}^{(k)}_j)$ to lie within a fixed neighborhood of its population counterpart, thereby providing valid initializations for estimating the shared and private subspaces. Similar conditions are also imposed in related works such as \cite{zhang2018tensor,han2022optimal} and \cite{chen2025distributed}.
Assumption \ref{assum:3} merely excludes the degenerate case where the private
subspaces contain an additional direction that is common to all views, which is
a standard identifiability requirement in shared-private decoupling models \citep{shi2024personalized,li2024knowledge,hu2025aggregated,yang2025estimating}.
Assumption \ref{assum:4}, adapted from assumptions imposed in Theorem $1$ of \cite{zhang2018tensor}, is the signal-to-noise regime of our theoretical analysis. In this regime, the local subspace estimator $\tilde{V}^{(k)}$ can be accurately estimated (see Lemmas \ref{lem:signal} and \ref{lem:noise} of the supplementary file) to guaranty the theoretical advantage of TPS-PCA over TP-PCA (see Remark \ref{rem:TPvsTPS}). To illustrate the mildness of the signal strength condition, consider the classical pervasiveness condition in the tensor factor
model \citep{chen2024rank} where the signal strength for $\cX^{(k)}=[\cX_1^{(k)},\cdots,\cX_{n}^{(k)}]_{L_k+1}$
satisfies
$
\sigma_{\min}^2
\asymp
\prod_{j=1}^{L_k} p_j^{(k)}.
$
Consequently, the signal requirement
$\sigma_{\min}>C$ naturally follows.

\subsection{View-wise Decoupling Consistency}
Based on the above assumptions, we first provide the upper bounds for the TPS-PCA estimators. In particular, the outputs of TPS-PCA achieve view-wise decoupling consistency and provide sharper rates of convergence compared to those of TP-PCA (tensor-version personalized PCA). As a theoretical baseline, we first present the consistency of the TP-PCA estimators under the assumptions given in the last section. 

\begin{proposition}[Statistical upper bound for TP-PCA]\label{prop:1}
    Let $\{\tilde{A},\{\tilde{B}^{(k)}\}_k\}$ be the solution of the optimization problem \eqref{eq:optimization}. Under Assumptions \ref{assum:1} to \ref{assum:4} and the conditions in Theorem $1$ of \cite{shi2024personalized}, we have
\begin{equation}\label{eq:prop1}
\left\|\tilde{A}\tilde{A}^{\top}-VV^{\top}\right\|^2_F+\frac{1}{K}\sum_{k=1}^{K}\left\|\tilde{B}^{(k)}(\tilde{B}^{(k)})^{\top}-W^{(k)}(W^{(k)})^{\top}\right\|^2_F=O_p\left(\frac{\sigma_{\max}^2}{\theta\sigma_{\min}^4}\right).
\end{equation}
\end{proposition}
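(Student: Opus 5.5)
The plan is to reduce Proposition \ref{prop:1} to Theorem $1$ of \cite{shi2024personalized}. That theorem is a deterministic perturbation bound for personalized PCA: if the local Gram matrices $\tilde{M}^{(k)}(\tilde{M}^{(k)})^\top$ are written as a population part with shared/private column spaces $[V,W^{(k)}]$ plus a perturbation, then the global optimum of \eqref{eq:optimization} satisfies the averaged bound in \eqref{eq:prop1}. The bound is governed by the ratio of the perturbation size to the squared eigengap, divided by the misalignment constant $\theta$ of Assumption \ref{assum:3}.

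The first step is to decompose $\tilde{M}^{(k)}$. By \eqref{TJIVE model} and \eqref{eq:Mj},
\[
\tilde{M}^{(k)} = [V,W^{(k)}]\,\cM_{L_k+1}(\cC^{(k)})\,\Big(\textstyle\bigotimes_j (\hat{U}^{(k)}_j)^\top U^{(k)}_j\Big)^\top + \cM_{L_k+1}\big(\cE^{(k)}\times_1(\hat{U}^{(k)}_1)^\top\cdots\times_{L_k}(\hat{U}^{(k)}_{L_k})^\top\big),
\]
where the Kronecker order follows the cyclic matricization convention. Write this as $G^{(k)} + Z^{(k)}$.

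The second step controls the signal. By Assumption \ref{assum:2}, each factor $(\hat U_j^{(k)})^\top U_j^{(k)}$ has singular values in $[\sqrt{3}/2,1]$. Since $L_k$ is bounded, the Kronecker product has singular values bounded above and below by constants. Hence $G^{(k)}(G^{(k)})^\top$ has column space exactly $\span([V,W^{(k)}])$, and its nonzero eigenvalues lie between $c\,n\sigma_{\min}^2$ and $n\sigma_{\max}^2$ by Assumption \ref{assum:4}.

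The third step controls the noise. Conditionally on the $\hat U$'s (or by a net argument if they depend on the data), $Z^{(k)}$ is an $n\times r^{(k)}_\Pi$ matrix with sub-Gaussian entries and bounded column count, so $\|Z^{(k)}\|_{\op}=O_p(e\sqrt n)$. Because $r_k$ is bounded, its projection onto $\span([V,W^{(k)}])$ satisfies $\|[V,W^{(k)}]^\top Z^{(k)}\|=O_p(e)$. The perturbation $\Delta^{(k)}=\tilde M^{(k)}(\tilde M^{(k)})^\top - G^{(k)}(G^{(k)})^\top$ then contains cross terms of order $\sqrt n\,\sigma_{\max}\cdot e\sqrt n$ and a quadratic term $Z^{(k)}(Z^{(k)})^\top$. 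The isotropic part $\E Z^{(k)}(Z^{(k)})^\top\approx r_\Pi^{(k)}e_k^2 I_n$ shifts every eigenvalue equally and does not change the optimizer of \eqref{eq:optimization}, since the objective changes by a constant under the orthonormality constraints. The component of the cross terms that matters for the subspaces is the one mixing $\span([V,W^{(k)}])$ with its complement, which has size $O_p(n\sigma_{\max}e)$ relative to the full $\sqrt{n}$ scaling.

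The final step plugs these into Theorem $1$ of \cite{shi2024personalized}. The squared Frobenius subspace error is of order $\|\Delta\|^2/(\text{gap}^2\,\theta)\asymp n^2\sigma_{\max}^2/(n^2\sigma_{\min}^4\theta)$, after normalizing by $n$, which yields $O_p(\sigma_{\max}^2/(\theta\sigma_{\min}^4))$. Bounded $K$ is not needed, because the averaged form already appears in their theorem.

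The main obstacle is matching the normalizations and the exact form of the perturbation quantity required in \cite{shi2024personalized}. I will try first to verify that only the off-diagonal block $P^\perp_{k}\Delta^{(k)}P_k$ (with $P_k$ the projection onto $\span([V,W^{(k)}])$) enters their bound. The diagonal noise block $P_k^\perp Z^{(k)}(Z^{(k)})^\top P_k^\perp$ must be handled carefully, because its operator norm is of order $n e^2$ rather than $n\sigma_{\min}^2$; it is therefore not negligible unless the isotropic-shift argument is made rigorous. I would handle it by centering with $r_\Pi^{(k)}e_k^2 I_n$ and bounding the fluctuation by $O_p(e^2\sqrt n)$.
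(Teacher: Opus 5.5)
Your route matches the paper's. You split $\tilde M^{(k)}=G^{(k)}+Z^{(k)}$ (the paper's $M^{(k)}+E^{(k)}$), get $\sigma_{r_k}(G^{(k)})\gtrsim\sqrt n\,\sigma_{\min}$ from Assumption~\ref{assum:2} and $\|Z^{(k)}\|_{\mathrm{op}}=O_p(\sqrt n)$, and feed these into the personalized-PCA argument. The paper does not black-box Theorem~1 of \cite{shi2024personalized}; it re-runs the argument, which also removes your worry about matching normalizations. Set $Q^{(k)}:=\tilde A\tilde A^\top+\tilde B^{(k)}(\tilde B^{(k)})^\top$ (a rank-$r_k$ projector) and $P^{(k)}:=VV^\top+W^{(k)}(W^{(k)})^\top$. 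The curvature lemma of \cite{vu2013fantope} applied to $G^{(k)}(G^{(k)})^\top$, optimality of $(\tilde A,\{\tilde B^{(k)}\})$, and Cauchy--Schwarz over $k$ give $\sum_k\|Q^{(k)}-P^{(k)}\|_F^2\lesssim (n\sigma_{\min}^2)^{-2}\sum_k\|\tilde M^{(k)}(\tilde M^{(k)})^\top-G^{(k)}(G^{(k)})^\top\|_F^2$. Lemma~13 of \cite{shi2024personalized} then converts the left side into the shared and private errors at the cost of a factor $1/\theta$. In this argument the \emph{full} Frobenius norm of the perturbation enters, diagonal noise block included.

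The step you flag as the main obstacle is unnecessary, and your proposed fix for it is false. $Z^{(k)}$ is $n\times r_\Pi^{(k)}$ with $r_\Pi^{(k)}$ bounded, so $Z^{(k)}(Z^{(k)})^\top$ has rank at most $r_\Pi^{(k)}$ and nonzero eigenvalues of order $ne_k^2$. Hence $\|Z^{(k)}(Z^{(k)})^\top-r_\Pi^{(k)}e_k^2I_n\|_{\mathrm{op}}\asymp ne_k^2$, not $O_p(e^2\sqrt n)$. The $\sqrt n$ concentration you have in mind holds for the small Gram matrix $(Z^{(k)})^\top Z^{(k)}$ around $ne_k^2I$, not for the $n\times n$ matrix in the objective. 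Shift invariance of the optimizer is correct; the problem is that $Z^{(k)}(Z^{(k)})^\top$ is nowhere near a multiple of $I_n$.

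You do not need any centering. Two crude bounds suffice:
\begin{itemize}
\item $\|Z^{(k)}(Z^{(k)})^\top\|_F\le\sqrt{r_\Pi^{(k)}}\|Z^{(k)}\|_{\mathrm{op}}^2=O_p(n)$;
\item the cross terms have rank at most $r_k$, so they are $O_p(n\sigma_{\max})$.
\end{itemize}
The whole perturbation is therefore $O_p(n\sigma_{\max}+n)=O_p(n\sigma_{\max})$, because $\sigma_{\max}\ge\sigma_{\min}>C$ by Assumption~\ref{assum:4} and $e<\infty$. Squaring and dividing by $(n\sigma_{\min}^2)^2$ gives $\sigma_{\max}^2/\sigma_{\min}^4$ directly; the $n^2$ cancels, with no extra normalization by $n$.

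Comparing $ne^2$ with the gap $n\sigma_{\min}^2$ is the wrong benchmark. The stated rate is crude enough to absorb the quadratic term, and isolating the off-diagonal block would matter only for a sharper bound than the one claimed. Your bound $\|[V,W^{(k)}]^\top Z^{(k)}\|=O_p(e)$ is likewise not needed, and it would require the $\hat U^{(k)}_j$ to be independent of the noise.
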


When $\sigma^2_{\max}/\sigma^4_{\min}\rightarrow0$, 
the shared component $\tilde{A}$ is guaranteed to be separated consistently from the private components $\{\tilde{B}^{(k)}\}_k$, as stated by Proposition \ref{prop:1}. However, the errors for TP-PCA estimators
fail to provide the decoupling property from a view-wise perspective. The key technical challenge is 
that the convergence error for each view $k$, i.e., $\|\tilde{A}\tilde{A}^{\top}-VV^{\top}\|^2_F+\|\tilde{B}^{(k)}(\tilde{B}^{(k)})^{\top}-W^{(k)}(W^{(k)})^{\top}\|^2_F$, cannot be well-separated from the left-hand side of \eqref{eq:prop1}. In contrast, for the TPS-PCA estimators, we can readily decouple the shared-private error from the view-wise perspective.

\begin{theorem}[Statistical upper bounds for TPS-PCA]
\label{theorem:upper bound}
Let $[\hat{V},\hat{W}^{(k)}]$ be the TPS-PCA estimators for $\cX^{(k)}$. Based on Assumptions \ref{assum:1} to \ref{assum:4}, we have, for any $k\in[K]$, that
\begin{equation}
\label{main rates for V}
\left\|\hat{V}\hat{V}^{\top}-VV^{\top}\right\|^2_F
=O_p\left(\frac{1}{K\theta^2\sigma_{\min}^2}
+
\frac{1}{\theta^2\sigma_{\min}^4}\right),
\end{equation}
\begin{equation}
\label{main rates for Wk}
\left\|\hat{W}^{(k)}\left(\hat{W}^{(k)}\right)^{\top}-W^{(k)}\left(W^{(k)}\right)^{\top}\right\|^2_F
=O_p\left(\frac{1}{\sigma^2_{\min}}
+
\frac{1}{K\theta^2\sigma_{\min}^2}
+
\frac{1}{\theta^2\sigma_{\min}^4}\right).
\end{equation}
\end{theorem}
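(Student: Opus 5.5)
The plan has three stages: a local bound for each view, an aggregation step for the shared subspace, and a projection step for each private subspace.

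\textbf{Stage 1: local subspace error.} For each view $k$, write $\tilde{M}^{(k)} = M^{(k)} + Z^{(k)}$. Here
\[
M^{(k)}=[V,W^{(k)}]\,\cM_{L_k+1}(\cC^{(k)})\,\bigl(\otimes_j (\hat{U}^{(k)}_j)^\top U^{(k)}_j\bigr)^\top
\]
is the projected signal, and $Z^{(k)}$ is the noise projected onto $\hat U^{(k)}_j$.
\begin{itemize}
\item By Assumption \ref{assum:2}, every singular value of $(\hat U_j^{(k)})^\top U_j^{(k)}$ is at least $\sqrt{3}/2$. Hence $\sigma_{r_k}(M^{(k)})\gtrsim \sqrt n\,\sigma_{\min}$, and the column space of $M^{(k)}$ is exactly $\span([V,W^{(k)}])$.
\item Since $\hat U^{(k)}_j$ may depend on $\cE^{(k)}$, I would bound the noise by a union over $\varepsilon$-nets of the bounded-rank Grassmannians, using Assumption \ref{assum:4} (bounded ranks, $p^{(k)}_j\lesssim n$). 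This gives $\|Z^{(k)}\|_{\op}=O_p(\sqrt n)$ (Lemmas \ref{lem:signal} and \ref{lem:noise} of the supplement).
\end{itemize}
I would then expand the projector rather than apply Wedin directly. Write $\Delta^{(k)}:=\tilde V^{(k)}(\tilde V^{(k)})^\top-P_k$ with $P_k:=[V,W^{(k)}][V,W^{(k)}]^\top$, and use the first-order term
\[
L^{(k)}:=P_k^\perp Z^{(k)}(M^{(k)})^{+}+\text{transpose},
\]
so that:
\begin{itemize}
\item $\|\Delta^{(k)}-L^{(k)}\|_F=O_p(\sigma_{\min}^{-2})$;
\item $\|L^{(k)}\|_F=O_p(\sigma_{\min}^{-1})$;
\item $L^{(k)}$ is mean zero and independent across $k$ by Assumption \ref{assum:1}.
\end{itemize}
This bias–fluctuation split is what produces the two terms $1/(K\sigma_{\min}^2)$ and $1/\sigma_{\min}^4$.

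\textbf{Stage 2: shared subspace.} The population average is
\[
\Sigma_0:=\frac1K\sum_k P_k = VV^\top+\frac1K\sum_k W^{(k)}(W^{(k)})^\top .
\]
Because $V\perp W^{(k)}$, its top $r_s$ eigenvalues equal $1$, and by Assumption \ref{assum:3} the rest are at most $1-\theta$, so the eigengap is $\theta$. Now $\tilde\Sigma_0-\Sigma_0=\frac1K\sum_k\Delta^{(k)}$. The linear part $\frac1K\sum_k L^{(k)}$ has independent mean-zero summands, so its Frobenius norm is $O_p(1/(\sqrt K\sigma_{\min}))$; the remainder is $O_p(\sigma_{\min}^{-2})$. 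Davis–Kahan in Frobenius form, with gap $\theta$, then gives \eqref{main rates for V} after squaring.

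\textbf{Stage 3: private subspaces.} The population matrix $(I-VV^\top)P_k=W^{(k)}(W^{(k)})^\top$ has eigengap $1$. Its perturbation is
\[
(I-\hat V\hat V^\top)\tilde V^{(k)}(\tilde V^{(k)})^\top(I-\hat V\hat V^\top)-W^{(k)}(W^{(k)})^\top .
\]
This is controlled by $\|\Delta^{(k)}\|_F$, which is $O_p(\sigma_{\min}^{-1})$ because no averaging is available within a single view, plus a constant multiple of $\|\hat V\hat V^\top-VV^\top\|_F$. Here I would use that the left singular space of $(I-\hat V\hat V^\top)\tilde V^{(k)}$ is the eigenspace of this projected matrix. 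Davis–Kahan with gap of order $1$ then yields \eqref{main rates for Wk}.

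\textbf{Main obstacle.} The hardest step is making the $1/K$ averaging rigorous in Stage 2. It requires a second-order expansion of $\tilde V^{(k)}(\tilde V^{(k)})^\top$ whose remainder is uniformly $O_p(\sigma_{\min}^{-2})$ in Frobenius norm, with a leading term that is genuinely mean zero even though $\hat U^{(k)}_j$ depends on the data. I would first try a leave-one-out or sample-splitting decoupling of $\hat U^{(k)}_j$ from $\cE^{(k)}$. If that fails, I would absorb the dependence into the net argument, at the cost of constants.
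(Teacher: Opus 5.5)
Your proposal is correct and follows essentially the paper's proof. It uses the same signal and noise lemmas, a Davis--Kahan bound for $\tilde\Sigma_0$ with gap $\theta$ (the paper uses the Fantope curvature lemma), and a Wedin bound on $(I-\hat V\hat V^\top)\tilde V^{(k)}(\tilde V^{(k)})^\top(I-\hat V\hat V^\top)$ for the private part. The only cosmetic difference is that the paper splits $\tilde V^{(k)}(\tilde V^{(k)})^\top-P_k$ into a centered part plus a bias controlled in expectation by $\mathbb{E}(\varepsilon^{(k)})^2$, rather than into your linear term $L^{(k)}$ plus a remainder.

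On your ``main obstacle'', the paper simply holds $\hat U^{(k)}_j$ fixed so that the linear term has mean zero, which your sample-splitting option would justify. Your net-argument fallback, however, would not cost only constants: a union bound over nets for all $K$ views has log-cardinality of order $\sum_{k}\sum_j p^{(k)}_j r^{(k)}_j$, which in general wipes out the $1/K$ gain.
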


As $1/\sigma^{2}_{\min}\rightarrow0$, the estimation errors of both $\hat{V}$ and $\hat{W}^{(k)}$ converge 
to $0$. As such, 
Theorem \ref{theorem:upper bound} indicates a view-wise decoupling consistency for TPS-PCA estimators. It arises from the one-step nature of Algorithm \ref{alg:decouple}, where both $\hat{V}$ and $\hat{W}^{(k)}$ have closed-forms and can be theoretically analyzed. For a better comparison with the average error bound in Proposition \ref{prop:1}, we further combine \eqref{main rates for V} and \eqref{main rates for Wk} to obtain the following joint convergence rate of $[\hat{V}, \hat{W}^{(k)}]$.
\begin{corollary}
For any $k\in[K]$, based on the conditions and notations of
Theorem \ref{theorem:upper bound}, we have
\begin{equation}
\label{main rates for upper bound}
\left\|\hat{V}\hat{V}^{\top}-VV^{\top}\right\|^2_F
+
\left\|\hat{W}^{(k)}(\hat{W}^{(k)})^{\top}-W^{(k)}\left(W^{(k)}\right)^{\top}\right\|^2_F
=O_p\left(\frac{1}{\sigma^2_{\min}}
+
\frac{1}{K\theta^2\sigma_{\min}^2}
+
\frac{1}{\theta^2\sigma_{\min}^4}\right).
\end{equation}
\end{corollary}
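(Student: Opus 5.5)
The corollary follows by adding the two bounds of Theorem \ref{theorem:upper bound} for a fixed $k\in[K]$. By \eqref{main rates for V}, for any $\epsilon>0$ there are a constant $c_1$ and an event of probability at least $1-\epsilon/2$ on which
\[
\left\|\hat V\hat V^\top-VV^\top\right\|_F^2\le c_1\left(\frac{1}{K\theta^2\sigma_{\min}^2}+\frac{1}{\theta^2\sigma_{\min}^4}\right).
\]
By \eqref{main rates for Wk}, there are a constant $c_2$ and an event of probability at least $1-\epsilon/2$ on which
\[
\left\|\hat W^{(k)}(\hat W^{(k)})^\top-W^{(k)}(W^{(k)})^\top\right\|_F^2\le c_2\left(\frac{1}{\sigma_{\min}^2}+\frac{1}{K\theta^2\sigma_{\min}^2}+\frac{1}{\theta^2\sigma_{\min}^4}\right).
\]

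A union bound shows that both inequalities hold simultaneously with probability at least $1-\epsilon$. On that intersection the sum of the two losses is at most
\[
(c_1+c_2)\left(\frac{1}{\sigma_{\min}^2}+\frac{1}{K\theta^2\sigma_{\min}^2}+\frac{1}{\theta^2\sigma_{\min}^4}\right),
\]
since the shared-subspace rate is dominated term by term by the private-subspace rate and all terms are nonnegative. This is exactly \eqref{main rates for upper bound}, i.e. $O_p(a_n)+O_p(b_n)=O_p(a_n+b_n)$ for nonnegative rates, with the shared rate absorbed into the private one.

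The only point to check is that the $O_p$ statements of Theorem \ref{theorem:upper bound} are uniform in the sense required for a fixed $k$. They are, since $k$ is arbitrary but fixed and the theorem is stated for each $k$. There is no genuine obstacle; the substantive work lies entirely in Theorem \ref{theorem:upper bound} itself.
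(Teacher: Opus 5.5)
Your proposal is correct and matches the paper's argument. The paper states no separate proof for this corollary; it simply combines \eqref{main rates for V} and \eqref{main rates for Wk} for fixed $k$. That is exactly your union-bound argument $O_p(a_n)+O_p(b_n)=O_p(a_n+b_n)$, together with the observation that the shared-subspace rate is dominated by the private-subspace rate.
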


\begin{remark}
[Rate comparison]
\label{rem:TPvsTPS}
According to Assumption \ref{assum:3}, $\theta$ is of constant order. Since $K>1$,  it follows that 
$1/K\theta^2\sigma^2_{\min}=O(1/\sigma^2_{\min})$.
For the $1/\theta^2\sigma_{\min}^4$ term, it can be a higher-order term compared to $1/\sigma_{\min}^2$ as long as Assumption \ref{assum:4} holds. As such, $1/\sigma_{\min}^2$ is the main term in TPS-PCA. Under a homogeneous setting such that the convergence error for each view $k$ is of the same order for TP-PCA, we can conclude that
$$
\frac{1}{\sigma^2_{\min}}
+
\frac{1}{K\theta^2\sigma_{\min}^2}
+
\frac{1}{\theta^2\sigma_{\min}^4}
\lesssim
\frac{1}{\sigma^2_{\min}}
\lesssim
\frac{\sigma_{\max}^2}{\theta\sigma_{\min}^4},
$$
where the right-hand side is the rate in Proposition \ref{prop:1}. This implies that TPS-PCA estimators not only possess the view-wise decoupling property but also have sharper rates of convergence than those of TP-PCA, even under homogeneous settings.
\end{remark}

\begin{corollary}
[Shared dimension selection consistency]
\label{cor:rank consistency}
For $0\leq r_s\leq R_s\leq n$, assume that
$1/\sigma^2_{\min}K+1/\sigma^4_{\min}=o(1)$, while the fixed truncation tolerance
$\zeta$ in \eqref{eq:shared-rank-selection} satisfies
$0<\zeta<\theta$. Then, under the same conditions in Theorem \ref{theorem:upper bound}, we have
$
\P(\hat r_s=r_s)\to1.
$
\end{corollary}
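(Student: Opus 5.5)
The plan is to compare the eigenvalues of $\tilde{\Sigma}_0$ with those of its population counterpart
$$\Sigma_0:=\frac{1}{K}\sum_{k=1}^K [V,W^{(k)}][V,W^{(k)}]^\top = VV^\top+\frac1K\sum_{k=1}^K W^{(k)}(W^{(k)})^\top,$$
and then show that the threshold $1-\zeta$ separates the top $r_s$ eigenvalues from the rest with probability tending to one.

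\textbf{Population eigenvalues.} Since $V^\top W^{(k)}=0$ for every $k$, the matrix $\Sigma_0$ acts as the identity on $\span(V)$. It acts as $\bar W:=K^{-1}\sum_k W^{(k)}(W^{(k)})^\top$ on the orthogonal complement of $\span(V)$. By Assumption \ref{assum:3}, $\|\bar W\|_{\op}\le 1-\theta$. Hence
$$\lambda_1(\Sigma_0)=\cdots=\lambda_{r_s}(\Sigma_0)=1,\qquad \lambda_{r_s+1}(\Sigma_0)\le 1-\theta.$$
This gives an eigengap of at least $\theta$. It is the same structure that underlies the $\theta$-dependence in Theorem \ref{theorem:upper bound}.

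\textbf{Perturbation.} Weyl's inequality gives
$$\max_r|\hat\lambda_r-\lambda_r(\Sigma_0)|\le \|\tilde\Sigma_0-\Sigma_0\|_{\op}.$$
I need $\|\tilde\Sigma_0-\Sigma_0\|_{\op}=o_p(1)$. The proof of Theorem \ref{theorem:upper bound} must already control this quantity, since a Davis--Kahan step with gap $\theta$ turns it into the stated rate for $\hat V$. I would reuse that intermediate bound:
$$\|\tilde\Sigma_0-\Sigma_0\|_{F}^2=O_p\!\left(\frac{1}{K\sigma_{\min}^2}+\frac{1}{\sigma_{\min}^4}\right).$$
This splits into two pieces.
\begin{itemize}
\item The first-order (linear in noise) terms of $\tilde V^{(k)}(\tilde V^{(k)})^\top-[V,W^{(k)}][V,W^{(k)}]^\top$ are independent across views by Assumption \ref{assum:1} and have mean zero. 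Averaging them reduces the variance by a factor of $K$.
\item The second-order remainders are bounded by $1/\sigma_{\min}^2$ per view, using Lemmas \ref{lem:signal} and \ref{lem:noise}, so their squared contribution is $1/\sigma_{\min}^4$.
\end{itemize}
Under the assumption $1/(\sigma_{\min}^2K)+1/\sigma_{\min}^4=o(1)$, this gives $\delta:=\|\tilde\Sigma_0-\Sigma_0\|_{\op}\to 0$ in probability. The step I expect to be the main obstacle is extracting this operator-norm (or Frobenius) bound on $\tilde\Sigma_0-\Sigma_0$ itself, rather than on $\hat V$, from the proof of Theorem \ref{theorem:upper bound}. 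If it is not stated explicitly there, I would redo the first-order expansion of each projector and the averaging argument directly.

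\textbf{Conclusion.} Fix $\epsilon:=\min\{\zeta,\theta-\zeta\}/2>0$ and work on the event $\{\delta<\epsilon\}$, whose probability tends to one.
\begin{itemize}
\item For $r\le r_s$: $1-\hat\lambda_r\le\delta<\zeta$, so each indicator equals $1$.
\item For $r_s<r\le R_s$: $\hat\lambda_r\le 1-\theta+\delta<1-\zeta$, so $1-\hat\lambda_r>\zeta$ and each indicator equals $0$. This uses $\zeta<\theta$.
\end{itemize}
Therefore the sum in \eqref{eq:shared-rank-selection} equals exactly $r_s$. The outer $\max\{0,\cdot\}$ does not change it, and this also covers the degenerate case $r_s=0$, where $\Sigma_0=\bar W$. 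Hence $\P(\hat r_s=r_s)\to1$.
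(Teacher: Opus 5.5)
Your proposal is correct and follows essentially the same route as the paper. Both arguments use the population spectrum of $\Sigma_0$ (eigenvalue $1$ on the column space of $V$, at most $1-\theta$ on its complement), Weyl's inequality, and the event $\|\tilde\Sigma_0-\Sigma_0\|_{\op}<\min\{\zeta,\theta-\zeta\}$, whose probability tends to one; the $r_s=0$ case is handled the same way. The step you flagged as the main obstacle is stated explicitly in Step 5 of the proof of Theorem \ref{theorem:upper bound}, as the $\psi_1$-norm bound $\sqrt{n/(K\lambda_{\min}^2)}+n/\lambda_{\min}^2$ on $\|\tilde\Sigma_0-\Sigma_0\|_F$, so you can quote it directly.
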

As a byproduct of Theorem \ref{theorem:upper bound}, the consistency of selecting $r_s$ by using \eqref{eq:shared-rank-selection} can also be established. In fact, the conditions in Corollary \ref{cor:rank consistency} reflect the effectiveness of \eqref{eq:shared-rank-selection}.
According to Assumption \ref{assum:3}, the first $r_s$ population eigenvalues equal $1$, while the rest remain bounded above by $1-\theta$. The requirement $0<\zeta<\theta$ places the truncation threshold $1-\zeta$ strictly between these two spectral blocks, i.e., $1-\theta<1-\zeta<1$, to exactly separate the remaining eigenvalues from the leading $r_s$ ones, thereby ensuring the identification of $r_s$ at the population level. The condition $1/\sigma^2_{\min}K+1/\sigma^4_{\min}=o(1)$ further requires the signal strength to dominate the stochastic error during the estimation procedures to ensure that the eigenvalues of $\widetilde{\Sigma}_0$ concentrate around their population counterparts. Figure \ref{fig:rs_selection_spectrum} in Section \ref{sec:simulation} provides a finite-sample illustration, where a stronger alignment of the private subspaces indicates a smaller population eigenvalue gap between shared and private spectral blocks, while a larger noise level leads to a stronger perturbation of the empirical eigenvalues.

\subsection{View-wise minimax optimality}
In this section,
we leverage the framework of the minimax information lower bound   \citep{cai2013sparse,cai2021transfer} and claim that $\hat{V}$ and $\hat{W}^{(k)}$ given by TPS-PCA achieve the view-wise minimax optimality. We first provide the parameter space $\Theta$. Given a fixed set of hyper-parameters $\{\{p^{(k)}, r^{(k)},L_k\}_{k\in[K]};r_s;n;K;\sigma_{\max};\sigma_{\min}\}$, where $p^{(k)}=(p^{(k)}_1,\cdots, p^{(k)}_{L_k})$ and $r^{(k)}=(r^{(k)}_1,\cdots, r^{(k)}_{L_k},r_k)$, we define
\begin{equation}
\label{Theta}
\begin{aligned}
\Theta:=\biggl\{
&\left\{\cT^{(k)}=\cC^{(k)}\times_1 U_1^{(k)}\times_2
U_2^{(k)}\times_3\cdots\times_{L_k} U_{L_k}^{(k)}\times_{L_k+1}[V,W^{(k)}]\right\}_k\Big|\  \sigma_{\max}\leq C\sigma_{\min}, \\
&\sigma_{\min}\leq\min_{k\in[K]}\{\sigma_{r_k}(\cM_{L_k+1}(\cC^{(k)})/\sqrt{n})\}\leq\cdots\leq\max_{k\in[K]}\{\sigma_{1}(\cM_{L_k+1}(\cC^{(k)})/\sqrt{n})\}\leq\sigma_{\max};  \\
& V^\top V=I_{r_s}, \{\{(U_j^{(k)})^\top U_j^{(k)}=I_{r_j^{(k)}}\}_{j=1}^{L_k}, (W^{(k)})^\top W^{(k)}=I_{r_k-r_s}, (W^{(k)})^\top V=0\}_{k=1}^{K}\biggr\},
\end{aligned}
\end{equation}
as our parameter space $\Theta=\Theta(p^{(k)}, r^{(k)},L_k,r_s,n,K,\sigma_{\max},\sigma_{\min})$. With the parameter space in hand, we can proceed with the discussion of the minimax lower bound for $[\hat{V},\hat{W}^{(k)}]$, where we conclude with the following theorem.
\begin{theorem}
[Information lower bound]
\label{theorem:minimax}
Assume that each $\cE^{(k)}$ has i.i.d. standard normal distribution entries. If $\{\cT^{(k)}\}_k\in \Theta$, we have the following minimax lower bounds based on the same assumptions as in Theorem \ref{theorem:upper bound}:
\begin{equation}
\label{minimax for V}
\inf_{\hat{V}}\sup_{\{\cT^{(k)}\}_k\in\Theta}\E\left\|\hat{V}\hat{V}^{\top}-VV^{\top}\right\|^2_F
\gtrsim\frac{1}{K\sigma_{\min}^2},
\end{equation}
\begin{equation}
\label{minimax for Wk}
\inf_{\hat{W}^{(k)}}\sup_{\{\cT^{(k)}\}_k\in\Theta}\E\left\|\hat{W}^{(k)}(\hat{W}^{(k)})^{\top}-W^{(k)}(W^{(k)})^{\top}\right\|^2_F
\gtrsim\frac{1}{\sigma_{\min}^2}.
\end{equation}
\end{theorem}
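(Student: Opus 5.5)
We will use the standard Fano-type scheme for subspace estimation (as in \cite{cai2013sparse,zhang2018tensor}). Within $\Theta$ we build a finite packing of parameters that differ only in $V$ (for \eqref{minimax for V}) or only in $W^{(k)}$ (for \eqref{minimax for Wk}). We then bound the pairwise Kullback--Leibler divergences of the induced Gaussian laws, and apply Fano's lemma together with the comparison $\|P_1-P_2\|_F\asymp\|\sin\Theta\|_F$ between projection and $\sin\Theta$ distances.

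\textbf{Step 1: fix the nuisance parameters.} Fix the mode loadings $U_j^{(k)}$, e.g.\ the first $r_j^{(k)}$ canonical basis vectors. Choose the core so that $\cM_{L_k+1}(\cC^{(k)})/\sqrt{n}=\sigma_{\min}[I_{r_k},0]$, padded if necessary. The signal then reduces to $\cM_{L_k+1}(\cT^{(k)})=\sqrt n\sigma_{\min}[V,W^{(k)}]E^{(k)}$, where $E^{(k)}$ has orthonormal rows. The constraints in $\Theta$ hold with $\sigma_{\max}=\sigma_{\min}$. Since the noise is i.i.d.\ $N(0,1)$ and independent across views, we get
\[
\mathrm{KL}\bigl(\P_{\theta},\P_{\theta'}\bigr)=\frac{n\sigma_{\min}^2}{2}\sum_{k=1}^K\bigl\|[V,W^{(k)}]-[V',W'^{(k)}]\bigr\|_F^2 .
\]

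\textbf{Step 2: shared lower bound.} Take a basis $V_0$ and fixed private blocks $W^{(k)}$, with all of these orthogonal to $V_0$ and to a reserved $(n-r_s-\max_k r_k)$-dimensional complement $\mathcal{Q}$. Using a Varshamov--Gilbert/Grassmannian packing (Lemma 1 of \cite{cai2013sparse}), pick matrices $\{Q_m\}_{m\le N}\subset\mathcal{O}(n,r_s)$ inside $\mathcal{Q}$ with $\log N\gtrsim r_s(n-r_s-\max_k r_k)\asymp n$. Set $V_m=\sqrt{1-\epsilon^2}V_0+\epsilon Q_m$. Each $V_m$ remains orthogonal to every $W^{(k)}$, so all constraints in $\Theta$ hold. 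The packing gives $\|V_mV_m^\top-V_lV_l^\top\|_F^2\gtrsim\epsilon^2$, while the KL divergence is at most $Kn\sigma_{\min}^2\epsilon^2$, because every view shares the same $V$. Choosing $\epsilon^2\asymp 1/(K\sigma_{\min}^2)$ (capped at a constant) makes the KL divergence at most $c\log N$. Fano's lemma then yields \eqref{minimax for V}.

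\textbf{Step 3: private lower bound.} Fix $V$ and every $W^{(k')}$ with $k'\neq k$. Perturb only $W^{(k)}$ in the same way, inside the orthogonal complement of $V$ and of a reserved direction. The KL divergence now involves only view $k$ and is of order $n\sigma_{\min}^2\epsilon^2$. Taking $\epsilon^2\asymp1/\sigma_{\min}^2$ gives \eqref{minimax for Wk}.

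\textbf{Main obstacle.} The delicate point is keeping the perturbed families inside $\Theta$ while keeping the KL divergence at the right scale. Specifically:
\begin{itemize}
\item the constraint $V^\top W^{(k)}=0$ must hold for all $k$ simultaneously;
\item Assumption \ref{assum:3} must continue to hold with a constant $\theta$, which is ensured if the private blocks of different views occupy disjoint or misaligned directions;
\item the factor $n$ in the KL divergence must be exactly cancelled by $\log N\asymp n$, which is why the packing has to live in a complement of dimension proportional to $n$, given bounded ranks.
\end{itemize}
Converting between the packing distance and $\|\hat V\hat V^\top-VV^\top\|_F^2$ is routine: any estimator can be projected to the nearest packing element at the cost of a constant factor.
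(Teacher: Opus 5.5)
Your proposal is correct and follows essentially the same route as the paper. Both arguments use a local packing of the form $\sqrt{1-\delta}\,(\text{fixed base}) + \sqrt{\delta}\,(\text{Grassmannian packing in a complement of dimension}\asymp n)$, bound the Gaussian KL divergence summed over all $K$ views for $V$ but only over view $k$ for $W^{(k)}$, and apply Fano with $\delta\asymp 1/(K\sigma_{\min}^2)$ and $\delta\asymp 1/\sigma_{\min}^2$ respectively. The differences are cosmetic. You fix the nuisance core and loadings explicitly to get an exact KL formula, whereas the paper bounds the KL via $\sigma_{\max}\le C\sigma_{\min}$. You also keep the perturbation directions in a reserved complement so that Assumption~\ref{assum:3} stays valid, which the paper's construction does not check.
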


\begin{corollary}
Based on the same conditions in Theorem \ref{theorem:minimax}, we have the following joint minimax lower bound for TPS-PCA estimators $[\hat{V},\hat{W}^{(k)}]$
\begin{equation}
\label{minimax}
\inf_{\hat{V},\hat{W}^{(k)}}\sup_{\{\cT^{(k)}\}_{k}\in\Theta}\E\left[\left\|\hat{V}\hat{V}^{\top}-VV^{\top}\right\|^2_F
+
\left\|\hat{W}^{(k)}\left(\hat{W}^{(k)}\right)^{\top}-W^{(k)}\left(W^{(k)}\right)^{\top}\right\|^2_F\right]
\gtrsim \frac{1}{\sigma_{\min}^2},
\end{equation}
\end{corollary}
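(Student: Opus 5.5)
The joint bound follows almost directly from Theorem \ref{theorem:minimax}. For any estimator pair $[\hat V,\hat W^{(k)}]$ the joint loss is pointwise at least the private loss $\|\hat W^{(k)}(\hat W^{(k)})^\top-W^{(k)}(W^{(k)})^\top\|_F^2$, since the shared term is nonnegative. Taking expectations, then the supremum over $\Theta$, and then the infimum over all pairs therefore gives
\begin{equation*}
\inf_{\hat V,\hat W^{(k)}}\sup_{\Theta}\E[\text{joint loss}]\;\geq\;\inf_{\hat W^{(k)}}\sup_{\Theta}\E\left\|\hat W^{(k)}(\hat W^{(k)})^\top-W^{(k)}(W^{(k)})^\top\right\|_F^2 .
\end{equation*}
Here the infimum on the right ranges over all measurable functions of the data. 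This is a superset of the $\hat W^{(k)}$-components of the admissible pairs, so the inequality goes in the correct direction. By \eqref{minimax for Wk}, the right-hand side is $\gtrsim 1/\sigma_{\min}^2$.

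One could also add \eqref{minimax for V} to this. One would bound the sum by the maximum of the two terms and note that $1/(K\sigma_{\min}^2)\le 1/\sigma_{\min}^2$ because $K\ge 2$. So the shared term contributes at most a lower-order quantity and does not change the rate.

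The only step needing care is confirming that \eqref{minimax for Wk} holds uniformly over the same parameter space $\Theta$ with Gaussian noise, so that the suprema are compatible. This is given by Theorem \ref{theorem:minimax}, so nothing substantive remains.

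The result should also be read together with the Corollary to Theorem \ref{theorem:upper bound}. Under Remark \ref{rem:TPvsTPS}, the upper bound \eqref{main rates for upper bound} is of order $1/\sigma_{\min}^2$. The TPS-PCA estimators are therefore view-wise minimax rate-optimal for the joint loss.
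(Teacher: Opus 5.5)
Your proposal is correct and follows the paper's route: the paper gives no separate proof because the corollary is immediate from Theorem \ref{theorem:minimax}. The joint loss dominates the private loss pointwise, the private loss does not depend on $\hat V$, and so \eqref{minimax for Wk} carries over directly, with \eqref{minimax for V} adding only the lower-order $1/(K\sigma_{\min}^2)$ term.
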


\begin{remark}
[One-step minimax optimality]
According to Assumptions \ref{assum:3} and \ref{assum:4}, we can treat $\theta$ as constant and $1/\theta^2\sigma_{\min}^4$  as a higher order term of $1/\sigma_{\min}^2$. Thus, we can simplify \eqref{main rates for Wk} as
\begin{equation*}
\left\|\hat{W}^{(k)}\left(\hat{W}^{(k)}\right)^{\top}-W^{(k)}\left(W^{(k)}\right)^{\top}\right\|^2_F
=O_p\left(\frac{1}{\sigma^2_{\min}}\right).
\end{equation*}
Clearly, the above upper bound matches the lower bound in \eqref{minimax for Wk}
without further assumptions. Moreover,
if we allow $K\lesssim\sigma_{\min}^2$ to hold, we can further simplify \eqref{main rates for V} as 
\begin{equation*}
\left\|\hat{V}\hat{V}^{\top}-VV^{\top}\right\|^2_F
=O_p\left(\frac{1}{K\sigma_{\min}^2}\right),
\end{equation*}
which again matches the lower bounds in \eqref{minimax for V}. The view-number requirement $K\lesssim\sigma_{\min}^2$ is standard in average-projection-type estimators and is imposed in related works; see \cite{fan2019distributed,li2024knowledge}. Note that
under the classical pervasiveness condition in the tensor factor model \citep{barigozzi2026tail}, we have $\sigma_{\min}^2\asymp\prod_{j=1}^{L_k} p_j^{(k)}$. As such, this requirement is particularly mild in our multi-view tensor setting, as $K$ is often treated as a finite number rather than a diverging term. Based on the above discussion,  $[\hat{V},\hat{W}^{(k)}]$ given by TPS-PCA can be regarded as the \textit{one-step view-wise optimally decoupling estimators} under moderate assumptions. In contrast, $[\tilde{A}, \tilde{B}^{(k)}]$ from TP-PCA cannot achieve minimax optimality even from an average perspective.
\end{remark}

\section{Numerical experiments}
\label{sec:4}
We apply TPS-PCA to synthetic datasets and three real-world datasets, including electricity–temperature data \citep{chen2021tensor}, firm-level financial data \citep{fan2022structural}, and human activity recognition data \citep{barshan2014recognizing}. 
These experiments are designed to examine four aspects of our proposed TPS-PCA method. The first dataset evaluates the finite-sample performance of TPS-PCA under a simulated setting. The electricity-temperature example focuses on the  interpretability of the shared-private components given by TPS-PCA. The financial and activity datasets further investigate whether the extracted shared-private structures enhance the effectiveness of downstream portfolio analysis and supervised classification tasks. Taken together, these analyses provide empirical evidence for both the structural interpretability and practical utility of TPS-PCA.

\subsection{Simulation data}
\label{sec:simulation}
In this section, we conduct the following simulation studies. Four methods are considered for comparison in this section, including: $(1)$ TPS-PCA (Algorithm \ref{alg:decouple}), $(2)$  TP-PCA (tensor-version personalized PCA), $(3)$ Vectorized TPS-PCA, and $(4)$ Vectorized TP-PCA (vector-version personalized PCA), where ``Vectorized" refers to applying vectorization to every $\cX_i^{(k)}$ by following the order from mode $1$ to mode $L_k$. Here and thereafter, we directly utilize the \texttt{PerPCA} package provided by \cite{shi2024personalized} to numerically solve TP-PCA and vectorized TP-PCA. The detailed data generating and parameter settings are presented as follows. 

We generate $K=9$ (view number) datasets and set the number of samples in each view to $n = 200$. The shape of each tensor observation in each view is set as follows.
\begin{enumerate}
    \item Views $k=1,2,3$: each $\cX_i^{(k)}$ is a $3D$ tensor $(L_k=3)$ with a size of $30 \times 30 \times 30$, i.e., $p_j^{(k)}=30$. 
    \item Views $k=4,5,6$: each $\cX_i^{(k)}$ is $2D$ matrix $(L_k=2)$ with a size of $100 \times 100$, i.e., $p_j^{(k)}=100$;
    \item Views $k=7,8,9$: each $\cX_i^{(k)}$ is a $1D$ vector $(L_k=1)$ with a length of $1000$, i.e., $p_j^{(k)}=1000$.
\end{enumerate}
The shared components $\{\{\cS_{m_1}^{(k)}\}_{m_1=1}^{r_s}\}_{k=1}^K$ and private components $\{\{\cP_{m_2}^{(k)}\}_{m_2=1}^{r_k-r_s}\}_{k=1}^K$  are all formed with predefined low-dimensional structures: we set the intrinsic Tucker ranks $(r_1^{(k)}, r_2^{(k)}, r_3^{(k)})$ as $(2,2,2)$ for $k=1,2,3$, the matrix rank $(r_1^{(k)}, r_2^{(k)})=(2,2)$ for $k=4,5,6$, and the unit vector norm for $k=7,8,9$. For $[V,W^{(k)}]$, the dimensions of $V$ and $W^{(k)}$ are set to be $r_s = 3$ and $r_k-r_s = 2$, respectively. 
We generate $V \in \mathbb{R}^{n \times 3}$ by imposing QR decomposition on an $n$ by $3$ element-wise standard Gaussian matrix $G$, i.e., $V=\text{QR}(G)$. For each $k$, we further obtain $W^{(k)} \in \mathbb{R}^{n \times 2}$ by $W^{(k)}=\text{QR}((I_n-VV^\top)G^{(k)})$, where $G^{(k)}$ is an $n$ by $2$ element-wise standard Gaussian matrix that is independent of $G$.
The noise level is set as $e_k = 0.1$ for all $k\in[K]$, while the signal level for both shared and private parts is set to be $1$.

For the algorithmic level, we set the dimensions of $\hat{V}(\tilde{A})$, $\{\hat{W}^{(k)}(\tilde{B}^{(k)})\}_k$, and $\{\{\hat{U}_j^{(k)}\}\}_k$ as $3$, $2$, and $10$, respectively, for all $k$ and $j$ where data mode estimators $\{\{\hat{U}_j^{(k)}\}_j\}_k$ are acquired by applying HOOI on each $\cX^{(k)}$. The performance of $\hat{V}$ and $\hat{W}^{(k)}$ is evaluated by the scaled projection metric, i.e., $\|\hat{V}\hat{V}^\top-VV^\top\|_F/2r_s$ and $\|\hat{W}^{(k)}(\hat{W}^{(k)})^\top-W^{(k)}(W^{(k)})^\top\|_F/2(r_k-r_s)$. The global iteration round of the TP-PCA algorithm and the local iteration time for the Stiefel manifold gradient descent are set to be $50$ and $5$, while the learning rate is fixed at $0.1$. We follow the initialization steps provided by \cite{shi2024personalized}, where they initialize the shared subspace estimator by aggregating local PCA directions from all views, with each private subspace randomly placed in the orthogonal complement of the shared initialization. The whole simulation for the four methods is repeated $100$ times.
The detailed simulation results for the four methods are reported in Table \ref{table:subspace_comparison}.
\begin{table}[h]
\caption{Comparison of 4 methods for estimating $\widehat{V}$ and $\{\widehat{W}^{(k)}\}_{k=1,4,7}$.}
\vspace*{0.5em}
\label{table:subspace_comparison}
\renewcommand{\arraystretch}{1.15}
\centering
{\scriptsize
\resizebox{\linewidth}{!}{
\begin{tabular}{ccccc}
\toprule[1.1pt]
\diagbox{Estimators}{Methods} & TPS-PCA & TP-PCA & Vectorized TPS-PCA & Vectorized TP-PCA \\
\midrule
$\scriptstyle \widehat{V}$ & \textbf{0.0628(0.0026)} & 0.0688(0.0020) & 0.1173(0.0048) & 0.1505(0.0031) \\
$\scriptstyle \widehat{W}^{(1)}$ & \textbf{0.2530(0.0520)} & 0.6492(0.0706) & 0.6549(0.0889) & 0.9925(0.0049) \\
$\scriptstyle \widehat{W}^{(4)}$ & \textbf{0.1404(0.0049)} & 0.1405(0.0049) & 0.1961(0.0068) & 0.9868(0.0085) \\
$\scriptstyle \widehat{W}^{(7)}$ & \textbf{0.1456(0.0048)} & 0.2928(0.1790) & 0.1463(0.0048) & 0.2832(0.1645) \\
Computational time & \textbf{0.0395(0.0028)} & 1.0246(0.0243) & 0.0852(0.0035) & 8.4372(0.0678) \\
\addlinespace[0.3em]
\bottomrule[1.1pt]
\end{tabular}
}}

\medskip 
{\fontsize{9pt}{9pt}\selectfont
\noindent\parbox{\linewidth}{
\textit{Note.} We report the averaged estimation error and computational time for each method, with standard deviations shown in parentheses. ``Computational time" implies the  runtime in $1$ numerical simulation loop. The CPU we used for computation is an AMD Ryzen 9 7900X3D 12-Core Processor at 4.40 GHz.
}
}
\end{table}

Due to symmetry, we only report three views, including $k=1$ (tensor view), $k=4$ (matrix view), and $k=7$ (vector view), instead of all $9$ view results. 
Table~\ref{table:subspace_comparison} reports the empirical estimation errors and computational time for four methods. According to the ``TPS-PCA” and ``TP-PCA" columns in Table \ref{table:subspace_comparison}, TPS-PCA achieves smaller estimation errors across all reported subspaces and also requires less computational time per iteration compared to the TP-PCA method. These results support our claims in Sections \ref{sec:2} and \ref{sec:3}, where TPS-PCA acts as an ideal surrogate for TP-PCA. From the ``Vectorized TPS-PCA'' and  ``Vectorized TP-PCA'' columns in Table \ref{table:subspace_comparison}, we can further deduce that explicitly exploiting the tensor structure is beneficial for both the statistical accuracy and computational efficiency of TPS-PCA and TP-PCA. The vectorized methods discard the intrinsic multi-linear structure and consequently exhibit larger estimation errors and substantially higher computational costs, especially for private subspace recovery.

We further investigate the finite-sample performance of the shared dimension selection criterion in \eqref{eq:shared-rank-selection}.
The data generation procedure and simulation configurations are similar to those mentioned above. The only differences are stated as follows. We choose the signal level for both shared and private parts as $20$, while we set the noise level $e_k=e$ across all views and choose $e\in\{1.6,1.8,2,2.2,2.4\}$. When generating $W^{(k)}$, assuming that we have acquired $V\in\mathbb{R}^{n\times3}$, we generate an element-wise standard Gaussian matrix $\tilde{G}\in\mathbb{R}^{n\times 2(K+1)}$, and then orthogonalize the columns of $(I_n-VV^\top)\tilde{G}$ to acquire $[D,H^{(1)},\cdots,H^{(K)}]$. For a given $\alpha$, we generate $W^{(k)}=\sqrt{\cos(\alpha)}D+\sqrt{1-\cos(\alpha)}H^{(k)}$. This construction allows $V^\top W^{(k)}=0$ and $(W^{(k)})^\top W^{(k)}=I_2$ for all $k\in[K]$ and   $(W^{(k)})^\top W^{(\ell)}=\cos(\alpha)I_2$ for any $k\neq\ell$. In fact, $\alpha$ represents the principal angle between any two private subspaces, with smaller values corresponding to stronger cross-view alignment. In this simulation, we choose $\alpha\in\{90^\circ,70^\circ,50^\circ,30^\circ\}$. For the parameters in \eqref{eq:shared-rank-selection}, we further set $R_s=5$ and $\zeta=0.10$. The whole simulation process is repeated $50$ times for each $(\alpha,e)$. The frequencies for correctly estimating and underestimating $r_s$ are reported in Table  \ref{table:rs_selection}, showing that the proposed criterion accurately identifies the shared dimension when the private subspaces are separated and the noise levels are mild.

\begin{table}[h]
\caption{The frequencies of correct estimation and underestimation by \eqref{eq:shared-rank-selection} over $50$ replications.}
\vspace*{0.5em}
\label{table:rs_selection}
\renewcommand{\arraystretch}{1.15}
\centering
{\scriptsize
\resizebox{\linewidth}{!}{
\begin{tabular}{cccccc}
\toprule[1.1pt]
\diagbox{Angle $\alpha$}{Noise $e$} 
& 1.6 & 1.8 & 2.0 & 2.2 & 2.4 \\
\midrule
90$^\circ$ 
& 1.000(0.000) 
& 1.000(0.000) 
& 1.000(0.000) 
& 1.000(0.000) 
& 0.980(0.020) \\

70$^\circ$ 
& 1.000(0.000) 
& 1.000(0.000) 
& 1.000(0.000) 
& 1.000(0.000) 
& 0.960(0.040) \\

50$^\circ$ 
& 1.000(0.000) 
& 1.000(0.000) 
& 1.000(0.000) 
& 1.000(0.000) 
& 0.960(0.040) \\

30$^\circ$ 
& 1.000(0.000) 
& 1.000(0.000) 
& 1.000(0.000) 
& 0.980(0.020) 
& 0.960(0.040) \\
\addlinespace[0.3em]
\bottomrule[1.1pt]
\end{tabular}
}}

\medskip 
{\fontsize{9pt}{9pt}\selectfont
\noindent\parbox{\linewidth}{
\textit{Note.} The values outside and inside parentheses denote the frequencies of correct estimation and underestimation of $r_s$, respectively.
}
}
\end{table}

Following \cite{sergazinov2026spectral}, we further investigate how $\tilde{\Sigma}_0$ separates shared and private subspaces by providing
Figure \ref{fig:rs_selection_spectrum} to illustrate the empirical spectral distribution of $\tilde{\Sigma}_0$ under different $(\alpha,e)$ settings. Notably, the spectral patterns in Figure \ref{fig:rs_selection_spectrum} are consistent with the discussion provided in Corollary \ref{cor:rank consistency}. Specifically, decreasing $\alpha$ allows the private subspaces to be more strongly aligned across views, thereby moving the purple private spectral block toward the green shared block and reducing the population-level separation between them. Increasing the noise level $e$, by contrast, widens the empirical spectral blocks and increases the stochasticity of the estimated eigenvalues. These two facts shown in Figure \ref{fig:rs_selection_spectrum} correspond to the two requirements in Corollary \ref{cor:rank consistency}, which demonstrates that \eqref{eq:shared-rank-selection} can provide an accurate estimate of $r_s$ under mild identification and signal-to-noise conditions.

\begin{figure}[h]
  \begin{minipage}[t]{1\linewidth}
    \centering
    \includegraphics[width=0.98\linewidth]{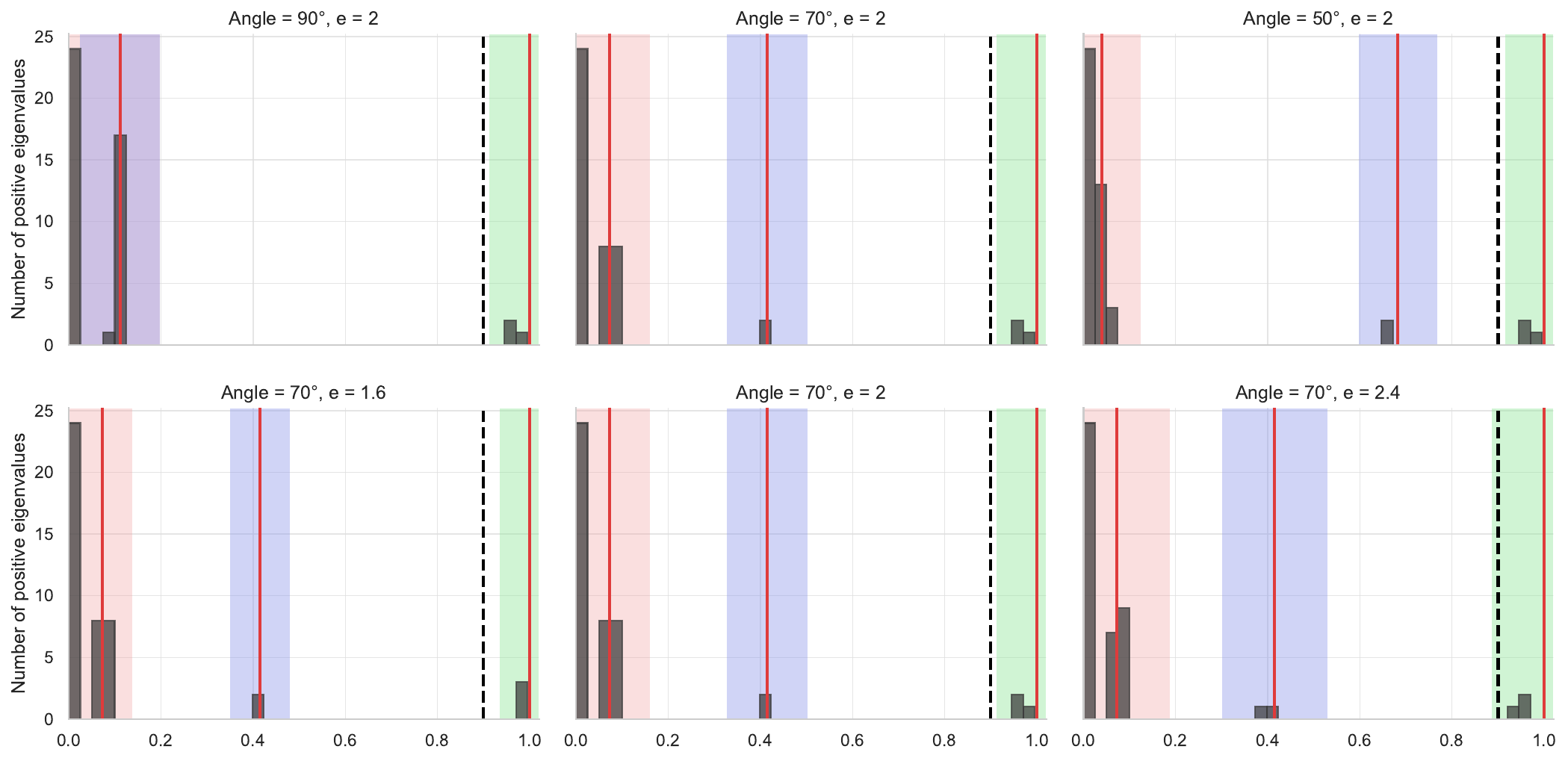}
  \end{minipage}
  \caption{Empirical spectrum of $\widetilde\Sigma_0$ under different $(\alpha,e)$ settings. The black bars give the number of empirical eigenvalues within each interval. The red, purple, and green shaded regions denote the noise, private, and shared perturbed spectral blocks, respectively. The red solid lines mark the corresponding population eigenvalues of $\tilde{\Sigma}_0$, and the black dashed line marks the truncation threshold $1-\zeta=0.90$. The top three sub-figures fix $e=2$ and vary $\alpha$ within $\{90^\circ,70^\circ,50^\circ\}$, while the bottom three sub-figures fix $\alpha=70^\circ$ and change $e$ within $\{1.6,2,2.4\}$. \label{fig:rs_selection_spectrum}}
\end{figure}

\subsection{Electricity-Temperature Dataset}
\label{sec:electricity}
In this section, we consider an electricity-temperature dataset \citep{chen2021tensor}. Specifically, this dataset consists of two views: the Adelaide electricity demand data matrix (view $1$) and the Kent Town temperature data matrix (view $2$). Each view has a matrix size of $48 \times 3556$, where $48$ rows correspond to half-hourly observations within a day, i.e., $48=24\times2$, and $3556$ columns indicate the number of consecutive days, i.e., $3556=508\times 7$, where $508$ represents the number of considered weeks, and $7$ corresponds to the $7$ days within each week. We then reshape each dataset into an order-3 tensor with dimensions $508 \times 48 \times 7$ and align the first mode for the two tensors. We perform dimension reduction along the first mode while preserving the full length of the second and third modes. The dimension parameters $r_k$ can be chosen by applying the existing tensor rank determination approaches \citep{zhang2019cross,han2022rank} for each unfolding matrix, while  $r_s$ can be selected by using \eqref{eq:shared-rank-selection}. In this section, we set $r_1=4$ and $r_s=r_2=3$. The results are reported in  Figure \ref{fig:electricity_TPSPCA}.

\begin{figure}[h]
  \begin{minipage}[t]{1\linewidth}
    \centering
    \includegraphics[width=0.9\linewidth]{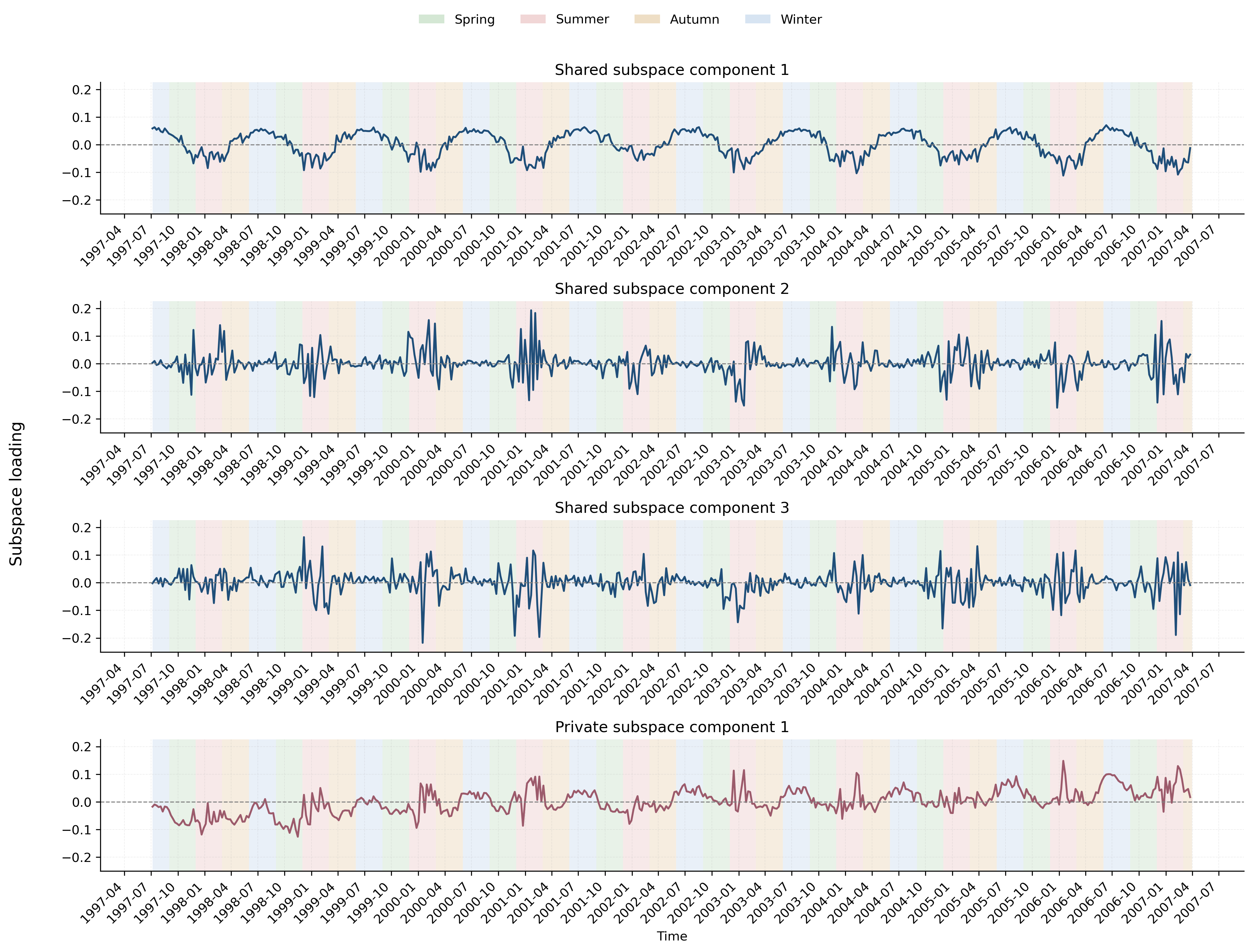}
  \end{minipage}
  \caption{TPS-PCA subspace estimators $[\hat{V},\hat{W}^{(1)}]$ for an electricity-temperature dataset. Four different colors in this figure indicate four seasons. The bottom red line is the private electricity-demand variation that is not explainable by the temperature change. \label{fig:electricity_TPSPCA}}
\end{figure}

Figure \ref{fig:electricity_TPSPCA} shows that the shared components exhibit recurrent annual variation across the observation period. In particular, the first shared component captures a smooth seasonal cycle, whereas the remaining shared components represent sharper temporal variations beyond the dominant seasonal pattern. In contrast, the private electricity-demand component evolves at a substantially lower frequency and gradually shifts from negative values in the early years to positive values in later years. This pattern differs from the recurrent annual variation captured by the shared components and may reflect gradual changes in electricity demand that are not explained by the seasonal variation common to both views. Overall, this example demonstrates that the shared-private decomposition learned by TPS-PCA separates recurrent cross-view variation from slower view-specific dynamics. The corresponding results for TP-PCA are presented in the appendix to save space.

\subsection{Financial Dataset}
\label{sec:financial}

We next consider datasets involving monthly firm-level returns and financial characteristics \citep{fan2022structural}. The processed dataset is collected from the Center for Research in Security Prices (CRSP) and contains $671$ monthly files from February 1965 to December 2020. Each file includes $245$ firms. Each raw monthly panel has $154$ columns, including firm identifiers, group labels, monthly returns, and firm characteristics, etc.
For each month, we use the next-month return vector as the return view and the current-month firm characteristics as the characteristic view. This produces an order-$2$ return tensor $\mathcal{Y}\in\mathbb{R}^{670\times 245},$
and an order-$3$ characteristic tensor $\mathcal{X}\in\mathbb{R}^{670\times 245\times 150}$. See the right panel of Figure \ref{fig:intro_examples_both} for a visualized explanation.

We first align $\cX$ and $\cY$ along the time mode of length $670$ to examine the ability of TPS-PCA to extract latent shared-private features from a temporal perspective. By using Algorithm \ref{alg:decouple}, we can estimate a shared temporal subspace $\widehat{V}$, a return-private temporal subspace $\widehat{W}_Y$, and a characteristic-private temporal subspace
$\widehat{W}_X$. The shared dimension parameter $r_s$ can be selected by utilizing \eqref{eq:shared-rank-selection} again or the method imposed in \cite{chen2025distributed}, while the rest of the Tucker rank parameters can be chosen by utilizing existing methods such as information criteria \citep{bai2002determining} or eigenvalue ratios \citep{lam2012factor,han2024tensor}. We treat each column of $\widehat{V}$ and $\widehat{W}_Y$ as high-frequency time series and plot them in Figure \ref{fig:TPSPCA_financial}.

Figure \ref{fig:TPSPCA_financial} shows that the shared components summarize several distinct forms of common temporal variation between stock returns and firm characteristics. The first component remains relatively stable over the observation period, while the second component exhibits a persistent sign transition from the 1980s to the 1990s, indicating a long-run change in the common temporal structure linking the two views. The remaining shared component is centered near zero and captures higher-frequency fluctuations that are simultaneously present in returns and financial characteristics.

\begin{figure}[h]
  \begin{minipage}[t]{0.9\linewidth}
    \centering
    \includegraphics[width=0.9\linewidth]{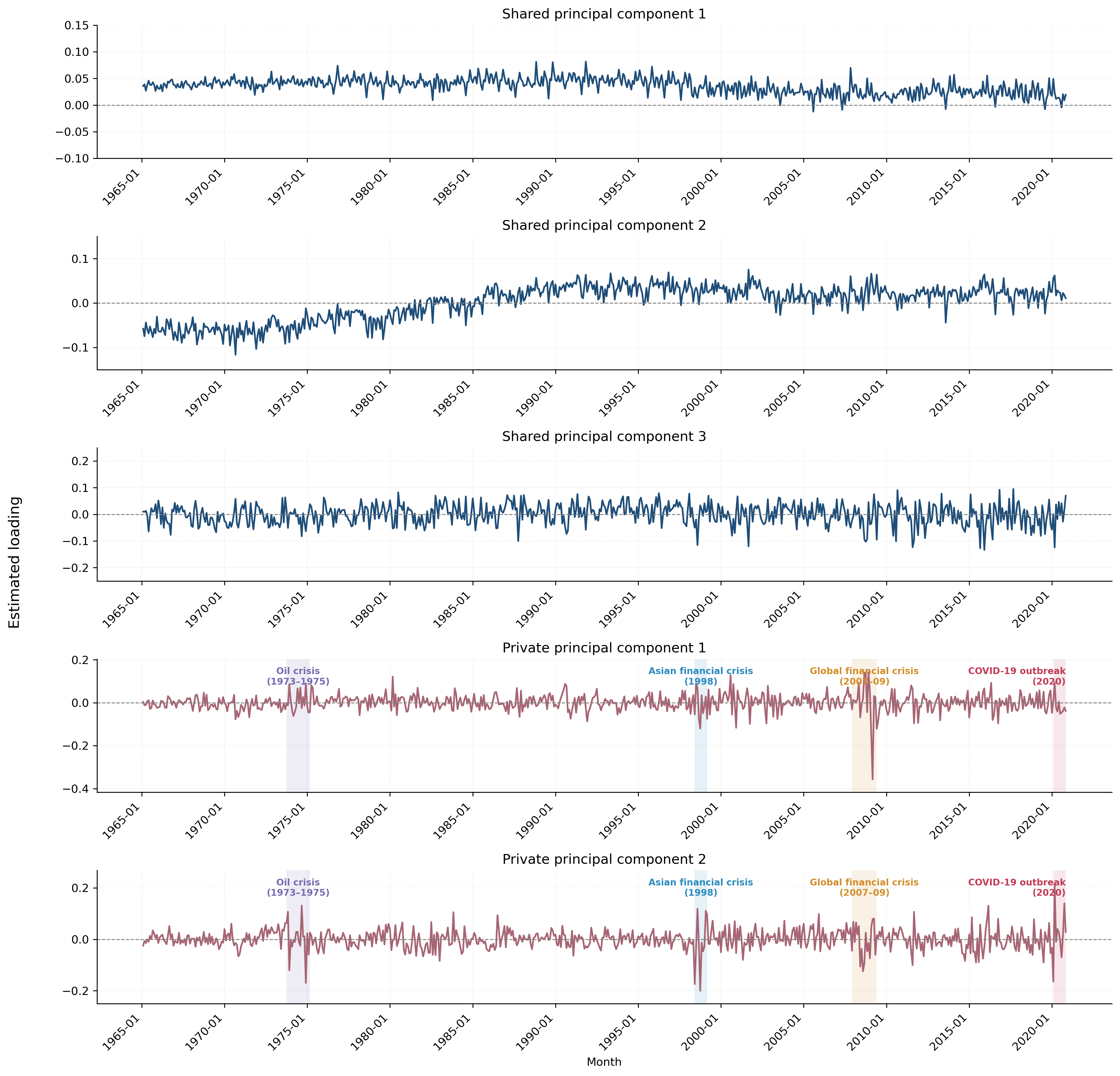}
  \end{minipage}
  \caption{TPS-PCA subspace estimators $[\hat{V},\hat{W}_Y]$
  for a two-view financial dataset containing samples of stock returns and financial characteristics. The bottom red lines are the private return-view components that are not explainable by the firm characteristics.\label{fig:TPSPCA_financial}}
\end{figure}

The private return-view components display temporal behavior that is not explained by the common components. In particular, the return-specific components remain close to zero during relatively stable periods but exhibit pronounced fluctuations around several specific time points, including the mid-1970s, the early 2000s, the global financial crisis period, and the COVID-19 outbreak. These patterns suggest that TPS-PCA isolates return-specific disturbances from the shared structures.

In addition to the identification of shared and private features along the temporal mode, we also use the extracted shared-private subspaces from $\cX$ and $\cY$ to construct out-of-sample portfolios. In particular, we align $\cX$ and $\cY$ along the firm mode with length $245$. By again using Algorithm \ref{alg:decouple}, we can estimate a shared firm-level subspace $\widehat{V}$, a return-private firm-level subspace $\widehat{W}_Y$, and a characteristic-private firm-level subspace
$\widehat{W}_X$. We choose a rolling window length of $240$ months.
At each rolling window, the portfolio weights are obtained from projected returns using specific estimated subspaces. We compare three subspace construction choices: the full TPS-PCA subspace $[\widehat{V},\widehat{W}_X,\widehat{W}_Y]$, the return-related TPS-PCA subspace $[\widehat{V},\widehat{W}_Y]$, and a PCA subspace estimated from $\cY$ only. The transaction fee rate is set to be $0.0002$. 
We report several standard out-of-sample performance measures, including annualized mean return, annualized standard deviation, Sharpe ratio, certainty equivalent return, cumulative return, and maximum drawdown. The results are summarized in Table \ref{table:financial_portfolio}.

\begin{table}[h]
\caption{Out-of-sample portfolio performance under different subspace constructions.}
\vspace*{0.5em}
\label{table:financial_portfolio}
\renewcommand{\arraystretch}{1.15}
\centering
{\small
\resizebox{0.9\linewidth}{!}{
\setlength{\tabcolsep}{24pt}
\begin{tabular}{cccc}
\toprule[1.1pt]
 & $\scriptstyle [\widehat{V},\widehat{W}_X,\widehat{W}_Y]$ & $\scriptstyle [\widehat{V},\widehat{W}_Y]$ & Single PCA \\
\midrule
Mean & \textbf{0.1333} & 0.1329 & 0.1281 \\
Std & 0.1303 & \textbf{0.1295} & 0.1316 \\
SR & 1.0227 & \textbf{1.0258} & 0.9736 \\
CER & \textbf{0.1248} & 0.1245 & 0.1194 \\
CR & \textbf{85.0146} & 84.1756 & 70.3970 \\
MDD (\%) & 41.5630 & \textbf{41.0320} & 42.8115 \\
\addlinespace[0.3em]
\bottomrule[1.1pt]
\end{tabular}
}}

\medskip 
{\fontsize{9pt}{9pt}\selectfont
\noindent\parbox{\linewidth}{
\textit{Note.} Mean and Std denote annualized mean return and annualized standard deviation. SR, CER, CR and MDD represent the Sharpe ratio, certainty equivalent return with risk-aversion parameter equal to $1$, cumulative return, and maximum drawdown, respectively. The out-of-sample portfolios are constructed with a rolling window of $240$ months and transaction fee rate $0.0002$.
}
}
\end{table}

Table \ref{table:financial_portfolio} shows that incorporating shared-private structures improves portfolio performance compared to the return-only PCA benchmark, demonstrating that the joint information in firm characteristics and returns is useful for portfolio construction.
Thus, the above downstream financial applications provide solid evidence that the shared-private decoupling is not only interpretable but also retains useful information relevant to portfolio construction.

\subsection{Activity Recognition Dataset}
\label{sec:Activity}
Finally, we consider the human activity recognition dataset collected by \citet{barshan2014recognizing} and subsequently preprocessed by \citet{wang2018stratified}\footnote{The preprocessed activity recognition dataset is available at \url{https://www.kaggle.com/datasets/jindongwang92/crossposition-activity-recognition}.}. 
The dataset contains observations from eight volunteer subjects, including four females and four males, denoted as $S_1,\cdots,S_8$. Each subject performs $19$ activities, including sitting, standing, ascending, and descending stairs, indexed by $A_1,\cdots,A_{19}$. Activity motion data are collected over a $5$-minute period at a sampling frequency of $12$ times per minute, yielding $5\times12=60$ observations indexed from $n_1$ to $n_{60}$ for each subject-activity pair. During the data collection period, $5$ inertial and magnetic sensor units are attached to the chest, right and left arms, and right and left legs of each subject, respectively. Every sensor unit consists of three sensors: an accelerometer, a gyroscope, and a magnetometer. Each sensor can extract $27$ features. Hence, every sensor unit provides us with  $27\times3=81$ features for each subject $S_i$ while performing the action $A_j$ at time point $n_t$. Combining all extracted features leads us to $5$ tensors, $\{\cX^{(k)}\}_{k=1}^{K=5}$, each of dimension $81\times8\times19\times60$, corresponding to the number of features, subjects, activities, and time points, respectively.

We first conduct an activity pattern recognition task to provide visualization results of the shared and private features for this dataset. We align $\{\cX^{(k)}\}_{k=1}^5$ along the time mode and acquire $\{\{\hat{U}_j^{(k)}\}_{j=1}^{3}\}_{k=1}^5$ by applying HOOI on each $\cX^{(k)}$. Then, we utilize TPS-PCA to estimate $\{\hat{V},\hat{W}^{(k)}\}_{k=1}^5$ and TP-PCA to estimate  $\{\tilde{A},\tilde{B}^{(k)}\}_{k=1}^5$, respectively. Similar to Sections \ref{sec:electricity} and \ref{sec:financial}, $\{\{r_j^{(k)}\}_{j=1}^{3},r_k\}$ can be determined by applying the existing rank selection methods such as \cite{choi2017selecting} and \cite{han2024cp}. The shared dimension $r_s$ can be selected by either applying \eqref{eq:shared-rank-selection} or the method imposed in \cite{li2024knowledge}. In this experiment, we choose $\{\{r_j^{(k)}\}_{j=1}^{3},r_s,r_k-r_s\}$ as $\{2,1,4,1,1\}$ for both TPS-PCA and TP-PCA.  For each $k$, we project $\cX^{(k)}$ by $\{\{\hat{U}_j^{(k)}\}_{j=1}^{2},\hat{V},\hat{W}^{(k)}\}$ and $\{\{\hat{U}_j^{(k)}\}_{j=1}^{2},\tilde{A},\tilde{B}^{(k)}\}$  to obtain $\{\hat{\cS}_{\text{TPS}}^{(k)},
\hat{\cP}_{\text{TPS}}^{(k)}\}$ and $\{\hat{\cS}_{\text{TP}}^{(k)},
\hat{\cP}_{\text{TP}}^{(k)}\}\subseteq\mathbb{R}^{2\times1\times19\times1}$, respectively:
$$
\hat{\cS}_{\text{TPS}}^{(k)}=\cX^{(k)}\times_1(\hat{U}_1^{(k)})^\top\times_2(\hat{U}_2^{(k)})^\top\times_4\hat{V}^\top,\ 
\hat{\cP}_{\text{TPS}}^{(k)}=\cX^{(k)}\times_1(\hat{U}_1^{(k)})^\top\times_2(\hat{U}_2^{(k)})^\top\times_4(\hat{W}^{(k)})^\top,
$$
$$
\hat{\cS}_{\text{TP}}^{(k)}=\cX^{(k)}\times_1(\hat{U}_1^{(k)})^\top\times_2(\hat{U}_2^{(k)})^\top\times_4\tilde{A}^\top,\ 
\hat{\cP}_{\text{TP}}^{(k)}=\cX^{(k)}\times_1(\hat{U}_1^{(k)})^\top\times_2(\hat{U}_2^{(k)})^\top\times_4(\tilde{B}^{(k)})^\top.
$$
We treat $\{\hat{\cS}_{\text{TPS}}^{(k)},
\hat{\cP}_{\text{TPS}}^{(k)}, \hat{\cS}_{\text{TP}}^{(k)},
\hat{\cP}_{\text{TP}}^{(k)}\}_{k=1}^5$ as matrices $\{\hat{S}_{\text{TPS}}^{(k)},
\hat{P}_{\text{TPS}}^{(k)}, \hat{S}_{\text{TP}}^{(k)},
\hat{P}_{\text{TP}}^{(k)}\}_{k=1}^5$ by ignoring the projected subjects mode and time point mode, respectively. Furthermore, we consider the columns of each matrix as points in $\reals^2$ and plot them on $2D$ graphs. For the sake of clear visualization, we only report the graphs corresponding to $\{\cX^{(k)}\}_{k=2}^5$. The activity patterns for TPS-PCA are presented in Figure \ref{fig:TJIVE_activity}, while those for TP-PCA are included in the appendix.

\begin{figure}[h]
  \begin{minipage}[t]{1\linewidth}
    \centering
    \includegraphics[width=0.9\linewidth]{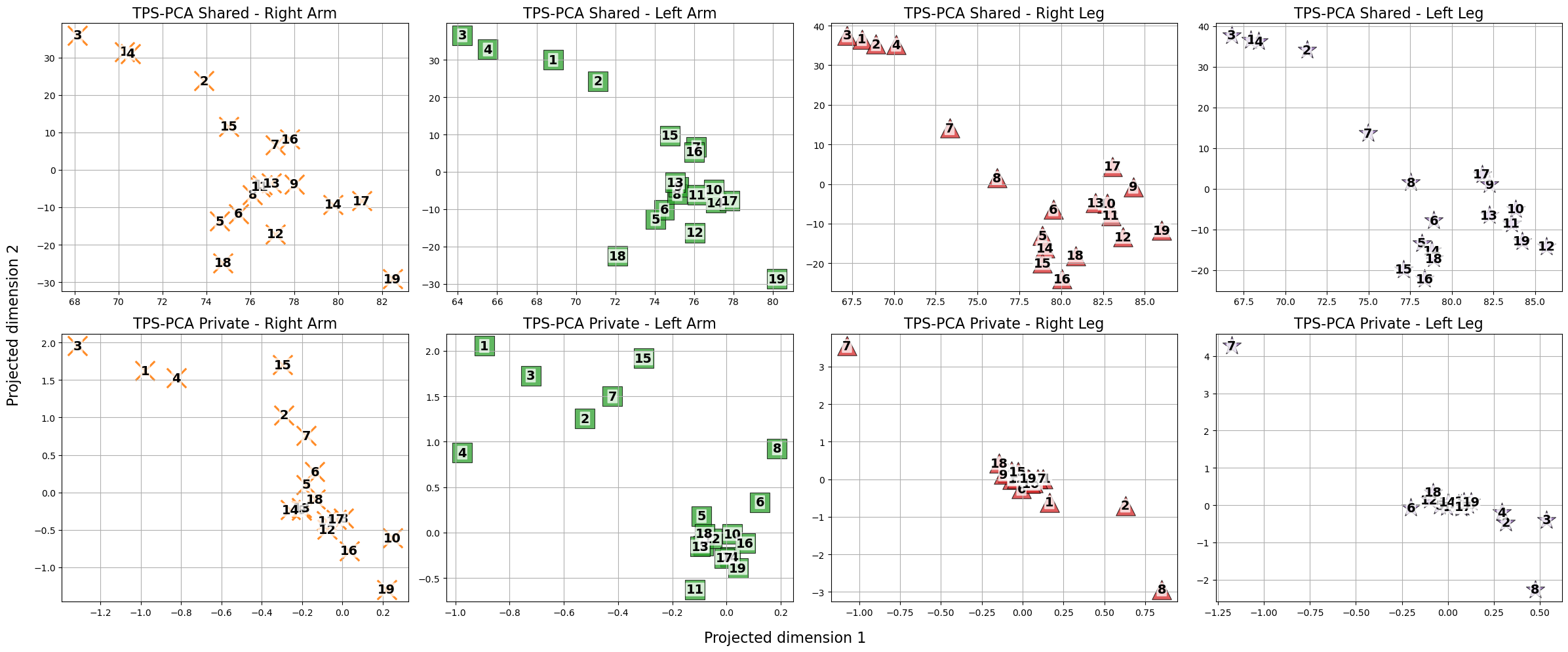}
  \end{minipage}
  \caption{The 2D graphs  produced by TPS-PCA. The numbers $1$ to $19$ represent the corresponding activities of $A_1$ to $A_{19}$, respectively \label{fig:TJIVE_activity}}
\end{figure}

The shared components formed by TPS-PCA, i.e., the first row of Figure \ref{fig:TJIVE_activity}, provide an evident common action pattern. In particular, $A_1$ (sitting), $A_2$ (standing), $A_3$, and $A_4$ (lying on the back and on
the right side) are all separated from the rest of the $15$ moving actions among all $4$ sub-graphs in the first row of Figure \ref{fig:TJIVE_activity}. This means that TPS-PCA provides a well-estimated shared component that distinguishes between moving and static motions. 

For each body part, the private components formed by TPS-PCA, i.e., the second row of Figure \ref{fig:TJIVE_activity}, provide an evident view-specific pattern. In particular, the patterns for the right and left legs of TPS-PCA reflect some similarity, while the right and left arms exhibit two entirely distinct patterns. In practice, leg movements are usually performed by both legs simultaneously. As a result, even the private individual patterns of the left and right legs often exhibit a certain degree of similarity. By contrast, arm movements are more likely to be influenced by a subject’s upper-limb motor habits and, therefore, may display different patterns between the left and right arms. For example, whether a subject is left-handed or right-handed may lead to distinct patterns. 

Based on Figure \ref{fig:TJIVE_activity}, different activities can be classified more accurately by using shared-private structures of $\{\cX^{(k)}\}_{k=1}^5$.
Among all activities, $A_7$ (standing still in an elevator) and $A_8$ (moving around in an elevator) are the most difficult to distinguish \citep{barshan2014recognizing}. We therefore construct two binary classification tasks to examine whether the shared-private structures learned by TPS-PCA improve discrimination for $A_7$  and $A_8$. 

For the first task, we concatenate the observations of the five body parts corresponding to $A_7$ and $A_8$ along the feature mode, yielding
$
\tilde{\cX}\in\mathbb{R}^{405\times8\times2\times60}.
$
Each subject is treated as a view, indexed by $k=1,\cdots,8$, while the first subject $(k=1)$ is selected as the target view, denoted as $\tilde{\cX}^{(1)}\in\mathbb{R}^{405\times 2\times 60}$. For the target-only baseline, we apply HOSVD to the first mode of $\tilde{\cX}^{(1)}$ and estimate the subspace with dimensions of $60$.
Projecting  $\tilde{\cX}^{(1)}$  onto this subspace along the first mode and merging the other two modes gives
$\tilde{X}_{\text{tar}}\in\mathbb{R}^{60\times 120}.$ 
For TPS-PCA, we set  $(r_s,r_1-r_s)=(30,30)$ and apply Algorithm \ref{alg:decouple} on $\{\tilde{\cX}^{(k)}\}_{k=1}^8$ to obtain $[\hat{V},\hat{W}^{(1)}]\in\mathbb{R}^{405\times 60}$. We again project $\tilde{\cX}^{(1)}$ onto $[\hat{V},\hat{W}^{(1)}]$ along the first mode and  obtain 
$\tilde{X}_{\text{TPS}}\in\mathbb{R}^{60\times 120}$.
For both $\tilde{X}_{\text{tar}}$ and $\tilde{X}_{\text{TPS}}$, samples are randomly split into $80\%$ training data and $20\%$ testing data for classifier training and evaluation. We repeat the experiment for $100$ times and report the corresponding results in Table \ref{table:classification_comparison1}.

\begin{table}[htbp]
\caption{The average accuracy of TPS-PCA and target baseline methods for classifying $A_7$ and $A_8$ under the setting of the first classification task.}
\vspace*{0.5em}
\label{table:classification_comparison1}
\centering
\renewcommand{\arraystretch}{1.15}

{\scriptsize
\begin{tabular*}{\linewidth}{@{\extracolsep{\fill}}lccccc}
\toprule[1.1pt]
 & Logistic & LinearSVM & Ridge & Random Forest & kNN \\
\midrule
TPS-PCA & \textbf{0.9567(0.0406)} & \textbf{0.9642(0.0360)} & \textbf{0.9500(0.0423)} & \textbf{0.9454(0.0438)} & \textbf{0.9375(0.0538)} \\
Target & 0.8704(0.0640) & 0.8646(0.0694) & 0.9029(0.0621) & 0.9254(0.0504) & 0.6388(0.0894) \\
\bottomrule[1.1pt]
\end{tabular*}
}

\medskip 
{\fontsize{9pt}{9pt}\selectfont
\noindent\parbox{\linewidth}{
\textit{Note.} The average accuracy for classifying $A_7$ and $A_8$ is reported, with the standard deviation reported in parentheses. ``Logistic'', ``LinearSVM'', ``Ridge'', ``Random Forest'' and ``kNN'' denote $5$ classifiers: logistic regression, linear support vector machine, ridge regression, random forest and k-nearest neighbors, respectively.
}
}
\end{table}

The second task evaluates the classification effectiveness of transferring information across different body parts rather than subjects. We can treat five body parts as views, indexed by $k=1,\cdots,5$, and select the left-leg view $(k=4)$ as the target view. For each view, only the observations from the first subject are retained, resulting in the target-view tensors as
$
\breve{\cX}^{(k)}\in\mathbb{R}^{81\times2\times60},k=1,\cdots,5.
$
The reduced sample matrices from HOSVD and TPS-PCA subspace estimators, i.e., $\breve{X}_{\text{tar}}$ and $\breve{X}_{\text{TPS}}$, are constructed using the same procedure as the first classification task, with the dimension parameters replaced from $(60,30,30)$ to $(20,10,10)$. The corresponding classification results are presented in Table \ref{table:classification_comparison2}.

\begin{table}[htbp]
\caption{The average accuracy of TPS-PCA and target baseline methods for classifying $A_7$ and $A_8$ under the setting of the second classification task.}
\vspace*{0.5em}
\label{table:classification_comparison2}
\centering
\renewcommand{\arraystretch}{1.15}

{\scriptsize
\begin{tabular*}{\linewidth}{@{\extracolsep{\fill}}lccccc}
\toprule[1.1pt]
 & Logistic & LinearSVM & Ridge & Random Forest & kNN \\
\midrule
TPS-PCA & \textbf{0.8833(0.0683)} & 0.8712(0.0613) & \textbf{0.8892(0.0599)} & \textbf{0.9017(0.0554)} & \textbf{0.8504(0.0784)} \\
Target & 0.8504(0.0754) & \textbf{0.8737(0.0694)} & 0.8804(0.0664) & 0.8762(0.0658) & 0.7658(0.0906) \\
\bottomrule[1.1pt]
\end{tabular*}
}
\end{table}

Tables \ref{table:classification_comparison1} and \ref{table:classification_comparison2} show that TPS-PCA outperforms the target-only baseline for all five classifiers in the subject-view task and for four of the five classifiers in the body-part-view task. The largest gains are observed for kNN, for which the average accuracy increases from $0.6388$ to $0.9375$ in the first task and from $0.7658$ to $0.8504$ in the second task. As kNN depends directly on local distances in the learned representation, these pronounced improvements suggest that TPS-PCA produces more discriminative representations for separating $A_7$ and $A_8$. 

The two classification tasks also represent complementary application settings. The first one corresponds to subject-specific activity recognition, where information from other individuals is used to improve representation learning for a target subject while preserving subject-specific variation. This setting can be closely related to personalized rehabilitation monitoring. The second corresponds to body-part-specific sensing, where measurements from multiple sensor locations are used to improve recognition based on a target wearable device, such as a wrist- or ankle-mounted sensor.

In conclusion, the numerical findings above consistently support the central motivation of TPS-PCA, i.e., to separate common and view-specific structures that are interpretable and useful for downstream applications. The benefits of TPS-PCA, therefore, arise not merely from dimensionality reduction but from explicitly modeling the coexistence of shared variation and view-specific heterogeneity.

\section{Discussion}
\label{sec:5}
In this work, we address the problem of identifying common and private structures from multi-view tensor observations. Inspired by personalized PCA, a TPS-PCA approach is proposed to analyze this type of data. The one-step closed-form global optimum and the view-wise minimax decoupling property imply the potential usefulness of TPS-PCA.

For future research, the proposed framework can be extended to generative models, e.g., diffusion models and GANs, for multi-view tensor data, following \citep{chen2023score,guo2026tucker}. In parallel, TPS-PCA can also be integrated with deep neural network methods and pretrained feature extractors to further advance the learning of multi-modal data integration \citep{han2021transformer}.

\section{Acknowledgments}
Kong's work was partially supported by the National Natural Science Foundation (NSF) of China (72342019 and 12431009).
Zhou's work was partially supported by Grant A-8004134-00-00 at the National University of Singapore.

\begin{center}
\textbf{\Large Appendix for ``Optimal Personalized Subspace Learning for Multi-view Tensor Observations"}
\end{center}

The appendix provides additional discussion on selected sections of the main text and presents the detailed proofs of the main theorems, together with auxiliary lemmas and propositions.

\appendix	
\addtocontents{toc}{\protect\setcounter{tocdepth}{2}}
\tableofcontents

\section{Further Details for Real Data Analysis}
\label{supsec:real}
This section provides more results for real data analysis omitted in Section \ref{sec:4}.
\subsection{Electricity-Temperature Datasets}
\label{supsec:electricity}
In addition to TPS-PCA, we further conduct TP-PCA \citep{shi2024personalized} on this electricity-temperature dataset. The data cleaning process and dimension parameters are both identical to those of TPS-PCA. 
The results of TP-PCA are reported in Figure \ref{fig:electricity_PerPCA}.
\begin{figure}[h]
  \begin{minipage}[t]{1\linewidth}
    \centering
    \includegraphics[width=0.8\linewidth]{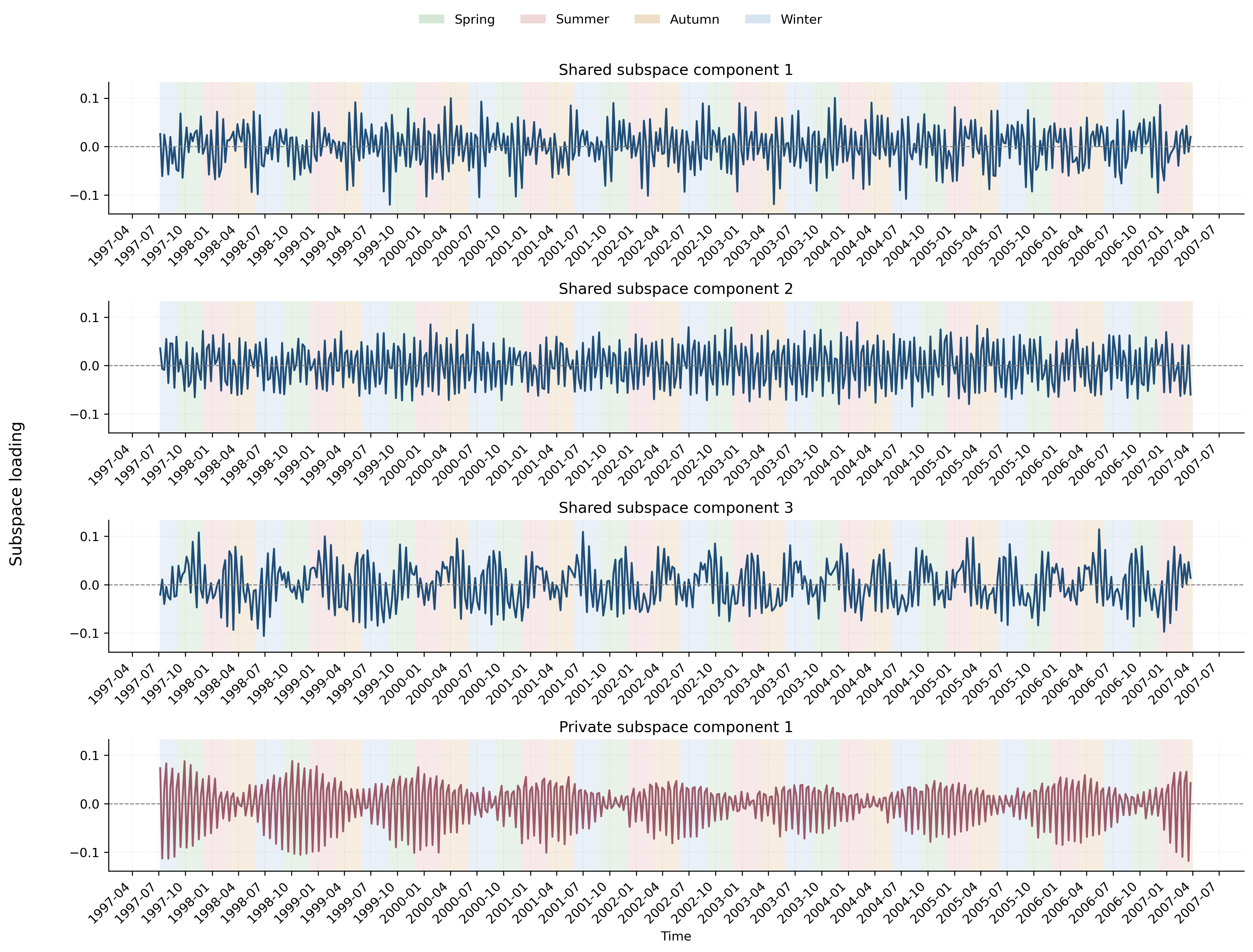}
  \end{minipage}
  \caption{TP-PCA subspace estimators $\{\tilde{A},\tilde{B}^{(1)}\}$ for an electricity-temperature dataset. \label{fig:electricity_PerPCA}}
\end{figure}

Compared to TPS-PCA, the shared and private
components extracted by TP-PCA (Figure \ref{fig:electricity_PerPCA}) exhibit neither clear seasonal nor long-term temporal patterns. Both shared and private components fluctuate rapidly around zero, and their variations are largely dominated by high-frequency noise-like oscillations rather than coherent annual or gradual trends. These comparisons show that TPS-PCA provides a more interpretable shared-private decomposition for the electricity-temperature datasets than TP-PCA.

\subsection{Activity Recognition Datasets}
\label{supsec:activity}
In this section, we conclude the activity patterns for TP-PCA in Figure \ref{fig:PerPCA_activity}, where the round number in Figure \ref{fig:PerPCA_activity} corresponds to the iteration number $R$ in Algorithm $2$ of \cite{shi2024personalized}. Comparing the first row of Figure \ref{fig:PerPCA_activity} to that of Figure \ref{fig:TJIVE_activity}, we can observe similar common action patterns among $4$ different body parts for TPS-PCA and TP-PCA. This means that both methods can deliver well-estimated shared components that distinguish between moving and static motions. However, 
while TPS-PCA achieves accurate pattern identification for each body part with $1$ round of iteration (the second row of Figure \ref{fig:TJIVE_activity}), TP-PCA cannot distinguish any individual pattern in the first few rounds of iteration. This validates the one-step numerical advantage of TPS-PCA compared to TP-PCA. 

\begin{figure}[h]
  \begin{minipage}[t]{1\linewidth}
    \centering
    \includegraphics[width=0.85\linewidth]{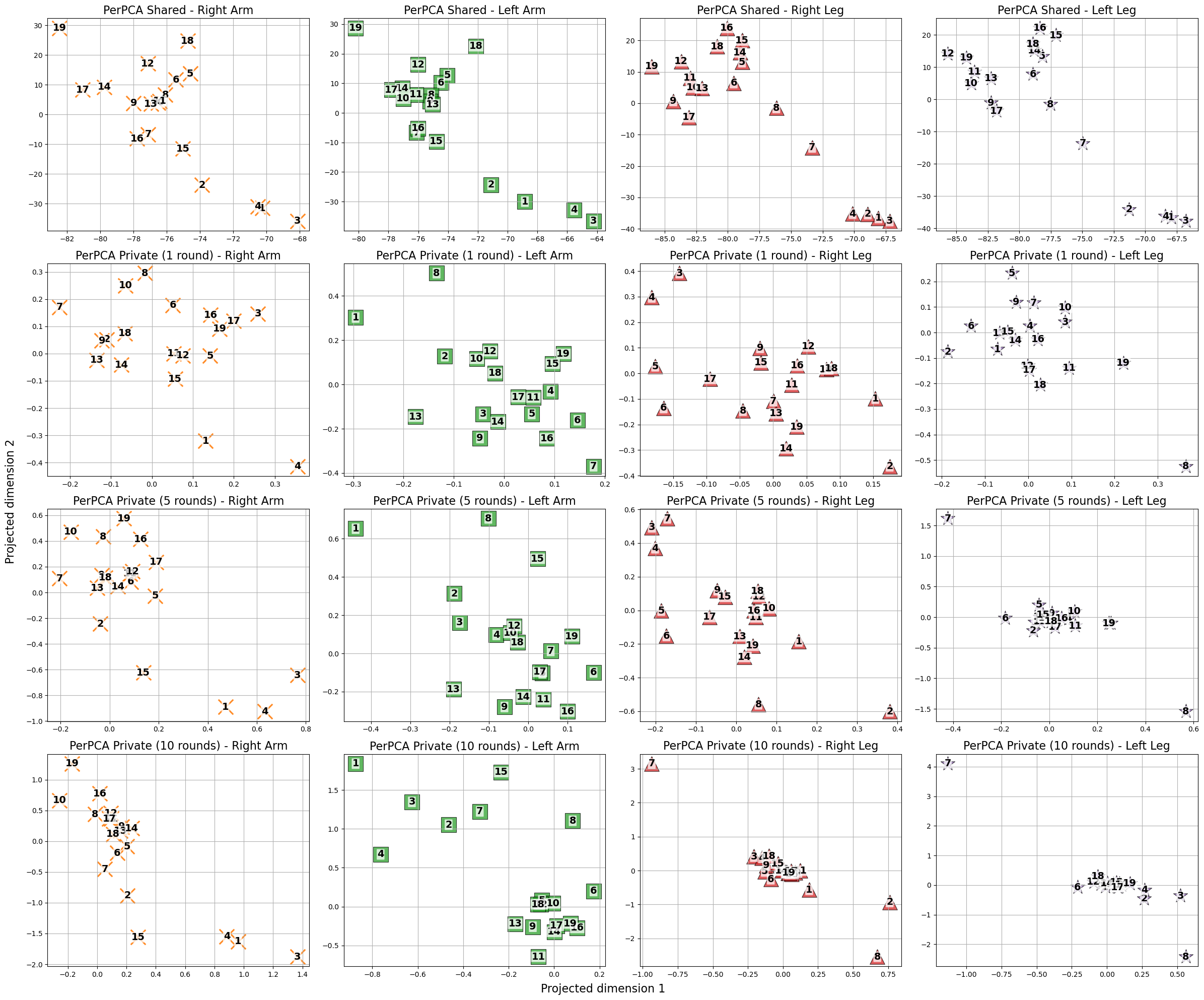}
  \end{minipage}
  \caption{The 2D graphs  produced by TP-PCA. The numbers $1$ to $19$ represent the corresponding activities of $A_1$ to $A_{19}$, respectively \label{fig:PerPCA_activity}}
\end{figure}

\section{Proof of the Theoretical Results}
In this section, we present the proof of the theoretical results in this work. Before we dive into the details, some basic notations are required to be introduced first. Define $\lambda_{\max}$ and $\lambda_{\min}$ as the maximum and minimum singular values of $\cM_{L_k+1}(\cC^{(k)})$ across all $k\in[K]$. Based on the definition of $\sigma_{\max}$ and $\sigma_{\min}$, we know that $\lambda_{\max}=\sqrt{n}\sigma_{\max}$ while $\lambda_{\min}=\sqrt{n}\sigma_{\min}$. For each $k\in[K]$, we 
define $X^{(k)}=\cM_{L_k+1}(\cX^{(k)})$. According Lemma 4 of \cite{zhang2018tensor}, we have
\begin{equation}
\label{tildeMk}
\tilde{M}^{(k)}=\cM_{L_k+1}\left(\cX^{(k)}\times_1(\hat{U}^{(k)}_{1})^{\top}\cdots \times_{L_k}(\hat{U}^{(k)}_{L_k})^{\top} \right)=X^{(k)}\left(\hat{U}^{(k)}_{1}\otimes\cdots\otimes \hat{U}^{(k)}_{L_k}\right),
\end{equation}
Moreover, we denote $T^{(k)}=\cM_{L_k+1}(\cT^{(k)})$ and $Z^{(k)}=\cM_{L_k+1}(\cE^{(k)})$, then we also have
\begin{equation}
\label{Mk}
M^{(k)}=\cM_{L_k+1}\left(\cT^{(k)}\times_1(\hat{U}^{(k)}_{1})^{\top}\cdots \times_{L_k}(\hat{U}^{(k)}_{L_k})^{\top} \right)=T^{(k)}\left(\hat{U}^{(k)}_{1}\otimes\cdots\otimes \hat{U}^{(k)}_{L_k}\right),
\end{equation}
\begin{equation}
\label{Ek}
E^{(k)}=\cM_{L_k+1}\left(\cE^{(k)}\times_1(\hat{U}^{(k)}_{1})^{\top}\cdots \times_{L_k}(\hat{U}^{(k)}_{L_k})^{\top} \right)=Z^{(k)}\left(\hat{U}^{(k)}_{1}\otimes\cdots\otimes \hat{U}^{(k)}_{L_k}\right).
\end{equation}
According to \eqref{tildeMk}-\eqref{Ek}, $E^{(k)}=\tilde{M}^{(k)}-M^{(k)}$.
If we further denote $U_{\otimes}^{(k)}:= U^{(k)}_{1}\otimes\cdots\otimes U^{(k)}_{L_k}$ and $\hat{U}_{\otimes}^{(k)}:=\hat{U}^{(k)}_{1}\otimes\cdots\otimes \hat{U}^{(k)}_{L_k}$. Then we can rewrite \eqref{tildeMk}-\eqref{Ek} as:
$$
\tilde{M}^{(k)}=X^{(k)}\hat{U}_{\otimes}^{(k)},\quad
M^{(k)}=T^{(k)}\hat{U}_{\otimes}^{(k)},\quad
E^{(k)}=Z^{(k)}\hat{U}_{\otimes}^{(k)}.
$$
Based on the above notations, we can proceed to prove the theorems in the main article.
\subsection{Proof of Theorem \ref{theorem:go}}
For any given $A$ such that $A^{\top}A= I_{r_s}$, if $r_k=r_s$, then no private subspace exists for the $k$-th view, and the problem is trivial. If $r_s<r_k$, then the subproblem of $B^{(k)}$ is essentially
\begin{equation}
\label{eq:k-view-private}
\begin{aligned}    
\hat{B}^{(k)}(A)=    &\argmax_{B^{(k)}} \tr\left[ (B^{(k)})^{\top}\tilde{V}^{(k)}(\tilde{V}^{(k)})^{\top} B^{(k)}\right],\\
&\text{subject to }\; (B^{(k)})^{\top}B^{(k)}=I_{r_k-r_s},\; A^{\top}B^{(k)}=0.
\end{aligned}
\end{equation}
The solution to this subproblem is clearly the leading $r_k-r_s$ left singular subspace of $(I_n-AA^{\top})\tilde{V}^{(k)}$, as we have 
$$\hat{B}^{(k)}(A)=   \argmax_{(B^{(k)})^{\top}B^{(k)}=I_{r_k-r_s}} \tr\left[ (B^{(k)})^{\top}(I_n-AA^{\top})\tilde{V}^{(k)}(\tilde{V}^{(k)})^{\top} (I_n-AA^{\top})B^{(k)}\right].$$
Note that as $(I_n-AA^{\top})$ is a projection matrix onto a $n-r_s$ dimensional subspace, say $\span(A^{\perp})$, while $\tilde{V}^{(k)}(\tilde{V}^{(k)})^{\top}$ is a projection matrix onto a $r_k$ dimensional subspace $\span(\tilde{V}^{(k)})$, such that $r_s< r_k$. According to Ky Fan's maximum principle \citep{vu2013minimax,vu2013fantope}, we know that the maximum of \eqref{eq:k-view-private} can be achieved by 
\begin{equation*}
\begin{aligned}
\sum_{i=1}^{r_k-r_s}
\lambda_i\left((I_n-AA^{\top})\tilde{V}^{(k)}(\tilde{V}^{(k)})^{\top} (I_n-AA^{\top})\right).
\end{aligned}
\end{equation*}
Further, the dimension formula for subspaces gives us $$\text{dim}\left[\span(A^{\perp})\cap\span(\tilde{V}^{(k)})\right]\geq \text{dim}\left[\span(A^{\perp})\right]+\text{dim}\left[\span(\tilde{V}^{(k)})\right]-n=r_k-r_s.$$
Therefore, we can choose orthonormal vectors
$
\{x_1,\cdots,x_{r_k-r_s}\}\subseteq
\operatorname{span}(A)^\perp
\cap
\operatorname{span}(\widetilde V^{(k)}).
$
For each \(j=1,\cdots,r_k-r_s\), since
\(x_j\in \operatorname{span}(A)^\perp\), we have
$
(I_n-AA^\top)x_j=x_j.
$
Moreover, since \(x_j\in \operatorname{span}(\widetilde V^{(k)})\), we also have
$
\widetilde V^{(k)}(\widetilde V^{(k)})^\top x_j=x_j.
$
Hence,
\[
\begin{aligned}
&(I_n-AA^\top)\widetilde V^{(k)}
(\widetilde V^{(k)})^\top
(I_n-AA^\top)x_j  
=
(I_n-AA^\top)\widetilde V^{(k)}
(\widetilde V^{(k)})^\top x_j  
=
(I_n-AA^\top)x_j  
=
x_j.
\end{aligned}
\]
Therefore,
$
(I_n-AA^\top)\widetilde V^{(k)}
(\widetilde V^{(k)})^\top
(I_n-AA^\top)
$
has at least \(r_k-r_s\) eigenvalues that equal \(1\), which means that $(I_n-AA^\top)\widetilde V^{(k)}$ owns at least \(r_k-r_s\) singular values that equal \(1\). As $\|(I_n-AA^\top)\widetilde V^{(k)}\|_2\leq1$, the top $r_k-r_s$ singular values of $(I_n-AA^{\top})\tilde{V}^{(k)}$ (which are the cosines of the principal angles) are exactly $1$. Thus, for all $A$ such that $A^{\top}A= I_{r_s}$ and $A^{\top}B^{(k)}=0$, we have 
$$\tr\left[ \hat{B}^{(k)}(A)^{\top}\tilde{V}^{(k)}(\tilde{V}^{(k)})^{\top} \hat{B}^{(k)}(A)\right]=r_k-r_s.$$
That is to say, given any shared $A$, the maximum of the $k$-th personalized subproblem remains fixed and can be discarded from the joint optimization in \eqref{eq:personalized_subspace2}. The proof is completed by noticing that the leading $r_s$ subspace of \eqref{eq:apm} solves the shared problem in \eqref{eq:personalized_subspace2} after discarding the personalized subproblems.

\subsection{Proof of Proposition \ref{prop:1}}
The proof mainly follows from Theorem 1 of \cite{shi2024personalized}, by carefully controlling the signal and noise levels of the matricized projected tensor $\tilde{M}^{(k)}\in \reals^{n\times (r^{(k)}_1\cdots r^{(k)}_{L_k})}$.
Recall from \eqref{Mk} that the left singular subspace of $M^{(k)}$ is precisely $(V,W^{(k)})$ by definition. According to Lemma \ref{lem:signal} and Proposition \ref{prop:curvature}, we have $\sigma_{r_k}(M^{(k)}(M^{(k)})^\top)=\sigma_{r_k}^2(M^{(k)})\gtrsim\lambda_{\min}^2$ and
\begin{equation}
\begin{aligned}
    &\left\langle M^{(k)}\left(M^{(k)}\right)^{\top} ,VV^{\top}+W^{(k)}\left(W^{(k)}\right)^{\top}-\tilde{A}\tilde{A}^{\top}-\tilde{B}^{(k)}(\tilde{B}^{(k)})^{\top}\right\rangle\\
    &\quad \gtrsim \lambda_{\min}^2 \left\| VV^{\top}+W^{(k)}\left(W^{(k)}\right)^{\top}-\tilde{A}\tilde{A}^{\top}-\tilde{B}^{(k)}(\tilde{B}^{(k)})^{\top}\right\|^2_F.
\end{aligned}
\end{equation}
Summing both sides for $k\in[K]$, we have
\begin{equation}\label{eq:opt1}
\begin{aligned}
    &\sum_{k=1}^{K}\left\langle M^{(k)}\left(M^{(k)}\right)^{\top} ,VV^{\top}+W^{(k)}\left(W^{(k)}\right)^{\top}-\tilde{A}\tilde{A}^{\top}-\tilde{B}^{(k)}(\tilde{B}^{(k)})^{\top}\right\rangle\\
    &\quad \gtrsim \lambda_{\min}^2\sum_{k=1}^{K} \left\| VV^{\top}+W^{(k)}\left(W^{(k)}\right)^{\top}-\tilde{A}\tilde{A}^{\top}-\tilde{B}^{(k)}(\tilde{B}^{(k)})^{\top}\right\|^2_F.
\end{aligned}
\end{equation}
Meanwhile, as $(\tilde{A},\tilde{B}^{(k)})$ is the solution of the optimization problem \eqref{eq:optimization}, we have directly
\begin{equation}\label{eq:opt2}
   \sum_{k=1}^{K}\left\langle \tilde{M}^{(k)}\left(\tilde{M}^{(k)}\right)^{\top} ,\tilde{A}\tilde{A}^{\top}+\tilde{B}^{(k)}(\tilde{B}^{(k)})^{\top}-VV^{\top}-W^{(k)}\left(W^{(k)}\right)^{\top}\right\rangle\geq 0. 
\end{equation}
Combining \eqref{eq:opt1} and \eqref{eq:opt2}, we shall have
\begin{equation}\label{eq:opt3}
\begin{aligned}
   &\sum_{k=1}^{K}\left\langle \tilde{M}^{(k)}\left(\tilde{M}^{(k)}\right)^{\top} -M^{(k)}\left(M^{(k)}\right)^{\top},\tilde{A}\tilde{A}^{\top}+\tilde{B}^{(k)}(\tilde{B}^{(k)})^{\top}-VV^{\top}-W^{(k)}\left(W^{(k)}\right)^{\top}\right\rangle\\
   &\quad \gtrsim \lambda_{\min}^2\sum_{k=1}^{K} \left\| \tilde{A}\tilde{A}^{\top}+\tilde{B}^{(k)}(\tilde{B}^{(k)})^{\top}-VV^{\top}-W^{(k)}\left(W^{(k)}\right)^{\top}\right\|^2_F.
   \end{aligned}
\end{equation}

For the left-hand side, we have, by the Cauchy-Schwarz inequality, that 
\begin{equation*}
\begin{aligned}
   &\left\langle \tilde{M}^{(k)}\left(\tilde{M}^{(k)}\right)^{\top} -M^{(k)}\left(M^{(k)}\right)^{\top},\tilde{A}\tilde{A}^{\top}+\tilde{B}^{(k)}(\tilde{B}^{(k)})^{\top}-VV^{\top}-W^{(k)}\left(W^{(k)}\right)^{\top}\right\rangle\\
   &\quad \leq \left\|\tilde{M}^{(k)}\left(\tilde{M}^{(k)}\right)^{\top} -M^{(k)}\left(M^{(k)}\right)^{\top}\right\|_F\left\|\tilde{A}\tilde{A}^{\top}+\tilde{B}^{(k)}(\tilde{B}^{(k)})^{\top}-VV^{\top}-W^{(k)}\left(W^{(k)}\right)^{\top}\right\|_F.
   \end{aligned}
\end{equation*}
By another Cauchy-Schwarz inequality, we have
\begin{equation*}
\begin{aligned}   &\sqrt{\sum_{k=1}^{K}\left\|\tilde{M}^{(k)}\left(\tilde{M}^{(k)}\right)^{\top} -M^{(k)}\left(M^{(k)}\right)^{\top}\right\|^2_F}\sqrt{\sum_{k=1}^{K} \left\| \tilde{A}\tilde{A}^{\top}+\tilde{B}^{(k)}(\tilde{B}^{(k)})^{\top}-VV^{\top}-W^{(k)}\left(W^{(k)}\right)^{\top}\right\|^2_F}\\
&\quad \geq \sum_{k=1}^K
\left\|\tilde{M}^{(k)}\left(\tilde{M}^{(k)}\right)^{\top} -M^{(k)}\left(M^{(k)}\right)^{\top}\right\|_F\left\|\tilde{A}\tilde{A}^{\top}+\tilde{B}^{(k)}(\tilde{B}^{(k)})^{\top}-VV^{\top}-W^{(k)}\left(W^{(k)}\right)^{\top}\right\|_F\\
& \quad \geq \sum_{k=1}^K\left\langle \tilde{M}^{(k)}\left(\tilde{M}^{(k)}\right)^{\top} -M^{(k)}\left(M^{(k)}\right)^{\top},\tilde{A}\tilde{A}^{\top}+\tilde{B}^{(k)}(\tilde{B}^{(k)})^{\top}-VV^{\top}-W^{(k)}\left(W^{(k)}\right)^{\top}\right\rangle\\
   &\quad \overset{\eqref{eq:opt3}}{\gtrsim} \lambda_{\min}^2\sum_{k=1}^{K} \left\| \tilde{A}\tilde{A}^{\top}+\tilde{B}^{(k)}(\tilde{B}^{(k)})^{\top}-VV^{\top}-W^{(k)}\left(W^{(k)}\right)^{\top}\right\|^2_F.
   \end{aligned}
\end{equation*}
As a result, based on $E^{(k)}=\tilde{M}^{(k)}-M^{(k)}$ and Lemmas \ref{lem:signal}, \ref{lem:noise}, we have
\begin{equation*}
\begin{aligned}   &\quad \frac{1}{K}\sum_{k=1}^{K} \left\| \tilde{A}\tilde{A}^{\top}+\tilde{B}^{(k)}(\tilde{B}^{(k)})^{\top}-VV^{\top}-W^{(k)}\left(W^{(k)}\right)^{\top}\right\|^2_F\\
   &\lesssim \frac{1}{K\lambda^4_{\min}} \sum_{k=1}^{K}\left\|\tilde{M}^{(k)}\left(\tilde{M}^{(k)}\right)^{\top} -M^{(k)}\left(M^{(k)}\right)^{\top}\right\|^2_F\\
   &\leq \frac{1}{K\lambda^4_{\min}} \sum_{k=1}^{K}\left[2\left\|M^{(k)}\left(E^{(k)}\right)^{\top}\right\|_F
   +
   \left\|E^{(k)}\left(E^{(k)}\right)^{\top}\right\|_F\right]^2\\
   &\leq \frac{1}{K\lambda^4_{\min}} \sum_{k=1}^{K}\left[2\sqrt{2r_k}\left\|M^{(k)}\right\|_{\op}\left\|E^{(k)}\right\|_{\op}
   +
   2\sqrt{\text{rank}(E^{(k)})}\left\|E^{(k)}\right\|_{\op}^2\right]^2.
   \end{aligned}
\end{equation*}
Note that $E^{(k)}=Z^{(k)}\hat{U}_{\otimes}^{(k)}$. Thus, $\text{rank}(E^{(k)})\leq\min\{\text{rank}(Z^{(k)}), \text{rank}(\hat{U}_{\otimes}^{(k)})\}\leq\text{rank}(\hat{U}_{\otimes}^{(k)})$. Further, as $\hat{U}_{\otimes}^{(k)}:=\hat{U}^{(k)}_{1}\otimes\cdots\otimes \hat{U}^{(k)}_{L_k}$, we have:
$$
\text{rank}(\hat{U}_{\otimes}^{(k)})
=
\text{rank}(\hat{U}^{(k)}_{1})\cdots\text{rank}(\hat{U}^{(k)}_{L_k})=\prod_{j=1}^{L_k} r_j^{(k)}.
$$
According to Assumption \ref{assum:4}, we know that $\text{rank}(\hat{U}_{\otimes}^{(k)})=O(1)$, which further gives us:
\begin{equation*}
    \begin{aligned}
    &\quad \frac{1}{K}\sum_{k=1}^{K} \left\| \tilde{A}\tilde{A}^{\top}+\tilde{B}^{(k)}(\tilde{B}^{(k)})^{\top}-VV^{\top}-W^{(k)}\left(W^{(k)}\right)^{\top}\right\|^2_F\\
   &\lesssim \frac{1}{K\lambda^4_{\min}} \sum_{k=1}^{K}\left[2\sqrt{2r_k}\left\|M^{(k)}\right\|_{\op}\left\|E^{(k)}\right\|_{\op}
   +
   2\sqrt{\text{rank}(\hat{U}_{\otimes}^{(k)})}\left\|E^{(k)}\right\|_{\op}^2\right]^2\\
   &\overset{\text{Lemma \ref{lem:noise}}}{\lesssim}\frac{1}{K\lambda^4_{\min}} \sum_{k=1}^{K}\left[\lambda_{\max}\sqrt{n}+n\right]^2
   \overset{\text{Assumption \ref{assum:4}}}{\lesssim}\frac{\lambda_{\max}^2n}{\lambda_{\min}^4},
   \end{aligned}
\end{equation*}
with probability tending to $1$.
Finally, by the insightful Lemma 13 of \cite{shi2024personalized}, if Assumption \ref{assum:3} holds, we have
\begin{equation*}
\begin{aligned}   &\sum_{k=1}^{K} \left\| \tilde{A}\tilde{A}^{\top}+\tilde{B}^{(k)}(\tilde{B}^{(k)})^{\top}-VV^{\top}-W^{(k)}\left(W^{(k)}\right)^{\top}\right\|^2_F\\
   &\quad \gtrsim \theta \left(K\left\|\tilde{A}\tilde{A}^{\top}-VV^{\top}\right\|^2_F+\sum_{k=1}^K\left\|\tilde{B}^{(k)}(\tilde{B}^{(k)})^{\top}-W^{(k)}\left(W^{(k)}\right)^{\top}\right\|^2_F\right).
   \end{aligned}.
\end{equation*}
The proof is complete by noticing that
\begin{equation*}
\begin{aligned}
&\quad\left\|\tilde{A}\tilde{A}^{\top}-VV^{\top}\right\|^2_F+\frac{1}{K}\sum_{k=1}^K\left\|\tilde{B}^{(k)}(\tilde{B}^{(k)})^{\top}-W^{(k)}\left(W^{(k)}\right)^{\top}\right\|^2_F\\
&\lesssim 
\frac{1}{\theta K}\sum_{k=1}^{K} \left\| \tilde{A}\tilde{A}^{\top}+\tilde{B}^{(k)}(\tilde{B}^{(k)})^{\top}-VV^{\top}-W^{(k)}\left(W^{(k)}\right)^{\top}\right\|^2_F
=O_p\left(\frac{\lambda_{\max}^2n}{\theta\lambda_{\min}^4}\right),
\end{aligned}
\end{equation*}
and $\lambda_{\min}=\sqrt{n}\sigma_{\min}$.
\subsection{Proof of Theorem \ref{theorem:upper bound}}
We first impose some basic notations for simplicity considerations. Denote $P_V=VV^\top$, $\hat{P}_V=\hat{V}\hat{V}^\top$. Further, we let $P_W^{(k)}=W^{(k)}(W^{(k)})^\top$ and $\hat{P}_W^{(k)}=\hat{W}^{(k)}(\hat{W}^{(k)})^\top$. Define $V^{(k)}=[V,W^{(k)}]$. Recall that $X^{(k)}=\cM_{L_k+1}(\cX^{(k)})$, $T^{(k)}=\cM_{L_k+1}(\cT^{(k)})$, then we denote $P^{(k)}=P_V+P_W^{(k)}$ and $\tilde{P}^{(k)}=\tilde{V}^{(k)}(\tilde{V}^{(k)})^\top$ as the projection matrices corresponding to the left $r_k$-dimensional singular subspace of $M^{(k)}$ and $\tilde{M}^{(k)}$, respectively, which gives us the subspace error of $X^{(k)}$ and $\Delta^{(k)}=\tilde{P}^{(k)}-P^{(k)}$. For any projection matrix $P$, we denote its orthogonal complement as $P^\perp$. We divide this section into two parts to prove Theorem \ref{theorem:upper bound}.

\subsubsection{Proof of the first claim in Theorem \ref{theorem:upper bound}}
\label{supp:upper bound of V}
\textbf{\textit{Step 1:}}
Recall that $\Sigma_0$ is defined as
$$
\Sigma_0:=P_V + P_V^\perp\left(\frac{1}{K}\sum_{k=1}^{K}P_W^{(k)}\right)P_V^\perp
\overset{V^\top W^{(k)}=0}{=}
P_V + \frac{1}{K}\sum_{k=1}^{K}P_W^{(k)}
.
$$
According to Assumption \ref{assum:3}, we know that the leading $r_s$-dimensional eigenspace of $\Sigma_0$ is exactly  $P_V$. Then, by utilizing Proposition \ref{prop:curvature}, we have:
\begin{equation}
\label{eq:the3_curv_results}
\left\|P_V-\hat{P}_V\right\|_F^2\leq\frac{2}{\delta_{r_s}(\Sigma_0)}\langle\Sigma_0,P_V-\hat{P}_V\rangle,
\end{equation}
where $\delta_{r_s}(\Sigma_0)=\lambda_{r_s}(\Sigma_0)-\lambda_{r_s+1}(\Sigma_0)$ is the $r_s$-th eigenvalue gap of $\Sigma_0$.
Based on Algorithm \ref{alg:decouple}, we know that $\hat{P}_V$ is the projection matrix that corresponds to the leading $r_s$-dimensional eigenspace of $\tilde{\Sigma}_0$. Accordingly, we have:
$$
\langle\tilde{\Sigma}_0,\hat{P}_V\rangle\geq\langle\tilde{\Sigma}_0,P_V\rangle,
$$
which implies
$$
\langle\tilde{\Sigma}_0,\hat{P}_V-P_V\rangle\geq0,\ \text{and}\ \langle-\tilde{\Sigma}_0,P_V-\hat{P}_V\rangle\geq0.
$$
Plugging the above result back into \eqref{eq:the3_curv_results} gives us:
$$
\left\|P_V-\hat{P}_V\right\|_F^2\leq
\frac{2}{\delta_{r_s}(\Sigma_0)}\langle\Sigma_0,P_V-\hat{P}_V\rangle
\leq
\frac{2}{\delta_{r_s}(\Sigma_0)}\langle\Sigma_0-\tilde{\Sigma}_0,P_V-\hat{P}_V\rangle.
$$
By utilizing the Cauchy-Schwarz inequality, we have:
$$
\left\|P_V-\hat{P}_V\right\|_F^2
\leq
\frac{2}{\delta_{r_s}(\Sigma_0)}\langle\tilde{\Sigma}_0-\Sigma_0,\hat{P}_V-P_V\rangle
\leq
\frac{2}{\delta_{r_s}(\Sigma_0)}
\left\|\tilde{\Sigma}_0-\Sigma_0\right\|_F
\left\|\hat{P}_V-P_V\right\|_F,
$$
which gives us:
$$
\left\|P_V-\hat{P}_V\right\|_F
\leq
\frac{2}{\delta_{r_s}(\Sigma_0)}
\left\|\tilde{\Sigma}_0-\Sigma_0\right\|_F.
$$
For $\delta_{r_s}(\Sigma_0)$, we can utilize Proposition \ref{prop:weyl} to acquire:
\begin{equation*}
\begin{aligned}
&\quad\delta_{r_s}(\Sigma_0)
=
\lambda_{r_s}\left(P_V + P_V^\perp\left(\frac{1}{K}\sum_{k=1}^{K}P_W^{(k)}\right)P_V^\perp\right)
-\lambda_{r_s+1}\left(P_V + P_V^\perp\left(\frac{1}{K}\sum_{k=1}^{K}P_W^{(k)}\right)P_V^\perp\right)\\
&\geq
\lambda_{r_s}\left(P_V\right) + \lambda_{n}\left(P_V^\perp\left(\frac{1}{K}\sum_{k=1}^{K}P_W^{(k)}\right)P_V^\perp\right)
-\lambda_{r_s+1}\left(P_V\right) - \lambda_1\left(P_V^\perp\left(\frac{1}{K}\sum_{k=1}^{K}P_W^{(k)}\right)P_V^\perp\right)
\end{aligned}
\end{equation*}
\begin{equation*}
\begin{aligned}
&\geq
\lambda_{r_s}\left(P_V\right) -\lambda_{r_s+1}\left(P_V\right)- \lambda_1\left(P_V^\perp\left(\frac{1}{K}\sum_{k=1}^{K}P_W^{(k)}\right)P_V^\perp\right)\\
&=1-\lambda_1\left(P_V^\perp\left(\frac{1}{K}\sum_{k=1}^{K}P_W^{(k)}\right)P_V^\perp\right)
=1-\left\|P_V^\perp\left(\frac{1}{K}\sum_{k=1}^{K}P_W^{(k)}\right)P_V^\perp\right\|_{\op}\\
&\geq
1-\left\|\frac{1}{K}\sum_{k=1}^{K}P_W^{(k)}\right\|_{\op}
\overset{\text{Assumption \ref{assum:3}}}{\geq} \theta,
\end{aligned}
\end{equation*}
where the last second row holds as $P_V$ is a $r_s$-dimensional projection matrix, i.e., the leading $r_s$ eigenvalues of $P_V$ equal $1$, while the others $0$. Based on the above discussion, we have: 
$$
\left\|\hat{P}_V-P_V\right\|_F=\left\|P_V-\hat{P}_V\right\|_F
\leq
\frac{2}{\theta}
\left\|\tilde{\Sigma}_0-\Sigma_0\right\|_F.
$$
Thus, in order to bound $\|\hat{P}_V-P_V\|_F$, it remains to establish the bound of $\|\tilde{\Sigma}_0-\Sigma_0\|_F$.
$\\$
\textbf{\textit{Step 2:}}
In this step, we decompose $\|\tilde{\Sigma}_0-\Sigma_0\|_F$. Recall from \eqref{eq:apm}, we have: 
$$
\tilde{\Sigma}_{0}=\frac{1}{K}\sum_{k=1}^{K}\tilde{V}^{(k)}(\tilde{V}^{(k)})^{\top}=\frac{1}{K}\sum_{k=1}^{K}\tilde{P}^{(k)}.
$$
Thus, we can rewrite $\tilde{\Sigma}_0$ as:
\begin{equation}
\label{eq:the3_tildeSigma_decompose}
\begin{aligned}
K\tilde{\Sigma}_0
&=\sum_{k=1}^{K}\tilde{P}^{(k)}
=\sum_{k=1}^{K}\left(P^{(k)}+\tilde{P}^{(k)}-P^{(k)}\right)
=\sum_{k=1}^{K}\left(P^{(k)}+\Delta^{(k)}\right)\\
&=\sum_{k=1}^{K}\left(P_V+P_W^{(k)}+\Delta^{(k)}\right)
=KP_V+\sum_{k=1}^{K}P_W^{(k)}+\sum_{k=1}^{K}\Delta^{(k)}\\
&=KP_V+\left(P_V+P_V^\perp\right)\left(\sum_{k=1}^{K}P_W^{(k)}\right)\left(P_V+P_V^\perp\right)+\sum_{k=1}^{K}\Delta^{(k)}\\
&=KP_V+P_V\left(\sum_{k=1}^{K}P_W^{(k)}\right)P_V+P_V^\perp\left(\sum_{k=1}^{K}P_W^{(k)}\right)P_V\\
&\quad+P_V\left(\sum_{k=1}^{K}P_W^{(k)}\right)P_V^\perp+P_V^\perp\left(\sum_{k=1}^{K}P_W^{(k)}\right)P_V^\perp+\sum_{k=1}^{K}\Delta^{(k)}.
\end{aligned}
\end{equation}
By utilizing the fact that $P_VP_{W}^{(k)}=0_{n\times n}$ for any $k\in[K]$, we have:
$$
K\tilde{\Sigma}_0
=\underbrace{K\left[
P_V+P_V^\perp\left(\frac{1}{K}\sum_{k=1}^{K}P_W^{(k)}\right)P_V^\perp\right]}_{K\Sigma_0}
+
\sum_{k=1}^K\Delta^{(k)},
$$
which gives us:
$$
\left\|
\tilde{\Sigma}_0-\Sigma_0\right\|_F=\frac{1}{K}\left\|\sum_{k=1}^{K}\Delta^{(k)}\right\|_F.
$$
Further, splitting each $\Delta^{(k)}$ gives us:
\begin{equation*}
\begin{aligned}
\frac{1}{K}\left\|\sum_{k=1}^{K}\Delta^{(k)}\right\|_F
&\leq\frac{1}{K}\left\|\sum_{k=1}^{K}\left(\tilde{P}^{(k)}-\E \tilde{P}^{(k)}\right)\right\|_F
+
\frac{1}{K}\left\|\sum_{k=1}^{K}\left(\E \tilde{P}^{(k)}-P^{(k)}\right)\right\|_F\\
&=\frac{1}{K}\left\|\sum_{k=1}^{K}\tilde{\Delta}^{(k)}\right\|_F
+
\frac{1}{K}\left\|\sum_{k=1}^{K}\left(\E \tilde{P}^{(k)}-P^{(k)}\right)\right\|_F,
\end{aligned}
\end{equation*}
where we re-write $\tilde{P}^{(k)}-\E P^{(k)}$ as $\tilde{\Delta}^{(k)}$. Then, we have:
$$
\left\|\left\|
\tilde{\Sigma}_0-\Sigma_0\right\|_F\right\|_{\psi_1}
\leq
\underbrace{\frac{1}{K}\left\|\left\|\sum_{k=1}^{K}\tilde{\Delta}^{(k)}\right\|_F\right\|_{\psi_1}}_{(i)}
+
\underbrace{
\frac{1}{K}\left\|\left\|\sum_{k=1}^{K}\left(\E \tilde{P}^{(k)}-P^{(k)}\right)\right\|_F\right\|_{\psi_1}}_{(ii)}.
$$
Accordingly, to bound $\|\tilde{\Sigma}_0-\Sigma_0\|_F$, we only need to discuss terms $(i)$ and $(ii)$, respectively.
$\\$
\textbf{\textit{Step 3:}}
For term $(i)$, we first analyze the local term $\tilde{\Delta}^{(k)}$. According to the definition of the $\psi_1$ norm, we have:
\begin{equation*}
\begin{aligned}
\left\|\left\|\tilde{\Delta}^{(k)}\right\|_F\right\|_{\psi_1}
&=
\left\|\left\|\tilde{P}^{(k)}-\E \tilde{P}^{(k)}\right\|_F\right\|_{\psi_1}
\leq
\left\|\left\|\tilde{P}^{(k)}-P^{(k)}\right\|_F\right\|_{\psi_1}+
\left\|\left\|\E \tilde{P}^{(k)}-P^{(k)}\right\|_F\right\|_{\psi_1}\\
&\overset{\text{Jensen}}{\leq}
\left\|\left\|\tilde{P}^{(k)}-P^{(k)}\right\|_F\right\|_{\psi_1}+
\left\|\E\left\|\tilde{P}^{(k)}- P^{(k)}\right\|_F\right\|_{\psi_1}\\
&\overset{\psi_1 \text{norm}}{\leq}
\left\|\left\|\tilde{P}^{(k)}- P^{(k)}\right\|_F\right\|_{\psi_1}
=
\left\|\left\|\Delta^{(k)}\right\|_F\right\|_{\psi_1}.
\end{aligned}
\end{equation*}
Recall from  \eqref{tildeMk}-\eqref{Ek}, $\tilde{P}^{(k)}$ and $P^{(k)}$ correspond to the left leading $r_k$-dimensional singular subspace of $\tilde{M}^{(k)}$ and $M^{(k)}$, respectively. Thus, we can  acquire:
$$
\left\|\left\|\tilde{\Delta}^{(k)}\right\|_F\right\|_{\psi_1}
\leq
\left\|\left\|\Delta^{(k)}\right\|_F\right\|_{\psi_1}
\overset{\text{Proposition \ref{prop:wedin}}}{\lesssim}
\left\|\frac{\left\|E^{(k)}\right\|_{\op}}{\sigma_{r_k}(M^{(k)})}\right\|_{\psi_1}:=\left\|\varepsilon^{(k)}\right\|_{\psi_1},
$$
where $\varepsilon^{(k)}:=\left\|E^{(k)}\right\|_{\op}/\sigma_{r_k}(M^{(k)})$.
By using Lemmas \ref{lem:signal}, \ref{lem:noise} and Proposition \ref{prop:subexp}, we have:
\begin{equation*}
\begin{aligned}
(i)&=\frac{1}{K}\left\|\left\|\sum_{k=1}^{K}\tilde{\Delta}^{(k)}\right\|_F\right\|_{\psi_1}
\overset{\text{Proposition \ref{prop:subexp}}}{\lesssim}
\frac{1}{K}\sqrt{\sum_{k=1}^{K}\left\|\left\|\tilde{\Delta}^{(k)}\right\|_F\right\|_{\psi_1}^2}
\lesssim
\frac{1}{K}\sqrt{\sum_{k=1}^{K}\left\|\varepsilon^{(k)}\right\|_{\psi_1}^2}\\
&\overset{\text{Lemma \ref{lem:signal}}}{\lesssim}
\frac{1}{K\lambda_{\min}}\sqrt{\sum_{k=1}^{K}\left\|\left\|E^{(k)}\right\|_{\op}\right\|_{\psi_1}^2}
\overset{\text{Lemma \ref{lem:noise}}}{\lesssim}
\frac{1}{K\lambda_{\min}}\sqrt{\sum_{k=1}^{K}n}
=
\sqrt{
\frac{n}{K\lambda_{\min}^2}
}.
\end{aligned}
\end{equation*}
$\\$

\textbf{\textit{Step 4:}}
For term $(ii)$, let $Q^{(k)}=\Delta^{(k)}-G^{(k)}=\tilde{P}^{(k)}-P^{(k)}-G^{(k)}$, where
\begin{equation*}
G^{(k)}=(P^{(k)})^\perp E^{(k)}(M^{(k)})^\dagger
+\bigl((M^{(k)})^\dagger\bigr)^\top(E^{(k)})^\top(P^{(k)})^\perp.
\end{equation*}
Here $(M^{(k)})^\dagger\in\mathbb{R}^{r_\Pi^{(k)}\times n}$ is the Moore--Penrose inverse of $M^{(k)}$, so both terms in $G^{(k)}$ are $n\times n$ matrices. This is the first-order perturbation of the left singular-subspace projector of the rank-$r_k$ matrix $M^{(k)}$. In particular,
\begin{equation*}
\|G^{(k)}\|_F
\leq 2\|E^{(k)}\|_{\op}\|(M^{(k)})^\dagger\|_F
\leq 2\sqrt{r_k}\,\varepsilon^{(k)}.
\end{equation*}

To bound the remainder, apply Proposition \ref{prop:taylor} to the leading $r_k$ eigenspaces of the symmetric dilation
\begin{equation*}
\begin{pmatrix}0&M^{(k)}\\(M^{(k)})^\top&0\end{pmatrix}
\quad\text{with perturbation}\quad
\begin{pmatrix}0&E^{(k)}\\(E^{(k)})^\top&0\end{pmatrix},
\end{equation*}
and to their negatives. Both eigengaps are at least $\sigma_{r_k}(M^{(k)})$, and the perturbation norm equals $\|E^{(k)}\|_{\op}$. Adding the two projector expansions and taking the upper-left block gives the expansion of $\tilde P^{(k)}-P^{(k)}$ above: the first-order terms between the positive and negative eigenspaces cancel, and the terms involving the zero eigenspace give $G^{(k)}$. The two remainder bounds in Proposition \ref{prop:taylor} therefore yield
\begin{equation*}
\|Q^{(k)}\|_F\leq 48\sqrt{r_k}\,(\varepsilon^{(k)})^2,
\qquad \varepsilon^{(k)}\leq 1/10.
\end{equation*}

With the data-mode estimators $\{\hat U_j^{(k)}\}_j$ held fixed, $M^{(k)}$ is fixed and $E^{(k)}=Z^{(k)}\hat U_\otimes^{(k)}$ has mean zero by Assumption \ref{assum:1}. Since $G^{(k)}$ is linear in $E^{(k)}$, it follows that $\E G^{(k)}=0_{n\times n}$.
Let $\Omega=[0,1/10]$ and $\Omega^c=(1/10,+\infty)$, with $\mathbb I_\Omega$ and $\mathbb I_{\Omega^c}$ denoting the corresponding indicator functions. We have
\begin{equation*}
\begin{aligned}
Q^{(k)}
&=Q^{(k)}\mathbb{I}_{\Omega}(\varepsilon^{(k)})
+\left(\tilde{P}^{(k)}-P^{(k)}\right)\mathbb{I}_{\Omega^c}(\varepsilon^{(k)})
-G^{(k)}\mathbb{I}_{\Omega^c}(\varepsilon^{(k)}).
\end{aligned}
\end{equation*}
Consequently, the triangle inequality and Jensen's inequality give
\begin{equation*}
\begin{aligned}
\left\|\E\left(\tilde{P}^{(k)}-P^{(k)}\right)\right\|_F
&=\|\E Q^{(k)}\|_F\leq\E\|Q^{(k)}\|_F\\
&\leq\underbrace{\E\left[\|Q^{(k)}\|_F\mathbb I_\Omega(\varepsilon^{(k)})\right]}_{(iii)}
+\underbrace{\E\left[\|\tilde P^{(k)}-P^{(k)}\|_F\mathbb I_{\Omega^c}(\varepsilon^{(k)})\right]}_{(iv)}\\
&\quad+\underbrace{\E\left[\|G^{(k)}\|_F\mathbb I_{\Omega^c}(\varepsilon^{(k)})\right]}_{(v)}.
\end{aligned}
\end{equation*}
For term $(iii)$, the remainder bound above implies
\begin{equation*}
(iii)\leq 48\sqrt{r_k}\,\E\left[(\varepsilon^{(k)})^2\mathbb I_\Omega(\varepsilon^{(k)})\right]
\leq 48\sqrt{r_k}\,\E(\varepsilon^{(k)})^2\lesssim\E(\varepsilon^{(k)})^2.
\end{equation*}
For term $(iv)$, both $\tilde P^{(k)}$ and $P^{(k)}$ are rank-$r_k$ orthogonal projectors, so $\|\tilde P^{(k)}-P^{(k)}\|_F\leq\sqrt{2r_k}\varepsilon^{(k)}$.
$$
(iv)=\E\left(\left\|\tilde{P}^{(k)}-P^{(k)}\right\|_F\mathbb{I}_{\Omega^c}(\varepsilon^{(k)})\right)
\overset{\text{Proposition \ref{prop:wedin}}}{\lesssim}
\E\left(\varepsilon^{(k)}\mathbb{I}_{\Omega^c}(\varepsilon^{(k)})\right)
\leq
10\E(\varepsilon^{(k)})^2
\lesssim
\E(\varepsilon^{(k)})^2.
$$
For term $(v)$, the bound on $G^{(k)}$ and the inequality
$\varepsilon^{(k)}\mathbb I_{\Omega^c}(\varepsilon^{(k)})\leq10(\varepsilon^{(k)})^2$ give
\begin{equation*}
(v)\leq2\sqrt{r_k}\,\E\left[\varepsilon^{(k)}\mathbb I_{\Omega^c}(\varepsilon^{(k)})\right]
\leq20\sqrt{r_k}\,\E(\varepsilon^{(k)})^2\lesssim\E(\varepsilon^{(k)})^2.
\end{equation*}
By Assumption \ref{assum:4}, $r_k$ is bounded. Lemmas \ref{lem:signal} and \ref{lem:noise} now yield
\begin{equation*}
\E\|Q^{(k)}\|_F
\lesssim\E(\varepsilon^{(k)})^2
\lesssim\|\varepsilon^{(k)}\|_{\psi_1}^2
\lesssim\frac{1}{\lambda_{\min}^2}
\left\|\|E^{(k)}\|_{\op}\right\|_{\psi_1}^2
\lesssim\frac{n}{\lambda_{\min}^2}.
\end{equation*}
Finally, since $\E\tilde P^{(k)}-P^{(k)}$ is deterministic, the triangle inequality gives
\begin{equation*}
(ii)=\frac{1}{K}\left\|\left\|\sum_{k=1}^K\left(\E\tilde P^{(k)}-P^{(k)}\right)\right\|_F\right\|_{\psi_1}
\lesssim\frac{1}{K}\sum_{k=1}^K\E\|Q^{(k)}\|_F
\lesssim\frac{n}{\lambda_{\min}^2}.
\end{equation*}

$\\$
\textbf{\textit{Step 5:}}
Combining the results in \textbf{\textit{Step 3}} and \textbf{\textit{Step 4}} gives us:
\begin{equation*}
\begin{aligned}
\left\|\left\|
\tilde{\Sigma}_0-\Sigma_0\right\|_F\right\|_{\psi_1}
&\leq
\underbrace{\frac{1}{K}\left\|\left\|\sum_{k=1}^{K}\tilde{\Delta}^{(k)}\right\|_F\right\|_{\psi_1}}_{(i)}
+
\underbrace{
\frac{1}{K}\left\|\left\|\sum_{k=1}^{K}\left(\E \tilde{P}^{(k)}-P^{(k)}\right)\right\|_F\right\|_{\psi_1}}_{(ii)}\\
&\lesssim\sqrt{\frac{n}{K\lambda_{\min}^2}}+\frac{n}{\lambda_{\min}^2},
\end{aligned}
\end{equation*}
which gives us:
\begin{equation*}
\begin{aligned}
\left\|\left\|\hat{P}_V-P_V\right\|_F\right\|_{\psi_1}
\leq
\frac{2}{\theta}
\left\|\left\|\tilde{\Sigma}_0-\Sigma_0\right\|_F\right\|_{\psi_1}
\lesssim
\sqrt{\frac{n}{K\theta^2\lambda_{\min}^2}}+\frac{n}{\theta\lambda_{\min}^2}.
\end{aligned}
\end{equation*}
According to the definition of the sub-exponential norm, we have:
$$
\left\|\hat{P}_V-P_V\right\|_F=O_p\left(\sqrt{\frac{n}{K\theta^2\lambda_{\min}^2}}+\frac{n}{\theta\lambda_{\min}^2}
\right).
$$
By utilizing the fact that $(a+b)^2\lesssim a^2+b^2$ for any bounded $a,b$, we can finally deduce that:
$$
\left\|\hat{P}_V-P_V\right\|_F^2=O_p\left\{\left(\sqrt{\frac{n}{K\theta^2\lambda_{\min}^2}}+\frac{n}{\theta\lambda_{\min}^2}
\right)^2\right\}
=O_p\left(\frac{n}{K\theta^2\lambda_{\min}^2}+\frac{n^2}{\theta^2\lambda_{\min}^4}\right).
$$
The first claim of Theorem \ref{theorem:upper bound} is completed by realizing $\lambda_{\min}=\sqrt{n}\sigma_{\min}$.

\subsubsection{Proof of the second claim in Theorem \ref{theorem:upper bound}}
\label{supp:upper bound of Wk}
We change the superscript $k$ in \eqref{main rates for Wk} to $l$ to avoid confusion. Define $\hat{\Gamma}^{(l)}
:=
(I_n-\hat{P}_V)\tilde{P}^{(l)}(I_n-\hat{P}_V)$ and $\Gamma^{(l)}
:=
(I_n-P_V)P^{(l)}(I_n-P_V)$.
According to Algorithm \ref{alg:decouple}, we know that $\hat{P}_W^{(l)}$ and $P_W^{(l)}$ are the projection matrices corresponding to the left leading $(r_l-r_s)$-dimensional singular subspace of $(I_n-\hat{P}_V)\tilde{V}^{(l)}$ and $(I_n-P_V)V^{(l)}$, respectively. Thus, we have:
\begin{equation*}
\begin{aligned}
&\quad\left\|P_W^{(l)}-\hat{P}_W^{(l)}\right\|_F
\overset{\text{Proposition \ref{prop:wedin}}}{\lesssim}
\frac{\left\|\hat{\Gamma}^{(l)}-\Gamma^{(l)}\right\|_{\op}}{\sigma_{r_l-r_s}(\Gamma^{(l)})}
\lesssim
\left\|\hat{\Gamma}^{(l)}-\Gamma^{(l)}\right\|_{\op}\\
&:=
\left\|(I_n-\hat{P}_V)\tilde{P}^{(l)}(I_n-\hat{P}_V)-(I_n-P_V)P^{(l)}(I_n-P_V)\right\|_{\op}.
\end{aligned}
\end{equation*}
We decompose $\hat{\Gamma}^{(l)}$ as follows:
\begin{equation*}
\begin{aligned}
&\quad\hat{\Gamma}^{(l)}
=
(I_n-\hat{P}_V)\tilde{P}^{(l)}(I_n-\hat{P}_V)
=
(I_n-\hat{P}_V)(P^{(l)}+\Delta^{(l)})(I_n-\hat{P}_V)\\
&=(I_n-\hat{P}_V)P^{(l)}(I_n-\hat{P}_V)+(I_n-\hat{P}_V)\Delta^{(l)}(I_n-\hat{P}_V)
=
\hat{P}_V^\perp P^{(l)}\hat{P}_V^\perp +
\hat{P}_V^\perp \Delta^{(l)}\hat{P}_V^\perp \\
&=(\hat{P}_V^\perp-P_V^\perp+P_V^\perp) P^{(l)}(\hat{P}_V^\perp-P_V^\perp+P_V^\perp) +
\hat{P}_V^\perp \Delta^{(l)}\hat{P}_V^\perp.
\end{aligned}
\end{equation*}
For the first term above, we further decompose it as:
\begin{equation*}
\begin{aligned}
&\quad\quad\quad\quad\quad\quad\quad\quad\quad(\hat{P}_V^\perp-P_V^\perp+P_V^\perp) P^{(l)}(\hat{P}_V^\perp-P_V^\perp+P_V^\perp)\\
&=(\hat{P}_V^\perp-P_V^\perp) P^{(l)}(\hat{P}_V^\perp-P_V^\perp)
+
P_V^\perp P^{(l)}(\hat{P}_V^\perp-P_V^\perp)+
(\hat{P}_V^\perp-P_V^\perp) P^{(l)}P_V^\perp+
\underbrace{P_V^\perp P^{(l)}P_V^\perp}_{\Gamma^{(l)}}.
\end{aligned}
\end{equation*}
Accordingly, we have:
$$
\hat{\Gamma}^{(l)}-\Gamma^{(l)}
=
(\hat{P}_V^\perp-P_V^\perp) P^{(l)}(\hat{P}_V^\perp-P_V^\perp)
+
P_V^\perp P^{(l)}(\hat{P}_V^\perp-P_V^\perp)
+
(\hat{P}_V^\perp-P_V^\perp) P^{(l)}P_V^\perp
+
\hat{P}_V^\perp \Delta^{(l)}\hat{P}_V^\perp,
$$
which gives us:
\begin{equation*}
\begin{aligned}
\left\|\hat{\Gamma}^{(l)}-\Gamma^{(l)}\right\|_{\op}
&\leq
\left\|(\hat{P}_V^\perp-P_V^\perp) P^{(l)}(\hat{P}_V^\perp-P_V^\perp)\right\|_{\op}
+
2\left\|P_V^\perp P^{(l)}(\hat{P}_V^\perp-P_V^\perp)\right\|_{\op}
+
\left\|\hat{P}_V^\perp \Delta^{(l)}\hat{P}_V^\perp\right\|_{\op}\\
&\lesssim
\left\|\hat{P}_V^\perp-P_V^\perp\right\|_{\op}
+
\left\|\hat{P}_V^\perp-P_V^\perp\right\|_{\op}^2
+
\left\|\Delta^{(l)}\right\|_{\op}\\
&\leq
\left\|\hat{P}_V-P_V\right\|_F
+
\left\|\hat{P}_V-P_V\right\|_F^2
+
\left\|\Delta^{(l)}\right\|_F\\
&\overset{\text{Section \ref{supp:upper bound of V}}}{=}\left\|\Delta^{(l)}\right\|_F+
O_p\left(\sqrt{\frac{n}{K\theta^2\lambda_{\min}^2}}+\frac{n}{\theta\lambda_{\min}^2}\right).
\end{aligned}
\end{equation*}
By utilizing the same discussion as \textbf{\textit{Step 3}} in Section \ref{supp:upper bound of V}, we have 
$$
\left\|\left\|\Delta^{(l)}\right\|_F\right\|_{\psi_1}
\lesssim
\left\|\varepsilon^{(l)}\right\|_{\psi_1}
\lesssim
\left\|\frac{\left\|E^{(l)}\right\|_{\op}}{\lambda_{\min}}\right\|_{\psi_1}
\overset{\text{Lemma \ref{lem:noise}}}{\lesssim}
\frac{\sqrt{n}}{\lambda_{\min}}.
$$
Accordingly, we have:
$$
\left\|P_W^{(l)}-\hat{P}_W^{(l)}\right\|_F
\lesssim
\left\|\hat{\Gamma}^{(l)}-\Gamma^{(l)}\right\|_{\op}
=O_p
\left(
\frac{\sqrt{n}}{\lambda_{\min}}+
\sqrt{\frac{n}{K\theta^2\lambda_{\min}^2}}+\frac{n}{\theta\lambda_{\min}^2}
\right),
$$
which further gives us:
$$
\left\|P_W^{(l)}-\hat{P}_W^{(l)}\right\|_F^2
=O_p
\left(
\frac{n}{\lambda_{\min}^2}+
\frac{n}{K\theta^2\lambda_{\min}^2}+\frac{n^2}{\theta^2\lambda_{\min}^4}
\right).
$$
Until now, we have completed the proof of Theorem \ref{theorem:upper bound} by again using $\lambda_{\min}=\sqrt{n}\sigma_{\min}$.

\subsection{Proof of Theorem \ref{theorem:minimax}}
To prove Theorem \ref{theorem:minimax}, 
we can relax the left hand side of \eqref{minimax for V} and \eqref{minimax for Wk} into
\begin{equation*}
\inf_{\hat{V}}\sup_{\{\cT^{(k)}\}_k\in\Theta}\E\left\|\hat{V}\hat{V}^{\top}-VV^{\top}\right\|^2_F
\gtrsim
\inf_{\hat{V}}\sup_{\{\cT^{(k)}\}_k\in\Theta''}\E\left\|\hat{V}\hat{V}^{\top}-VV^{\top}\right\|^2_F
\end{equation*}
\begin{equation*}
\inf_{\hat{W}^{(l)}}\sup_{\{\cT^{(k)}\}_k\in\Theta}\E\left\|\hat{W}^{(l)}(\hat{W}^{(l)})^{\top}-W^{(l)}(W^{(l)})^{\top}\right\|^2_F
\gtrsim
\inf_{\hat{W}^{(l)}}\sup_{\{\cT^{(k)}\}_k\in\Theta'}\E\left\|\hat{W}^{(l)}(\hat{W}^{(l)})^{\top}-W^{(l)}(W^{(l)})^{\top}\right\|^2_F,
\end{equation*}
as $\Theta'$ and $\Theta''$ are both subsets of $\Theta$ defined in the following two subsections. 

\subsubsection{Proof of the second claim in Theorem \ref{theorem:minimax}}
We first provide the specific definition of $\Theta'$. For a fixed $l$ and a pre-known set of hyper-parameters $\{\{p^{(k)}, r^{(k)},L_k\}_{k\in[K]};r_s; n;K;\sigma_{\max};\sigma_{\min}\}$, we define  $\Theta'\subseteq\Theta$ as:
\begin{equation}
\label{Theta'}
\begin{aligned}
\Theta':=\biggl\{
&\left\{\cT^{(k)}=\cC^{(k)}\times_1 U_1^{(k)}\times_2
U_2^{(k)}\times_3\cdots\times_{L_k} U_{L_k}^{(k)}\times_{L_k+1}[V,W^{(k)}]\right\}_k
\Big|\ \sigma_{\max}\leq C\sigma_{\min}, \\
&\sigma_{\min}\leq\min_{k\in[K]}\{\sigma_{r_k}(\cM_{L_k+1}(\cC^{(k)})/\sqrt{n})\}\leq\cdots\leq\max_{k\in[K]}\{\sigma_{1}(\cM_{L_k+1}(\cC^{(k)})/\sqrt{n})\}\leq\sigma_{\max};\\
& V^\top V=I_{r_s}, \{\{(U_j^{(k)})^\top U_j^{(k)}=I_{r_j^{(k)}}\}_{j=1}^{L_k}, (W^{(k)})^\top W^{(k)}=I_{r_k-r_s}, (W^{(k)})^\top V=0\}_{k=1}^K;\\
& V,\{\cC^{(k)}, \{U_j^{(k)}\}_{j=1}^{L_k}\}_{k=1}^K, \{W^{(k)}\}_{k\in[K]\backslash\{l\}}\ \text{are fixed parameters}
\biggr\}.
\end{aligned}
\end{equation}
According to \eqref{Theta'} and the definitions of $\lambda_{\max}$ and $\lambda_{\min}$, we have:
$$
\lambda_{\min}\leq\min_{k\in[K]}(\sigma_{r_k}(\cM_{L_k+1}(\cC^{(k)})))\leq\cdots\leq\max_{k\in[K]}(\sigma_{1}(\cM_{L_k+1}(\cC^{(k)})))\leq\lambda_{\max}\leq C\lambda_{\min}
$$
To prove \eqref{minimax for Wk}, we only need to obtain
\begin{equation*}
\inf_{\hat{W}^{(l)}}\sup_{\{\cT^{(k)}\}_k\in\Theta'}\E\left\|\hat{W}^{(l)}(\hat{W}^{(l)})^{\top}-W^{(l)}(W^{(l)})^{\top}\right\|^2_F
\gtrsim
\frac{1}{\sigma_{\min}^2}.
\end{equation*}
$\\$
\textbf{\textit{Local packing:}}

Note that for any $\{\cT^{(k)}\}_k\in\Theta'$, the only changing parameter is $W^{(l)}$. Thus, we consider the metric space $(\Theta',\rho')$ where $\rho'$ is defined as 
$$
\rho'\left(\{\cT_k\}_k,\{\acute{\cT}_k\}_k\right)
:=
\left\|W^{(l)}\left(W^{(l)}\right)^\top-\acute{W}^{(l)}\left(\acute{W}^{(l)}\right)^\top\right\|_F.
$$
$P_W^{(l)}=W^{(l)}(W^{(l)})^\top$ and $P_{\acute{W}}^{(l)}=\acute{W}^{(l)}(\acute{W}^{(l)})^\top$ are the corresponding changing parameters for $\{\cT_k\}_k$ and $\{\acute{\cT}_k\}_k$, respectively. If we denote the orthogonal complement of $V$ as $V^\perp\in\mathbb{R}^{n\times(n-r_s)}$, we can deduce from $W^{(l)}\perp V$ that there exists a $\Xi^{(l)}\in\mathbb{R}^{(n-r_s)\times(r_l-r_s)}$ such that the corresponding unchanged $W^{(l)}$ satisfies $W^{(l)}=V^\perp\Xi^{(l)}$. As $(V^\perp)^\top V^\perp=I_{n-r_s}$ holds, we have:
$$
\left(\Xi^{(l)}\right)^\top\Xi^{(l)}
=\left(\Xi^{(l)}\right)^\top\left(V^\perp\right)^\top V^\perp\Xi^{(l)}=\left(W^{(l)}\right)^\top W^{(l)}=I_{r_l-r_s}.
$$
Accordingly, we have $\Xi^{(l)}\in\cO(n-r_s,r_l-r_s)=\{U\in\mathbb{R}^{(n-r_s)\times(r_l-r_s)}: U^\top U=I_{r_l-r_s}\}$. In contrast, for any $\Xi^{(l)}$ such that  $\Xi^{(l)}\in\cO(n-r_s,r_l-r_s)$, if we let $W^{(l)}$ be $V^\perp\Xi^{(l)}$, then we can deduce that $W^{(l)}\in\Theta'$, or equivalently the corresponding $\{\cT^{(k)}\}_k\in\Theta'$, holds. Define $\Phi:\cO(n-r_s,r_l-r_s)\rightarrow\Theta'$ as:
$$
\Phi(\Xi^{(l)})=V^\perp\Xi^{(l)}=W^{(l)}.
$$
Moreover, for any two distinct $\{\cT^{(k)}\}_k$, $\{\acute{\cT}^{(k)}\}_k\in\Theta'$, we have:
\begin{equation*}
\begin{aligned}
\rho_{\cO}\left(\Xi^{(l)},\acute{\Xi}^{(l)}\right)
&=
\left\|\Xi^{(l)}\left(\Xi^{(l)}\right)^\top
-
\acute{\Xi}^{(l)}\left(\acute{\Xi}^{(l)}\right)^\top\right\|_F\\
&=
\left\|V^\perp\left[\Xi^{(l)}\left(\Xi^{(l)}\right)^\top
-
\acute{\Xi}^{(l)}\left(\acute{\Xi}^{(l)}\right)^\top\right]\left(V^\perp\right)^\top\right\|_F\\
&=
\left\|\left(V^\perp\Xi^{(l)}\right)\left(V^\perp\Xi^{(l)}\right)^\top
-
\left(V^\perp\acute{\Xi}^{(l)}\right)\left(V^\perp\acute{\Xi}^{(l)}\right)^\top\right\|_F\\
&=
\left\|W^{(l)}\left(W^{(l)}\right)^\top-\acute{W}^{(l)}\left(\acute{W}^{(l)}\right)^\top\right\|_F\\
&=\rho'\left(\{\cT_k\}_k,\{\acute{\cT}_k\}_k\right),
\end{aligned}
\end{equation*}
where $\rho_{\cO}$ is the metric for $\cO(p,r)$ with $p>r$.
Accordingly, $\Phi$ is an isometric isomorphism map between $(\Theta',\rho')$ and $(\cO(n-r_s,r_l-r_s),\rho_\cO)$, i.e., $(\Theta',\rho')\cong(\cO(n-r_s,r_l-r_s),\rho_\cO)$. Therefore, we can simply transform our attention from $\Theta'$ to $\cO(n-r_s,r_l-r_s)$ and construct a local packing set of $\cO(n-r_s,r_l-r_s)$.

We first consider $\cO(n-r_s-(r_l-r_s), r_l-r_s)=\cO(n-r_l,r_l-r_s)$. For any $\epsilon>0$, $0\leq r_s< r_l< n$, and $n-r_l> r_l-r_s$ (guaranteed by Assumption \ref{assum:4}), we can utilize Proposition \ref{prop:packing} to acquire the following result by taking $q=2$:
\[
\left( \frac{c}{\epsilon} \right)^{(r_l-r_s)(n +r_s - 2r_l)}
\leq 
D(\cO(n-r_l,r_l-r_s), \rho_{\cO}, \epsilon \sqrt{r_l-r_s}) 
\leq 
\left( \frac{C}{\epsilon} \right)^{(r_l-r_s)(n +r_s - 2r_l)},
\]
where $D(T,d,\epsilon)$, the $\epsilon$-packing number of a given metric space $(T,d)$, is defined in Proposition \ref{prop:packing} formally. Based on the definition of $\epsilon$-packing number, there exists a  $\dot{\cO}(n-r_l,r_l-r_s)\subseteq\cO(n-r_l,r_l-r_s)$ such that $|\dot{\cO}(n-r_l,r_l-r_s)|\geq\left( c/\epsilon \right)^{(r_l-r_s)(n +r_s - 2r_l)}$, while for any two distinct $\dot{\Xi}_1^{(l)}$, $\dot{\Xi}_2^{(l)}\in\dot{\cO}(n-r_l,r_l-r_s)$, we have:
$$
\rho_{\cO}\left(\dot{\Xi}_1^{(l)},\dot{\Xi}_2^{(l)}\right)
=
\left\|\dot{\Xi}_1^{(l)}\left(\dot{\Xi}_1^{(l)}\right)^\top
-
\dot{\Xi}_2^{(l)}\left(\dot{\Xi}_2^{(l)}\right)^\top\right\|_F
\geq
\sqrt{r_l-r_s}\epsilon.
$$
By setting $\epsilon=c/2$, we have:
$$
\left|\dot{\cO}(n-r_l,r_l-r_s)\right|
\geq
2^{(r_l-r_s)(n +r_s - 2r_l)},
\ \text{and}\ \rho_{\cO}\left(\dot{\Xi}_1^{(l)},\dot{\Xi}_2^{(l)}\right)
\geq
\frac{c\sqrt{r_l-r_s}}{2}.
$$
For a fixed $\delta>0$ where the value is to be determined later and every $\dot{\Xi}^{(l)}\in\dot{\cO}(n-r_l,r_l-r_s)$, we define:
$$\ddot{\Xi}^{(l)} = \begin{pmatrix} \sqrt{1 - \delta} I_{r_l-r_s} \\ \sqrt{\delta} \, \dot{\Xi}^{(l)} \end{pmatrix} \in \mathbb{R}^{(n-r_s) \times (r_l-r_s)}.$$
Then we have:
$(\ddot{\Xi}^{(l)})^\top \ddot{\Xi}^{(l)}=(1-\delta)I_{r_l-r_s}+\delta(\dot{\Xi}^{(l)})^\top \dot{\Xi}^{(l)}=I_{r_l-r_s}$, which implies that $\ddot{\Xi}^{(l)}\in\cO(n-r_s,r_l-r_s)$ also holds. If we collect all $\ddot{\Xi}^{(l)}$ as $\ddot{\cO}(n-r_s,r_l-r_s)$, then we have
$$
\left|\ddot{\cO}(n-r_s,r_l-r_s)\right|
=
\left|\dot{\cO}(n-r_l,r_l-r_s)\right|
\geq
2^{(r_l-r_s)(n +r_s - 2r_l)},
$$
as the changing parameters for  $\ddot{\cO}(n-r_s,r_l-r_s)$ and $\dot{\cO}(n-r_l,r_l-r_s)$ are both $\dot{\Xi}^{(l)}$. Accordingly, for any two distinct $\ddot{\Xi}_1^{(l)}$, $\ddot{\Xi}_2^{(l)}\in\ddot{\cO}(n-r_s,r_l-r_s)$, we have:
$$
\ddot{\Xi}_1^{(l)}\ddot{\Xi}_1^{(l)\top} 
= \begin{pmatrix} \sqrt{1-\delta} I_{r_l-r_s} \\ \sqrt{\delta} \dot{\Xi}_1^{(l)} \end{pmatrix} 
\left( \sqrt{1-\delta} I_{r_l-r_s},\ \sqrt{\delta} \dot{\Xi}_1^{(l)\top} \right) = \begin{pmatrix} (1-\delta) I_{r_l-r_s} & \sqrt{\delta(1-\delta)} \dot{\Xi}_1^{(l)\top} \\ \sqrt{\delta(1-\delta)} \dot{\Xi}_1^{(l)} & \delta \dot{\Xi}_1^{(l)}\dot{\Xi}_1^{(l)\top} \end{pmatrix},
$$
and
$$
\ddot{\Xi}_2^{(l)}\ddot{\Xi}_2^{(l)\top} 
= \begin{pmatrix} \sqrt{1-\delta} I_{r_l-r_s} \\ \sqrt{\delta} \dot{\Xi}_2^{(l)} \end{pmatrix} 
\left( \sqrt{1-\delta} I_{r_l-r_s},\ \sqrt{\delta} \dot{\Xi}_2^{(l)\top} \right) = \begin{pmatrix} (1-\delta) I_{r_l-r_s} & \sqrt{\delta(1-\delta)} \dot{\Xi}_2^{(l)\top} \\ \sqrt{\delta(1-\delta)} \dot{\Xi}_2^{(l)} & \delta \dot{\Xi}_2^{(l)}\dot{\Xi}_2^{(l)\top} \end{pmatrix}.
$$
Further, we have:
$$
\left\|\ddot{\Xi}_1^{(l)}\ddot{\Xi}_1^{(l)\top} 
-\ddot{\Xi}_2^{(l)}\ddot{\Xi}_2^{(l)\top} \right\|_F
=
\left\|
\begin{pmatrix} 
0 & \sqrt{\delta(1-\delta)} \left(\dot{\Xi}_1^{(l)}-\dot{\Xi}_2^{(l)}\right)^\top \\ \sqrt{\delta(1-\delta)} \left(\dot{\Xi}_1^{(l)}-\dot{\Xi}_2^{(l)}\right) & \delta \left(\dot{\Xi}_1^{(l)}\left(\dot{\Xi}_1^{(l)}\right)^\top
-
\dot{\Xi}_2^{(l)}\left(\dot{\Xi}_2^{(l)}\right)^\top\right) \end{pmatrix}
\right\|_F.
$$
Based on the definition of $\|\cdot\|_F$ and the triangle inequality, we have:
$$
\rho_{\cO}\left(\ddot{\Xi}_1^{(l)},\ddot{\Xi}_2^{(l)}\right)
\geq
\sqrt{\delta(1-\delta)} \left\|\dot{\Xi}_1^{(l)}-\dot{\Xi}_2^{(l)}\right\|_F
\geq
\frac{\sqrt{\delta(1-\delta)}}{2}
\left\|\dot{\Xi}_1^{(l)}\left(\dot{\Xi}_1^{(l)}\right)^\top
-
\dot{\Xi}_2^{(l)}\left(\dot{\Xi}_2^{(l)}\right)^\top\right\|_F.
$$
Based on the cardinality lower bound of $\dot{\cO}(n-r_l,r_l-r_s)$, we can further deduce that:
\begin{equation}
\label{eq:the5_Wk_rho(Xi1-Xi2)}
\rho_{\cO}\left(\ddot{\Xi}_1^{(l)},\ddot{\Xi}_2^{(l)}\right)
\geq
\frac{\sqrt{\delta(1-\delta)}}{2}
\left\|\dot{\Xi}_1^{(l)}\left(\dot{\Xi}_1^{(l)}\right)^\top
-
\dot{\Xi}_2^{(l)}\left(\dot{\Xi}_2^{(l)}\right)^\top\right\|_F
\geq
\frac{\tilde{c}\sqrt{\delta(1-\delta)}\sqrt{r_l-r_s}}{2}.
\end{equation}
Besides, we also have:
\begin{equation}
\label{eq:the5_Wk_Xi1-Xi2_F}
\left\|\ddot{\Xi}_1^{(l)}
-\ddot{\Xi}_2^{(l)}\right\|_F
\leq
\sqrt{\delta}
\left\|\dot{\Xi}_1^{(l)}
-\dot{\Xi}_2^{(l)}\right\|_F
\leq
\sqrt{2\delta(r_l-r_s)}.
\end{equation}
The last inequality holds by realizing that $\dot{\Xi}_1^{(l)}$ and $\dot{\Xi}_2^{(l)}$ are both $(r_l-r_s)$-dimensional column orthonormal matrices.
Recall the definition of $\Phi$, we can build a subset of $\Theta'$,  $\ddot{\Theta}'\subseteq\Theta'$, such that:
$$
\ddot{\Theta}'
:=
\left\{
\left\{
\ddot{\cT}_i^{(k)}
\right\}_{k\in[K]}
\right\}_{i\in[m]}, |\ddot{\Theta}'|=m.
$$
For each $i\in[m]$, we have:
$$
\ddot{\cT}_i^{(k)}=\cC^{(k)}\times_1 U_1^{(k)}\times_2
U_2^{(k)}\times_3\cdots\times_{L_k} U_{L_k}^{(k)}\times_{L_k+1}[V,W^{(k)}]=\cT^{(k)}\ \text{for any}\ k\in[K]\backslash\{l\},
$$
\begin{equation*}
\begin{aligned}
\ddot{\cT}_i^{(l)}
&=\cC^{(l)}\times_1 U_1^{(l)}\times_2
U_2^{(l)}\times_3\cdots\times_{L_l} U_{L_l}^{(l)}\times_{L_l+1}[V,\ddot{W}_i^{(l)}]\\
&:=
\cC^{(l)}\times_1 U_1^{(l)}\times_2
U_2^{(l)}\times_3\cdots\times_{L_l} U_{L_l}^{(l)}\times_{L_l+1}[V,V^\perp\ddot{\Xi}_i^{(l)}],
\end{aligned}
\end{equation*}
where $\{\ddot{\Xi}_i^{(l)}\}_i$ consists of $\ddot{\cO}(n-r_s,r_l-r_s)$. Thus, according to the definition of  $\Theta'$ \eqref{Theta'}, we know that $\{
\ddot{\cT}_i^{(k)}
\}_{k\in[K]}$ and $\ddot{\Xi}_i^{(l)}$ hold a one-to-one correspondence as 
$i$ varies. Thus, based on the lower bound for $|\ddot{\cO}(n-r_s,r_l-r_s)|$, we have that $|\ddot{\Theta}'|=m\geq2^{(r_l-r_s)(n +r_s - 2r_l)}$. Further, according to the isomorphism property of $\Phi$, we know that \eqref{eq:the5_Wk_rho(Xi1-Xi2)} and \eqref{eq:the5_Wk_Xi1-Xi2_F} also hold for $(\ddot{\Theta}',\rho')$.
Thus, $\ddot{\Theta}'$ is the ideal local packing set of $\Theta'$ for our subsequent discussion.

$\\$
\textbf{\textit{Kullback-Leibler (KL) distance:}}

Denote $\{
\ddot{\cT}_1^{(k)}
\}_{k\in[K]}$ and $\{
\ddot{\cT}_2^{(k)}
\}_{k\in[K]}\in\ddot{\Theta}'$ as two sets of different parameters. Moreover, we define $\{
\ddot{\cX}_1^{(k)}
\}_{k\in[K]}$ and $\{
\ddot{\cX}_2^{(k)}
\}_{k\in[K]}$ as two sets of tensors generated by $\{
\ddot{\cT}_1^{(k)}
\}_{k\in[K]}$ and $\{
\ddot{\cT}_2^{(k)}
\}_{k\in[K]}$, respectively, i.e.,
$$
\ddot{\cX}_1^{(k)}=\ddot{\cT}_1^{(k)}+\cE^{(k)},\ \text{and}\ \ddot{\cX}_2^{(k)}=\ddot{\cT}_2^{(k)}+\tilde{\cE}^{(k)}\ \text{for each}\ k\in[K],
$$
where $\{\cE^{(k)},\tilde{\cE}^{(k)}\}_{k\in[K]}$ are all i.i.d. noise tensors with each element following a standard normal distribution.
Furthermore, we write $\P_1$ and $\P_2$ as the population distributions of $\{
\ddot{\cX}_1^{(k)}
\}_{k\in[K]}$ and $\{
\ddot{\cX}_2^{(k)}
\}_{k\in[K]}$. We also write $\P_1^{(k)}$ and $\P_2^{(k)}$ as the population distributions for each $
\ddot{\cX}_1^{(k)}$ and $\ddot{\cX}_2^{(k)}$, respectively. By the Gaussian likelihood formula, we have:
$$
\vec\left(\ddot{\cX}_i^{(k)}\right)\sim
N\left(\vec\left(\ddot{\cT}_i^{(k)}\right),I_{np_1^{(k)}p_2^{(k)}\cdots p_{L_k}^{(k)}}\right),\ \text{for}\ i=1,2.
$$
Then, we can acquire the following KL-distance equality:
$$
KL\left(\P_1||\P_2\right)
\underset{K \text{tensors}}{\overset{\text{independency of} \{\cX^{(k)}\}_k}{=}}
\sum_{k=1}^K
KL\left(\P_1^{(k)}||\P_2^{(k)}\right)
\underset{\text{for}\  k\in[K]\backslash \{l\}}{\overset{\ddot{\cT}_1^{(k)}=\ddot{\cT}_2^{(k)}}{=}}
KL\left(\P_1^{(l)}||\P_2^{(l)}\right).
$$
For the $l$-th tensor, we have:
\begin{equation}
\label{eq:the5_Wk_KL}
\begin{aligned}
KL\left(\P_1^{(l)}||\P_2^{(l)}\right)
&\underset{\text{in \cite{zhang2018tensor}}}
{\overset{\text{the bottom of page 34}}{=}}
\frac{1}{2}
\left\|\ddot{\cT}_1^{(l)}-\ddot{\cT}_2^{(l)}\right\|_F^2\\
&=
\frac{1}{2}
\left\|\cC^{(l)}\times_1 U_1^{(l)}\times_2
U_2^{(l)}\times_3\cdots\times_{L_l} U_{L_l}^{(l)}\times_{L_l+1}[0_{n\times r_s},V^\perp(\ddot{\Xi}_1^{(l)}-\ddot{\Xi}_2^{(l)})]\right\|_F^2\\
&=
\frac{1}{2}
\left\|
[0_{n\times r_s},V^\perp(\ddot{\Xi}_1^{(l)}-\ddot{\Xi}_2^{(l)})]\cdot\cM_{L_l+1}(\cC^{(l)})\cdot \left(U_{\otimes}^{(l)}\right)^\top
\right\|_F^2\\
&\leq \tilde{C}_1\lambda_{\max}^2\left\|V^\perp\left(\ddot{\Xi}_1^{(l)}-\ddot{\Xi}_2^{(l)}\right)\right\|_F^2
\leq
\tilde{C}_2\lambda_{\min}^2\left\|\left(\ddot{\Xi}_1^{(l)}-\ddot{\Xi}_2^{(l)}\right)\right\|_F^2\\
&\overset{\text{\eqref{eq:the5_Wk_Xi1-Xi2_F}}}{\leq}
2\tilde{C}_2\lambda_{\min}^2\delta(r_l-r_s),
\end{aligned}
\end{equation}
where $U_{\otimes}^{(l)}$ is defined below \eqref{Ek}. The first inequality on the second-to-last line of \eqref{eq:the5_Wk_KL} holds from the definition of $\lambda_{\max}$, while the second inequality holds due to the signal constraint in $\Theta'$ from \eqref{Theta'}. 

$\\$
\textbf{\textit{Fano's Lemma:}}

Combining \eqref{eq:the5_Wk_rho(Xi1-Xi2)}, \eqref{eq:the5_Wk_Xi1-Xi2_F}, and \eqref{eq:the5_Wk_KL} with  the well-known generalized Fano’s lemma, we yield the following lower bound:
\begin{equation*}
\begin{aligned}
&\quad
\inf\limits_{\hat{W}^{(l)}}\sup\limits_{\{\cT^{(k)}\}_k\in\Theta'}\mathbb{P}\left\{\left\|\hat{W}^{(l)}(\hat{W}^{(l)})^{\top}-W^{(l)}(W^{(l)})^{\top}\right\|_F\geq \frac{c}{2} \sqrt{(r_l-r_s) \delta(1-\delta)}\right\}\\
&=\inf\limits_{\hat{W}^{(l)}}\sup\limits_{\{\cT^{(k)}\}_k\in\Theta'}\mathbb{P}\left\{\left\|\hat{W}^{(l)}(\hat{W}^{(l)})^{\top}-W^{(l)}(W^{(l)})^{\top}\right\|_F^2\geq \frac{c^2}{4} (r_l-r_s) \delta(1-\delta)\right\}\\
&\overset{\text{Fano}}{\geq}
1-\frac{\max_{i\neq j}KL\left(\P_i||\P_j\right)+\log2}{\log m}
\geq
1-\frac{\tilde{C}\lambda_{\min}^2\delta(r_l-r_s)+\log2}{(r_l-r_s)(n +r_s - 2r_l)\log 2}.
\end{aligned}
\end{equation*}
By utilizing the Markov inequality, we have:
\begin{equation*}
\begin{aligned}
&\quad\inf\limits_{\hat{W}^{(l)}}\sup\limits_{\{\cT^{(k)}\}_k\in\Theta'}\E\left\|\hat{W}^{(k)}(\hat{W}^{(k)})^{\top}-W^{(k)}(W^{(k)})^{\top}\right\|_F^2\\
&\gtrsim
(r_l-r_s) \delta(1-\delta)
\left(1-\frac{\tilde{C}\lambda_{\min}^2\delta(r_l-r_s)+\log2}{(r_l-r_s)(n +r_s - 2r_l)\log 2}\right).
\end{aligned}
\end{equation*}
By setting $\delta=c_1(n +r_s - 2r_l)/\lambda_{\min}^2$ for a sufficiently small constant $c_1$, we have:
\begin{equation*}
\begin{aligned}
&\quad1-\frac{\tilde{C}\lambda_{\min}^2\delta(r_l-r_s)+\log2}{(r_l-r_s)(n +r_s - 2r_l)\log 2}
=
1-\frac{\tilde{C}c_1(r_l-r_s)(n +r_s - 2r_l)+\log2}{(r_l-r_s)(n +r_s - 2r_l)\log 2}\\
&=
1-\frac{c_1\tilde{C}}{\log 2}-\frac{1}{(r_l-r_s)(n +r_s - 2r_l)}.
\end{aligned}
\end{equation*}
As long as $n$ is sufficiently large while  $4\tilde{C}c_1\leq\log 2$, we have 
$$
\frac{c_1\tilde{C}}{\log 2}+\frac{1}{(r_l-r_s)(n +r_s - 2r_l)}
\leq
\frac{1}{2},
$$
and further:
$$
1-\frac{\tilde{C}\lambda_{\min}^2\delta(r_l-r_s)+\log2}{(r_l-r_s)(n +r_s - 2r_l)\log 2}\geq \frac{1}{2}.
$$
Similarly, as long as $n/\lambda_{\min}^2$ is appropriately small, we have $\delta\leq1/2$, which further gives us $1-\delta\geq1/2$. Combining the above results gives us:
$$
\inf\limits_{\hat{W}^{(l)}}\sup\limits_{\{\cT^{(k)}\}_k\in\Theta'}\E\left\|\hat{W}^{(l)}(\hat{W}^{(l)})^{\top}-W^{(l)}(W^{(l)})^{\top}\right\|_F^2
\gtrsim
\frac{(r_l-r_s)(n +r_s - 2r_l)}{\lambda_{\min}^2}.
$$
By utilizing Assumption \ref{assum:4} and  $\lambda_{\min}=\sqrt{n}\sigma_{\min}$, we can abbreviate the right hand side of the above inequality as $1/\sigma_{\min}^2$, which completes the proof of \eqref{minimax for Wk}.

\subsubsection{Proof of the first claim in Theorem \ref{theorem:minimax}}
Similar to the second claim, we provide the specific definition for $\Theta''$. For a pre-known set of hyper-parameters $\{\{p^{(k)}, r^{(k)},L_k\}_{k\in[K]};r_s;n;K;\sigma_{\max};\sigma_{\min}\}$, we define our parameter space $\Theta''\subseteq\Theta$ as:
\begin{equation}
\label{Theta''}
\begin{aligned}
\Theta'':=\biggl\{
&\left\{\cT^{(k)}=\cC^{(k)}\times_1 U_1^{(k)}\times_2
U_2^{(k)}\times_3\cdots\times_{L_k} U_{L_k}^{(k)}\times_{L_k+1}[V,W^{(k)}]\right\}_k\Big|\ \sigma_{\max}\leq C\sigma_{\min}, \\
&\sigma_{\min}\leq\min_{k\in[K]}\{\sigma_{r_k}(\cM_{L_k+1}(\cC^{(k)})/\sqrt{n})\}\leq\cdots\leq\max_{k\in[K]}\{\sigma_{1}(\cM_{L_k+1}(\cC^{(k)})/\sqrt{n})\}\leq\sigma_{\max};\\
& V^\top V=I_{r_s}, \{\{(U_j^{(k)})^\top U_j^{(k)}=I_{r_j^{(k)}}\}_{j=1}^{L_k}, (W^{(k)})^\top W^{(k)}=I_{r_k-r_s}, (W^{(k)})^\top V=0\}_{k=1}^{K};\\
& \{\cC^{(k)}, \{U_j^{(k)}\}_{j=1}^{L_k}, W^{(k)}\}_{k=1}^{K}\ \text{are fixed parameters}
\biggr\}.
\end{aligned}
\end{equation}
To prove \eqref{minimax for V}, we only need to obtain
\begin{equation*}
\inf_{\hat{V}}\sup_{\{\cT^{(k)}\}_k\in\Theta''}\E\left\|\hat{V}\hat{V}^{\top}-VV^{\top}\right\|^2_F
\gtrsim
\frac{1}{K\sigma_{\min}^2}.
\end{equation*}
$\\$
\textbf{\textit{Discussion for $\Theta''$:}}

Define $r_p=\max_{k\in[K]}\{r_k-r_s\}$.
Note that for any $\{\cT^{(k)}\}_k\in\Theta''$, the only changing parameter is $V$. Thus, we consider the metric space $(\Theta'',\rho'')$ where $\rho''$ is defined as 
$$
\rho''\left(\{\cT_k\}_k,\{\breve{\cT}_k\}_k\right)
:=
\left\|VV^\top-\breve{V}\breve{V}^\top\right\|_F,
$$
where $V$ and $\breve{V}$ are the corresponding changing parameters for $\{\cT_k\}_k$ and $\{\breve{\cT}_k\}_k$ respectively. 
We concatenate $\{W^{(k)}\}_{k\in[K]}$ by column to acquire 
$$
W:=\left[W^{(k)}, k\in[K]\right]\in\mathbb{R}^{n\times r'},
$$
where $r'=\sum_{k\in[K]} (r_k-r_s)$.
Define the vector space spanned by the columns of $W$ as $\span(W)$ where $\text{dim}(\span(W))=r_p'$ and $r_p\leq r_p'\leq r'$ hold. Assumption \ref{assum:4} guaranties $r'< n$ and ensures that $n-r'$ remains strictly positive throughout the subsequent discussion.
We can also deduce from \eqref{Theta''} that $W\perp V$. If we denote the orthogonal complement of $W$ as $W^\perp\in\mathbb{R}^{n\times(n-r_p')}$, then there exists a $\Xi\in\mathbb{R}^{(n-r_p')\times r_s}$ such that the corresponding $V$ satisfies $V=W^\perp\Xi$. As $(W^\perp)^\top W^\perp=I_{n-r_p'}$ holds, we have:
$$
\Xi^\top\Xi
=\Xi^\top \left(W^\perp\right)^\top W^\perp\Xi
=
V^\top V=I_{r_s}.
$$
Accordingly, we have $\Xi\in\cO(n-r_p',r_s)=\{U\in\mathbb{R}^{(n-r_p')\times r_s}: U^\top U=I_{r_s}\}$. In contrast, for any $\Xi$ such that  $\Xi\in\cO(n-r_p',r_s)$, if we let $V$ be $V=W^\perp\Xi$, then we can deduce that $V\in\Theta''$, or equivalently the corresponding $\{\cT^{(k)}\}_k\in\Theta''$, holds. Define $\Phi:\cO(n-r_p',r_s)\rightarrow\Theta''$ as:
$$
\Phi(\Xi)=W^\perp\Xi=V.
$$
Moreover, for any two distinct $\{\cT^{(k)}\}_k$, $\{\breve{\cT}^{(k)}\}_k\in\Theta''$, we have:
\begin{equation*}
\begin{aligned}
\rho_{\cO}\left(\Xi,\breve{\Xi}\right)
&=
\left\|\Xi\Xi^\top
-
\breve{\Xi}\breve{\Xi}^\top\right\|_F
=
\left\|W^\perp\left[\Xi\Xi^\top
-
\breve{\Xi}\breve{\Xi}^\top\right]\left(W^\perp\right)^\top\right\|_F\\
&=
\left\|\left(W^\perp\Xi\right)\left(W^\perp\Xi\right)^\top
-
\left(W^\perp\breve{\Xi}\right)\left(W^\perp\breve{\Xi}\right)^\top\right\|_F\\
&=
\left\|VV^\top-\breve{V}\breve{V}^\top\right\|_F\\
&=\rho''\left(\{\cT_k\}_k,\{\breve{\cT}_k\}_k\right).
\end{aligned}
\end{equation*}
Accordingly, $\Phi$ is an isometric isomorphism map between $(\Theta'',\rho'')$ and $(\cO(n-r_p',r_s),\rho_\cO)$, i.e., $(\Theta'',\rho'')\cong(\cO(n-r_p',r_s),\rho_\cO)$. Therefore, we can again transform our attention from $\Theta''$ to $\cO(n-r_p',r_s)$ and construct a local packing set of $\cO(n-r_p',r_s)$.
Now, we partition $\Theta''$ into disjoint subsets $\{\Theta''_r\}_{r\in[r']\backslash[r_p]}$:
\begin{equation}
\label{Theta_r''}
\begin{aligned}
\Theta_r'':=\biggl\{
&\left\{\cT^{(k)}=\cC^{(k)}\times_1 U_1^{(k)}\times_2
U_2^{(k)}\times_3\cdots\times_{L_k} U_{L_k}^{(k)}\times_{L_k+1}[V,W^{(k)}]\right\}_k\Big|\ \sigma_{\max}\leq C\sigma_{\min}, \\
&\sigma_{\min}\leq\min_{k\in[K]}\{\sigma_{r_k}(\cM_{L_k+1}(\cC^{(k)})/\sqrt{n})\}\leq\cdots\leq\max_{k\in[K]}\{\sigma_{1}(\cM_{L_k+1}(\cC^{(k)})/\sqrt{n})\}\leq\sigma_{\max}\\
& V^\top V=I_{r_s}, \{\{(U_j^{(k)})^\top U_j^{(k)}=I_{r_j^{(k)}}\}_{j=1}^{L_k}, (W^{(k)})^\top W^{(k)}=I_{r_k-r_s}, (W^{(k)})^\top V=0\}_{k=1}^{K};\\
& \dim(W)=r_p'=r; \{\cC^{(k)}, \{U_j^{(k)}\}_{j=1}^{L_k}, W^{(k)}\}_{k=1}^{K}\ \text{are fixed parameters}
\biggr\}\subseteq\Theta''\subseteq\Theta.
\end{aligned}
\end{equation}
For each hyper-parameter $r$, we can also build an isometric isomorphism map $\Phi_r$ between $(\Theta_r'',\rho'')$ and $(\cO(n-r,r_s),\rho_\cO)$ based on the same discussion as above, i.e., $(\Theta_r'',\rho'')\cong(\cO(n-r,r_s),\rho_\cO)$ also holds. If we intend to build a desired lower bound for $\Theta''$,  we can delve into each $\Theta_r''$ and use $\Phi_r$ to bridge $\Theta_r''$ and $\cO(n-r,r_s)$. By  constructing the corresponding $\epsilon$-local packing set of $\cO(n-r,r_s)$ and utilizing the isometric isomorphism $\Phi_r$, we can directly transplant the geometric properties of the local packing set of $\cO(n-r,r_s)$ into  $\Theta_r''$. Then, the lower bound of $\Theta''$ can be simply obtained by further lower bounding  $\Theta_r''$, as the following holds:
\begin{equation*}
\inf_{\hat{V}}\sup_{\{\cT^{(k)}\}_k\in\Theta''}\E\left\|\hat{V}\hat{V}^{\top}-VV^{\top}\right\|^2_F
\gtrsim
\inf_{\hat{V}}\sup_{\{\cT^{(k)}\}_k\in\Theta_{r}''}\E\left\|\hat{V}\hat{V}^{\top}-VV^{\top}\right\|^2_F.
\end{equation*}
The subsequent discussion narrows down to each $\Theta_r''$ and selects the corresponding $r$ that owns the largest lower bound.

$\\$
\textbf{\textit{Local packing for $\Theta_r''$:}}

Fix  $r\in[r']\backslash[r_p-1]$ and let the $\epsilon$ value be identical among each $\Theta_r''$. Similarly, we consider $\cO(n-r-r_s, r_s)$ first. For any $\epsilon>0$, $0\leq \min\{r_s,r\}\leq \max\{r_s,r\}\leq n$, and $n-r\geq r_s$, we can again utilize Proposition \ref{prop:packing} to acquire the following result by taking $q=2$:
\[
\left( \frac{c}{\epsilon} \right)^{r_s(n - r - 2r_s)}
\leq 
D(\cO(n-r-r_s,r_s), \rho_{\cO}, \epsilon \sqrt{r_s}) 
\leq 
\left( \frac{C}{\epsilon} \right)^{r_s(n - r - 2r_s)}.
\]
Based on the definition of $\epsilon$-packing number, there exists a  $\dot{\cO}(n-r-r_s,r_s)\subseteq\cO(n-r-r_s,r_s)$ such that $|\dot{\cO}(n-r-r_s,r_s)|\geq\left( c/\epsilon \right)^{r_s(n - r - 2r_s)}$ while for any two distinct $\dot{\Xi}_{r,1}$, $\dot{\Xi}_{r,2}\in\dot{\cO}(n-r-r_s,r_s)$, we have:
$$
\rho_{\cO}\left(\dot{\Xi}_{r,1},\dot{\Xi}_{r,2}\right)
=
\left\|\dot{\Xi}_{r,1}\dot{\Xi}_{r,1}^\top
-
\dot{\Xi}_{r,2}\dot{\Xi}_{r,2}^\top\right\|_F
\geq
\sqrt{r_s}\epsilon.
$$
By setting $\epsilon=c/2$, which is identical for every $\Theta_r''$, we have:
$$
\left|\dot{\cO}(n-r-r_s,r_s)\right|
\geq
2^{r_s(n - r - 2r_s)},
\ \text{and}\ \rho_{\cO}\left(\dot{\Xi}_{r,1},\dot{\Xi}_{r,2}\right)
\geq
\frac{c\sqrt{r_s}}{2}.
$$
For a fixed $\delta_r>0$ and every $\dot{\Xi}_r\in\dot{\cO}(n-r-r_s,r_s)$, we define:
$$\ddot{\Xi}_r = \begin{pmatrix} \sqrt{1 - \delta_r} I_{r_s} \\ \sqrt{\delta_r} \, \dot{\Xi}_r \end{pmatrix} \in \mathbb{R}^{(n-r) \times r_s}.$$
Then we have:
$\ddot{\Xi}_r^\top \ddot{\Xi}_r=(1-\delta_r)I_{r_s}+\delta_r\dot{\Xi}_r^\top \dot{\Xi}_r=I_{r_s}$, which implies that $\ddot{\Xi}_r\in\cO(n-r,r_s)$ also holds. If we collect all $\ddot{\Xi}_r$ as $\ddot{\cO}(n-r,r_s)$, then we have
$$
\left|\ddot{\cO}(n-r,r_s)\right|
=
\left|\dot{\cO}(n-r-r_s,r_s)\right|
\geq
2^{r_s(n - r - 2r_s)}.
$$
Accordingly, for any two distinct $\ddot{\Xi}_{r,1}$, $\ddot{\Xi}_{r,2}\in\ddot{\cO}(n-r,r_s)$, we have:
$$
\ddot{\Xi}_{r,1}\ddot{\Xi}_{r,1}^\top 
= \begin{pmatrix} \sqrt{1-\delta_r} I_{r_s} \\ \sqrt{\delta_r} \dot{\Xi}_{r,1} \end{pmatrix} 
\left( \sqrt{1-\delta_r} I_{r_s},\ \sqrt{\delta_r} \dot{\Xi}_{r,1}^\top \right) = \begin{pmatrix} (1-\delta_r) I_{r_s} & \sqrt{\delta_r(1-\delta_r)} \dot{\Xi}_{r,1}^\top \\ \sqrt{\delta_r(1-\delta_r)} \dot{\Xi}_{r,1} & \delta_r \dot{\Xi}_{r,1}\dot{\Xi}_{r,1}^\top
\end{pmatrix}.
$$
Similar discussion for $\ddot{\Xi}_{r,2}$ leads us to:
$$
\left\|\ddot{\Xi}_{r,1}\ddot{\Xi}_{r,1}^\top 
-\ddot{\Xi}_{r,2}\ddot{\Xi}_{r,2}^\top \right\|_F
=
\left\|
\begin{pmatrix} 
0 & \sqrt{\delta_r(1-\delta_r)} \left(\dot{\Xi}_{r,1}-\dot{\Xi}_{r,2}\right)^\top \\ \sqrt{\delta_r(1-\delta_r)} \left(\dot{\Xi}_{r,1}-\dot{\Xi}_{r,2}\right) & \delta_r \left(\dot{\Xi}_{r,1}\dot{\Xi}_{r,1}^\top
-
\dot{\Xi}_{r,2}\dot{\Xi}_{r,2}^\top\right) \end{pmatrix}
\right\|_F.
$$
Based on the definition of $\|\cdot\|_F$ and the triangle inequality, we have:
$$
\rho_{\cO}\left(\ddot{\Xi}_{r,1},\ddot{\Xi}_{r,2}\right)
\geq
\sqrt{\delta_r(1-\delta_r)} \left\|\dot{\Xi}_{r,1}-\dot{\Xi}_{r,2}\right\|_F
\geq
\frac{\sqrt{\delta_r(1-\delta_r)}}{2}
\left\|\dot{\Xi}_{r,1}\dot{\Xi}_{r,1}^\top
-
\dot{\Xi}_{r,2}\dot{\Xi}_{r,2}^\top\right\|_F.
$$
Based on the cardinality lower bound of $\dot{\cO}(n-r-r_s,r_s)$, we can further deduce that:
\begin{equation}
\label{eq:the5_V_rho(Xi1-Xi2)}
\rho_{\cO}\left(\ddot{\Xi}_{r,1},\ddot{\Xi}_{r,2}\right)
\geq
\frac{\sqrt{\delta_r(1-\delta_r)}}{2}
\left\|\dot{\Xi}_{r,1}\dot{\Xi}_{r,1}^\top
-
\dot{\Xi}_{r,2}\dot{\Xi}_{r,2}^\top\right\|_F
\geq
\frac{\tilde{c}\sqrt{r_s\delta_r(1-\delta_r)}}{2}.
\end{equation}
Besides, we also have:
\begin{equation}
\label{eq:the5_V_Xi1-Xi2_F}
\left\|\ddot{\Xi}_{r,1}
-\ddot{\Xi}_{r,2}\right\|_F
\leq
\sqrt{\delta_r}
\left\|\dot{\Xi}_{r,1}
-\dot{\Xi}_{r,2}\right\|_F
\leq
\sqrt{2\delta_r r_s}.
\end{equation}
Now we return to the discussion of $\Theta_r''$.
Based on $\Phi_r$, we can build a subset of $\Theta_r''$ and $\ddot{\Theta}_r''\subseteq\Theta_r''$, such that:
$$
\ddot{\Theta}_r''
:=
\left\{
\left\{
\ddot{\cT}_{r,i}^{(k)}
\right\}_{k\in[K]}
\right\}_{i\in[m_r]}, |\ddot{\Theta}_r''|=m_r.
$$
For each $k\in[K]$ and $i\in[m_r]$, we have:
\begin{equation*}
\begin{aligned}
\ddot{\cT}_{r,i}^{(k)}
&=\cC^{(k)}\times_1 U_1^{(k)}\times_2
U_2^{(k)}\times_3\cdots\times_{L_k} U_{L_k}^{(k)}\times_{L_k+1}[\ddot{V}_{r,i},W^{(k)}]\\
&:=
\cC^{(k)}\times_1 U_1^{(k)}\times_2
U_2^{(k)}\times_3\cdots\times_{L_k} U_{L_k}^{(k)}\times_{L_k+1}[W^\perp \ddot{\Xi}_{r,i},W^{(k)}],
\end{aligned}
\end{equation*}
where $\{\ddot{\Xi}_{r,i}\}_i$ consists $\ddot{\cO}(n-r,r_s)$. Thus, according to the definition in $\Theta_r''$ \eqref{Theta_r''}, we know that $\{
\ddot{\cT}_{r,i}^{(k)}
\}_{k\in[K]}$ and $\ddot{\Xi}_{r,i}$ hold a one-to-one correspondence as 
$i$ varies. Accordingly, based on the lower bound for $|\ddot{\cO}(n-r,r_s)|$, we have that $|\ddot{\Theta}_r''|=m_r\geq2^{r_s(n -r - 2r_s)}$ and each of the two distinct $\{
\ddot{\cT}_{r,1}^{(k)}
\}_{k\in[K]}$, $\{
\ddot{\cT}_{r,2}^{(k)}
\}_{k\in[K]}\in\ddot{\Theta}_r''$ satisfy \eqref{eq:the5_V_rho(Xi1-Xi2)} and \eqref{eq:the5_V_Xi1-Xi2_F}. Till now, we have constructed the ideal local packing set, $\ddot{\Theta}_r''$, for $\Theta_r''$.

$\\$
\textbf{\textit{Kullback-Leibler (KL) distance for $\Theta_r''$:}}

Recall that $\{
\ddot{\cT}_{r,1}^{(k)}
\}_{k\in[K]}$, $\{
\ddot{\cT}_{r,2}^{(k)}
\}_{k\in[K]}\in\ddot{\Theta}_r''$ are two sets of different parameters in $\ddot{\Theta}_r''$. Moreover, we define $\{
\ddot{\cX}_{r,1}^{(k)}
\}_{k\in[K]}$ and $\{
\ddot{\cX}_{r,2}^{(k)}
\}_{k\in[K]}$ as two sets of tensors generated by $\{
\ddot{\cT}_{r,1}^{(k)}
\}_{k\in[K]}$, $\{
\ddot{\cT}_{r,2}^{(k)}
\}_{k\in[K]}$, respectively, i.e.,
$$
\ddot{\cX}_{r,1}^{(k)}=\ddot{\cT}_{r,1}^{(k)}+\cE^{(k)},\ \text{and}\ \ddot{\cX}_{r,2}^{(k)}=\ddot{\cT}_{r,2}^{(k)}+\tilde{\cE}^{(k)}\ \text{for each}\ k\in[K],
$$
where $\{\cE^{(k)},\tilde{\cE}^{(k)}\}_{k\in[K]}$ are all i.i.d. noise tensors with each element following a standard normal distribution.
Further, we write $\P_{r,1}$ and $\P_{r,2}$ as the population distributions of $\{
\ddot{\cX}_{r,1}^{(k)}
\}_{k\in[K]}$ and $\{
\ddot{\cX}_{r,2}^{(k)}
\}_{k\in[K]}$. We also write $\P_{r,1}^{(k)}$ and $\P_{r,2}^{(k)}$ as the population distributions for each $
\ddot{\cX}_{r,1}^{(k)}$ and $\ddot{\cX}_{r,2}^{(k)}$, respectively. Clearly, based on the simple normal distribution property, we have:
$$
\vec\left(\ddot{\cX}_{r,i}^{(k)}\right)\sim
N\left(\vec\left(\ddot{\cT}_{r,i}^{(k)}\right),I_{np_1^{(k)}p_2^{(k)}\cdots p_{L_k}^{(k)}}\right),\ \text{for}\ i=1,2.
$$
Then, the KL-distance of $\P_{r,1}$ and $\P_{r,2}$ can be deduced by using the independence between $K$ tensors as follows:
$$
KL\left(\P_{r,1}||\P_{r,2}\right)
=
\sum_{k=1}^K
KL\left(\P_{r,1}^{(k)}||\P_{r,2}^{(k)}\right).
$$
For each $k\in[K]$, we know that:
\begin{equation}
\label{eq:the5_V_KL}
\begin{aligned}
KL\left(\P_{r,1}^{(k)}||\P_{r,2}^{(k)}\right)
&\underset{\text{in \cite{zhang2018tensor}}}
{\overset{\text{the bottom of page 34}}{=}}
\frac{1}{2}
\left\|\ddot{\cT}_{r,1}^{(k)}-\ddot{\cT}_{r,2}^{(k)}\right\|_F^2\\
&=
\frac{1}{2}
\left\|\cC^{(k)}\times_1 U_1^{(k)}\times_2
\cdots\times_{L_k} U_{L_k}^{(k)}\times_{L_k+1}[W^\perp(\ddot{\Xi}_{r,1}-\ddot{\Xi}_{r,2}),0_{n\times(r_k-r_s)}]\right\|_F^2\\
&=
\frac{1}{2}
\left\|
[W^\perp(\ddot{\Xi}_{r,1}-\ddot{\Xi}_{r,2}),0_{n\times(r_k-r_s)}]\cdot\cM_{L_k+1}(\cC^{(k)})\cdot \left(U_{\otimes}^{(k)}\right)^\top
\right\|_F^2\\
&\leq \tilde{C}_1\lambda_{\max}^2\left\|W^\perp\left(\ddot{\Xi}_{r,1}-\ddot{\Xi}_{r,2}\right)\right\|_F^2\\
&\leq
\tilde{C}_2\lambda_{\min}^2\left\|\ddot{\Xi}_{r,1}-\ddot{\Xi}_{r,2}\right\|_F^2\\
&\overset{\eqref{eq:the5_V_Xi1-Xi2_F}}{\leq}
2\tilde{C}_2\lambda_{\min}^2\delta_r r_s,
\end{aligned}
\end{equation}
where $U_{\otimes}^{(k)}$ is defined below \eqref{Ek} and the first inequality holds by the definition of $\lambda_{\max}$. The second inequality holds due to the signal constraint in $\Theta''$ from \eqref{Theta''}. Further, we have:
$$
KL\left(\P_{r,1}||\P_{r,2}\right)
\leq
2\tilde{C}_2K\lambda_{\min}^2\delta_r r_s
$$

$\\$
\textbf{\textit{Lower bound for $\Theta_r''$:}}

Combining the above discussion with  the well known generalized Fano’s lemma, we yield the following lower bound:
\begin{equation*}
\begin{aligned}
&\quad
\inf\limits_{\hat{V}}\sup\limits_{\{\cT^{(k)}\}_k\in\Theta_r''}\mathbb{P}\left\{\left\|\hat{V}\hat{V}^{\top}-VV^{\top}\right\|_F\geq \frac{c}{2} \sqrt{r_s \delta_r(1-\delta_r)}\right\}\\
&=\inf\limits_{\hat{V}}\sup\limits_{\{\cT^{(k)}\}_k\in\Theta_r''}\mathbb{P}\left\{\left\|\hat{V}\hat{V}^{\top}-VV^{\top}\right\|_F^2\geq \frac{c^2}{4} r_s \delta_r(1-\delta_r)\right\}\\
&\overset{\text{Fano}}{\geq}
1-\frac{\max_{i\neq j}KL\left(\P_{r,i}||\P_{r,j}\right)+\log2}{\log m_r}
\geq
1-\frac{\tilde{C}\lambda_{\min}^2K\delta_r r_s+\log2}{r_s(n - r - 2r_s)\log 2}.
\end{aligned}
\end{equation*}
By utilizing the Markov inequality, we have:
\begin{equation*}
\begin{aligned}
&\quad\inf_{\hat{V}}\sup_{\{\cT^{(k)}\}_k\in\Theta_r''}\E\left\|\hat{V}\hat{V}^{\top}-VV^{\top}\right\|^2_F
\gtrsim
r_s \delta_r(1-\delta_r)
\left(1-\frac{\tilde{C}K\lambda_{\min}^2\delta_r r_s+\log2}{r_s(n - r - 2r_s)\log 2}\right).
\end{aligned}
\end{equation*}
By setting $\delta_r=c_1(n -r - 2r_s)/K\lambda_{\min}^2$ for a sufficiently small constant $c_1$, we have:
\begin{equation*}
\begin{aligned}
&\quad1-\frac{\tilde{C}K\lambda_{\min}^2\delta_r r_s+\log2}{r_s(n -r - 2r_s)\log 2}
=
1-\frac{\tilde{C}c_1r_s(n -r - 2r_s)+\log2}{r_s(n -r - 2r_s)\log 2}=
1-\frac{c_1\tilde{C}}{\log 2}-\frac{1}{r_s(n -r - 2r_s)}.
\end{aligned}
\end{equation*}
According to the similar discussion as $\hat{W}^{(l)}$, we have:
\begin{equation*}
\inf_{\hat{V}}\sup_{\{\cT^{(k)}\}_k\in\Theta_r''}\E\left\|\hat{V}\hat{V}^{\top}-VV^{\top}\right\|^2_F
\gtrsim
\frac{r_s(n-r-2r_s)}{K\lambda_{\min}^2}.
\end{equation*}

$\\$
\textbf{\textit{Lower bound for $\Theta''$ and $\Theta$:}}

Again, by noting the fact that $\Theta_r''\subseteq\Theta''$ for each $r\in[r']\backslash[r_p-1]$, we have
\begin{equation*}
\inf_{\hat{V}}\sup_{\{\cT^{(k)}\}_k\in\Theta''}\E\left\|\hat{V}\hat{V}^{\top}-VV^{\top}\right\|^2_F
\geq
\inf_{\hat{V}}\sup_{\{\cT^{(k)}\}_k\in\Theta_r''}\E\left\|\hat{V}\hat{V}^{\top}-VV^{\top}\right\|^2_F.
\end{equation*}
Accordingly, we can acquire the following result by taking the maximum among $r$:
\begin{equation*}
\begin{aligned}
\inf_{\hat{V}}\sup_{\{\cT^{(k)}\}_k\in\Theta''}\E\left\|\hat{V}\hat{V}^{\top}-VV^{\top}\right\|^2_F
&\geq
\max_{r\in[r']\backslash[r_p]}\left\{\inf_{\hat{V}}\sup_{\{\cT^{(k)}\}_k\in\Theta_r''}\E\left\|\hat{V}\hat{V}^{\top}-VV^{\top}\right\|^2_F\right\}\\
&\gtrsim\frac{r_s(n-r_p-2r_s)}{K\lambda_{\min}^2}
\end{aligned}
\end{equation*}
where $r_p=\max\{r_k-r_s\}$. By realizing the $\Theta''\subseteq\Theta$, i.e., the following inequality holds
\begin{equation*}
\inf_{\hat{V}}\sup_{\{\cT^{(k)}\}_k\in\Theta}\E\left\|\hat{V}\hat{V}^{\top}-VV^{\top}\right\|^2_F
\geq
\inf_{\hat{V}}\sup_{\{\cT^{(k)}\}_k\in\Theta''}\E\left\|\hat{V}\hat{V}^{\top}-VV^{\top}\right\|^2_F
\gtrsim
\frac{r_s(n-r_p-2r_s)}{K\lambda_{\min}^2}.
\end{equation*}
We can complete the proof of the first claim of Theorem \ref{theorem:minimax} by again using $\lambda_{\min}=\sqrt{n}\sigma_{\min}$ to abbreviate the right hand side of the above inequality as $1/K\sigma_{\min}^2$.

\subsection{Proof of Corollary \ref{cor:rank consistency}}
\label{supp:rank consistency}
Since $V^\top W^{(k)}=0$ for every $k\in[K]$, we have
$P_VP_W^{(k)}
=
P_W^{(k)}P_V
=
0_{n\times n}.$
Consequently,
\[
\begin{aligned}
\Sigma_0
=
P_V+
P_V^\perp
\left(
\frac{1}{K}\sum_{k=1}^K P_W^{(k)}
\right)
P_V^\perp 
=
P_V+\frac{1}{K}\sum_{k=1}^K P_W^{(k)}
=
\frac{1}{K}\sum_{k=1}^K P^{(k)},
\end{aligned}
\]
where $\Sigma_0$ is defined in Step $1$ of the proof of Theorem
\ref{theorem:upper bound}. It follows immediately that
$
\Sigma_0V=V.
$
Thus, every vector in $\span(V)$ is an eigenvector of $\Sigma_0$
corresponding to the eigenvalue $1$.

For every unit vector $x$ orthogonal to $\span(V)$, we further have
\[
\begin{aligned}
x^\top\Sigma_0x
=
x^\top
\left(
\frac{1}{K}\sum_{k=1}^K P_W^{(k)}
\right)x 
\overset{\text{Assumption \ref{assum:3}}}{\leq}
1-\theta.
\end{aligned}
\]
Therefore,
$
\lambda_1(\Sigma_0)
=
\cdots
=
\lambda_{r_s}(\Sigma_0)
=
1,
$
and
$
\lambda_{r_s+1}(\Sigma_0)\leq1-\theta
$
whenever $r_s<n$,

Define $\omega_n$ by
$\omega_n
:=
\|\tilde{\Sigma}_0-\Sigma_0\|_{\op}.$
According to Step $2$ of the proof of Theorem
\ref{theorem:upper bound}, we have
\[
\begin{aligned}
\tilde{\Sigma}_0-\Sigma_0
=
\frac{1}{K}\sum_{k=1}^K
\left(\tilde P^{(k)}-P^{(k)}\right)
=
\frac{1}{K}\sum_{k=1}^K\Delta^{(k)},
\end{aligned}
\]
and hence
\[
\omega_n
=
\left\|
\frac{1}{K}\sum_{k=1}^K\Delta^{(k)}
\right\|_{\op}.
\]
Clearly we have
$0\preceq\tilde{\Sigma}_0\preceq I_n.$

We first consider the case $r_s\geq1$. For every
$\ell=1,\cdots,r_s$, Weyl's inequality gives us
$\hat\lambda_\ell
\geq
\lambda_\ell(\Sigma_0)-\omega_n 
=
1-\omega_n.$
Consequently, on the event $\{\omega_n<\zeta\}$, we have
$1-\hat\lambda_\ell<\zeta,
\ 
\ell=1,\cdots,r_s,$
so that all the first $r_s$ eigenvalues of $\tilde{\Sigma}_0$ are retained by the truncated
criterion in \eqref{eq:shared-rank-selection}. 

Another application of Weyl's inequality yields
\[
\begin{aligned}
\hat\lambda_{r_s+1}
\leq
\lambda_{r_s+1}(\Sigma_0)+\omega_n 
\leq
1-\theta+\omega_n.
\end{aligned}
\]
Therefore, on the event $\{\omega_n<\theta-\zeta\}$, we have
$
\zeta<1-\hat\lambda_{r_s+1}.
$
Thus, the $(r_s+1)$th eigenvalue is excluded by the truncated
criterion.
Combining the preceding two conclusions, whenever
$\omega_n
<
\min\{\zeta,\theta-\zeta\},$
we have
$\hat r_s=r_s.$

It remains to show that $\{\omega_n<\min\{\zeta,\theta-\zeta\}\}$ holds with probability tending to
one. By utilizing the result in Step $5$ of the proof of Corollary
\ref{theorem:upper bound}, we have
\[
\begin{aligned}
\|\omega_n\|_{\psi_1}
=
\left\|
\left\|
\tilde{\Sigma}_0-\Sigma_0
\right\|_{\op}
\right\|_{\psi_1}
\leq
\left\|
\left\|
\tilde{\Sigma}_0-\Sigma_0
\right\|_F
\right\|_{\psi_1}
\lesssim
\sqrt{\frac{n}{K\lambda_{\min}^2}}
+
\frac{n}{\lambda_{\min}^2}.
\end{aligned}
\]
According to the condition of Corollary \ref{cor:rank consistency}, we can deduce that $1/K\sigma_{\min}^2+1/\sigma_{\min}^4=o(1)$. Accordingly, we have 
$n/K\lambda^2_{\min}+n^2/\lambda^4_{\min}=o(1)$ by  $\lambda_{\min}=\sqrt{n}\sigma_{\min}$. As such, 
$
\|\omega_n\|_{\psi_1}=o(1)
$ holds. By the definition of the $\psi_1$ norm, we have $\omega_n=o_p(1).$
As $\zeta$ is fixed and satisfies $0<\zeta<\theta$, we have $\min\{\zeta,\theta-\zeta\}>0.$
Consequently,
\[
\P\left(
\omega_n<
\min\{\zeta,\theta-\zeta\}
\right)
\to1.
\]
It follows that, for $r_s\geq1$, we have
$\P(\hat r_s=r_s)\to1.$

We finally consider the case $r_s=0$. In this case,
$P_V=0_{n\times n}$ and
\[
\Sigma_0
=
\frac{1}{K}\sum_{k=1}^K P_W^{(k)}
=
\frac{1}{K}\sum_{k=1}^K P^{(k)}.
\]
By Assumption \ref{assum:3},
$
\lambda_1(\Sigma_0)\leq1-\theta.
$
Weyl's inequality  gives
$\hat\lambda_1
\leq
\lambda_1(\Sigma_0)+\omega_n
\leq
1-\theta+\omega_n.$
Hence,
$1-\hat\lambda_1
\geq
\theta-\omega_n.$
On the event $\{\omega_n<\theta-\zeta\}$, it follows that
$\hat\lambda_1<1-\zeta.$
Thus, none of the eigenvalues in the search range is retained by
\eqref{eq:shared-rank-selection}, and consequently
$\hat r_s=0.$
Since $\omega_n=o_p(1)$ and $\theta-\zeta>0$, we have
\[
\P(\omega_n<\theta-\zeta)\to1.
\]
Therefore,
$\P(\hat r_s=0)\to1.$
Combining the cases $r_s\geq1$ and $r_s=0$ proves that
\[
\P(\hat r_s=r_s)\to1,
\]
which completes the proof of Corollary 
\ref{cor:rank consistency}.

\section{Technical Lemmas}
\label{tec lemma}
\begin{lemma}[Signal strength]\label{lem:signal}
Under \eqref{Mk} and Assumption \ref{assum:2}, we have $\sigma_{r_k}(M^{(k)})\gtrsim \lambda_{\min}$ holding for each $k\in[K]$.
\end{lemma}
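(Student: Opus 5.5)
We must show $\sigma_{r_k}(M^{(k)})\gtrsim\lambda_{\min}$, where $M^{(k)}=T^{(k)}\hat U_\otimes^{(k)}$. The plan is to factor $M^{(k)}$ explicitly and lower bound each factor's smallest relevant singular value.

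\textbf{Step 1: factor the signal.} By the Tucker form \eqref{TJIVE model} and Lemma 4 of \cite{zhang2018tensor}, the mode-$(L_k+1)$ unfolding of $\cT^{(k)}$ is
\[
T^{(k)}=[V,W^{(k)}]\,\cM_{L_k+1}(\cC^{(k)})\,(U_\otimes^{(k)})^\top .
\]
(Up to the cyclic ordering convention, $U_\otimes^{(k)}$ may be a permuted Kronecker product. This only permutes columns consistently with $\hat U_\otimes^{(k)}$ and does not affect singular values.) Therefore
\[
M^{(k)}=[V,W^{(k)}]\,\cM_{L_k+1}(\cC^{(k)})\,(U_\otimes^{(k)})^\top\hat U_\otimes^{(k)} .
\]
Since $[V,W^{(k)}]$ has orthonormal columns, left multiplication by it preserves singular values. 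Hence
\[
\sigma_{r_k}(M^{(k)})=\sigma_{r_k}\bigl(C_k\,G_k\bigr),
\]
where $C_k:=\cM_{L_k+1}(\cC^{(k)})\in\reals^{r_k\times r_\Pi^{(k)}}$ and $G_k:=(U_\otimes^{(k)})^\top\hat U_\otimes^{(k)}$ is the $r_\Pi^{(k)}\times r_\Pi^{(k)}$ square matrix.

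\textbf{Step 2: product inequality.} Because $C_k$ has full row rank $r_k$ and $G_k$ is square, the standard bound $\sigma_{r_k}(C_kG_k)\ge\sigma_{r_k}(C_k)\,\sigma_{\min}(G_k)$ applies. To see it, write $x^\top C_kG_k$ for $x\in\reals^{r_k}$; then $\|G_k^\top C_k^\top x\|\ge\sigma_{\min}(G_k)\|C_k^\top x\|$, and the min–max characterization finishes the argument. By definition $\sigma_{r_k}(C_k)\ge\lambda_{\min}$. It therefore remains to bound $\sigma_{\min}(G_k)$ from below by a constant.

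\textbf{Step 3: the Kronecker factor (main point).} By the mixed-product property,
\[
G_k=\bigotimes_{j}(U_j^{(k)})^\top\hat U_j^{(k)},
\]
and the singular values of a Kronecker product are the products of the factors' singular values. Hence
\[
\sigma_{\min}(G_k)=\prod_{j=1}^{L_k}\sigma_{\min}\bigl((U_j^{(k)})^\top\hat U_j^{(k)}\bigr).
\]
For each factor, $\sigma_{\min}\bigl((U_j^{(k)})^\top\hat U_j^{(k)}\bigr)=\cos\theta_{\max}=\sqrt{1-\|\sin\Theta(\hat U_j^{(k)},U_j^{(k)})\|_{\op}^2}$, which is at least $\sqrt{3}/2$ by Assumption \ref{assum:2}. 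Here the estimator $\hat U_j^{(k)}$ is taken to have the same column count $r_j^{(k)}$, as the $\sin\Theta$ definition presumes. Consequently $\sigma_{\min}(G_k)\ge(\sqrt3/2)^{L_k}$. This is a positive constant because $L_k$ is bounded under Assumption \ref{assum:4}.

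Combining Steps 1–3 gives $\sigma_{r_k}(M^{(k)})\ge(\sqrt3/2)^{L_k}\lambda_{\min}\gtrsim\lambda_{\min}$ uniformly in $k$. The only delicate points are the bookkeeping between the cyclic matricization order and the Kronecker ordering, and the requirement that $G_k$ be square. If $\hat U_j^{(k)}$ had more columns than $r_j^{(k)}$, as in the simulations, I would instead use $\sigma_{r_\Pi}(G_k^\top)$, which equals the same product of cosines. Then $\sigma_{r_k}(C_kG_k)\ge\sigma_{r_k}(C_k)\,\sigma_{r_\Pi}(G_k)$ still holds, since $G_k$ has full row rank.
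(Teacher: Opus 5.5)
Your proposal is correct and follows essentially the paper's route. You factor $M^{(k)}=[V,W^{(k)}]\,\cM_{L_k+1}(\cC^{(k)})\,(U_\otimes^{(k)})^\top\hat U_\otimes^{(k)}$, apply $\sigma_{r_k}(AB)\ge\sigma_{r_k}(A)\,\sigma_{\min}(B)$, split $(U_\otimes^{(k)})^\top\hat U_\otimes^{(k)}$ into the Kronecker product of the $(U_j^{(k)})^\top\hat U_j^{(k)}$, and bound each factor by $\sqrt{3}/2$ using Assumption \ref{assum:2} together with bounded $L_k$; the only difference is that you strip the orthonormal factor $[V,W^{(k)}]$ directly, whereas the paper passes through $\sigma_{r_k}(T^{(k)}U_\otimes^{(k)})\ge\sigma_{r_k}(T^{(k)})\,\sigma_{\min}(U_\otimes^{(k)})$, which amounts to the same thing.
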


\begin{proof}
Throughout the whole proof of Lemma \ref{lem:signal}, we only discuss the $k$-th local machine.
Based on the definitions of $T^{(k)}=\cM_{L_k+1}(\cT^{(k)})$ and $\hat{U}_{\otimes}^{(k)}:=\hat{U}^{(k)}_{1}\otimes\cdots\otimes \hat{U}^{(k)}_{L_k}$, we have the following facts:
\begin{equation*}
\begin{aligned}
&\quad\sigma_{r_k}(M^{(k)})
\overset{\eqref{Mk}}{=}\sigma_{r_k}\left(T^{(k)}\left(\hat{U}^{(k)}_{1}\otimes\cdots\otimes \hat{U}^{(k)}_{L_k}\right)\right)
=
\sigma_{r_k}\left(\cM_{L_k+1}(\cT^{(k)})\hat{U}_{\otimes}^{(k)}\right)\\
&\overset{\eqref{TJIVE model}}{=}\sigma_{r_k}\left[\cM_{L_k+1}\left(\cC^{(k)} \times_1U^{(k)}_{1}\cdots \times_{L_k} U^{(k)}_{L_k}\times_{L_k+1}[V,W^{(k)}]\right)\hat{U}_{\otimes}^{(k)}\right].
\end{aligned}
\end{equation*}
Then, $U_{\otimes}^{(k)}:= U^{(k)}_{1}\otimes\cdots\otimes U^{(k)}_{L_k}$. The column orthonormal property for each $U_j^{(k)}$ give us:
\begin{equation*}
\begin{aligned}
&\quad
\left[
\cM_{L_k+1}\left(
\cC^{(k)}
\times_1 U_1^{(k)}
\cdots
\times_{L_k} U_{L_k}^{(k)}
\times_{L_k+1}[V,W^{(k)}]
\right)
\right]\hat U_{\otimes}^{(k)}
\\
&=
[V,W^{(k)}]\,
\cM_{L_k+1}(\cC^{(k)})
(U_{\otimes}^{(k)})^\top
\hat U_{\otimes}^{(k)}
\\
&=
[V,W^{(k)}]\,
\cM_{L_k+1}(\cC^{(k)})
(U_{\otimes}^{(k)})^\top
U_{\otimes}^{(k)}
(U_{\otimes}^{(k)})^\top
\hat U_{\otimes}^{(k)}
\\
&=
\cM_{L_k+1}(\cT^{(k)})
U_{\otimes}^{(k)}
(U_{\otimes}^{(k)})^\top
\hat U_{\otimes}^{(k)}
\\
&=
T^{(k)}
U_{\otimes}^{(k)}
(U_{\otimes}^{(k)})^\top
\hat U_{\otimes}^{(k)}.
\end{aligned}
\end{equation*}
Accordingly, we have:
$$
\sigma_{r_k}(M^{(k)})=\sigma_{r_k}(T^{(k)}U_{\otimes}^{(k)}(U_{\otimes}^{(k)})^\top\hat{U}_{\otimes}^{(k)}).
$$
Based on the above equality, we shall have
\begin{equation*}
\begin{aligned}
\sigma_{r_k}&(M^{(k)})=\sigma_{r_k}\left(T^{(k)}U_{\otimes}^{(k)}(U_{\otimes}^{(k)})^{\top}\hat{U}_{\otimes}^{(k)}\right)
\geq \sigma_{r_k}\left(T^{(k)}U_{\otimes}^{(k)}\right)\cdot \sigma_{\min}\left[(U_{\otimes}^{(k)})^{\top}\hat{U}_{\otimes}^{(k)}\right]\\
&\overset{\text{Proposition \ref{prop:weyl}}}{\geq}
\sigma_{r_k}\left(T^{(k)}\right)\cdot
\sigma_{\min}\left(U_{\otimes}^{(k)}\right)\cdot \sigma_{\min}\left[(U_{\otimes}^{(k)})^{\top}\hat{U}_{\otimes}^{(k)}\right]\\
&\underset{U_{\otimes}^{(k)}\ \text{and}\ \hat{U}_{\otimes}^{(k)}}
{\overset{\text{definitions of}}{=}}
\sigma_{r_k}\left(T^{(k)}\right)\cdot \sigma_{\min}\left\{\left[(U^{(k)}_{1})^{\top}\otimes\cdots\otimes (U^{(k)}_{L_k})^{\top}\right]\left[\hat{U}^{(k)}_{1}\otimes\cdots\otimes \hat{U}^{(k)}_{L_k}\right]\right\}\\
&\underset{\text{ \cite{zhang2018tensor}}}
{\overset{\text{Lemma 4 of}}{=}}
\sigma_{r_k}\left(T^{(k)}\right)\cdot \sigma_{\min}\left\{\left[(U^{(k)}_{1})^{\top}\hat{U}^{(k)}_{1}\otimes\cdots\otimes (U^{(k)}_{L_k})^{\top}\hat{U}^{(k)}_{L_k}\right]\right\}
\\
&\underset{\text{ \cite{zhang2018tensor}}}
{\overset{\text{Lemma 4 of}}{=}}
\sigma_{r_k}\left(T^{(k)}\right)\cdot   \sigma_{\min}\left((U^{(k)}_{1})^{\top}\hat{U}^{(k)}_{1}\right)\cdots \sigma_{\min}\left((U^{(k)}_{L_k})^{\top}\hat{U}^{(k)}_{L_k}\right)\\
&\underset{\text{ \cite{cai2018rate}}}
{\overset{\text{Lemma 1 of}}{=}}\sigma_{r_k}\left(T^{(k)}\right)\cdot   \sqrt{1-\left\|\sin\Theta(U^{(k)}_{1},\hat{U}^{(k)}_{1})\right\|_{\op}^2}\cdots \sqrt{1-\left\|\sin\Theta(U^{(k)}_{L_k},\hat{U}^{(k)}_{L_k})\right\|_{\op}^2}
\\
&\overset{\text{Assumption \ref{assum:2}}}{\gtrsim} \lambda_{\min}\left(\frac{\sqrt{3}}{2}\right)^{L_k}
\gtrsim\lambda_{\min},
\end{aligned}
\end{equation*}
where the last inequality holds due to the finiteness of $\{L_k\}_k$. Till now, we have completed the proof of Lemma \ref{lem:signal}.
\end{proof}

\begin{lemma}[Noise level]\label{lem:noise}
For each $k\in[K]$, recall that $r^{(k)}_{\Pi}:=\prod_{i=1}^{L_k}r_i^{(k)}$. Under Assumption \ref{assum:1} and \ref{assum:4}, there exist constants $c,C>0$ such that the following concentration holds for any $t>0$:
$$
\P\left\{\left\|E^{(k)}\right\|_{\op}\geq C\left(\sqrt{n}+\sqrt{r^{(k)}_{\Pi}}+\sqrt{\sum_{j=1}^{L_k}p_j^{(k)}r_j^{(k)}}+t\right)\right\}\leq2e^{-ct^2},
$$
which further implies:
$$\left\|\left\|E^{(k)}\right\|_{\op}\right\|_{\psi_1} \leq \left\|\left\|E^{(k)}\right\|_{\op}\right\|_{\psi_2}=O\left(\sqrt{n}+\sqrt{\sum_{j=1}^{L_k}p_j^{(k)}}\right)=O(\sqrt{n}).$$
\end{lemma}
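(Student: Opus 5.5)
The plan is to condition on the data-mode estimators $\{\hat U^{(k)}_j\}_j$ and treat them as fixed. This is the same convention used in Step 4 of the proof of Theorem \ref{theorem:upper bound}. With the $\hat U^{(k)}_j$ fixed, $E^{(k)}=Z^{(k)}\hat U^{(k)}_\otimes$ is an $n\times r^{(k)}_\Pi$ matrix. Its rows are independent, because the rows of $Z^{(k)}=\cM_{L_k+1}(\cE^{(k)})$ correspond to distinct subjects. Each row is a fixed orthonormal projection of a vector with i.i.d. sub-Gaussian entries.

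\textbf{First (conditional) bound.} Since $\hat U^{(k)}_\otimes$ has orthonormal columns, each row of $E^{(k)}$ is a mean-zero isotropic-type sub-Gaussian vector in $\mathbb R^{r^{(k)}_\Pi}$ with $\psi_2$-norm $\lesssim e_k$. Standard non-asymptotic bounds for matrices with independent sub-Gaussian rows (Vershynin's theorem) then give
\begin{equation*}
\|E^{(k)}\|_{\op}\le C e\left(\sqrt n+\sqrt{r^{(k)}_\Pi}+t\right)
\end{equation*}
with probability at least $1-2e^{-ct^2}$. The same bound can be obtained by an $\epsilon$-net argument over the unit spheres of $\mathbb R^n$ and $\mathbb R^{r^{(k)}_\Pi}$ combined with Hoeffding's inequality. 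This already yields the displayed inequality, since the extra term $\sqrt{\sum_j p^{(k)}_jr^{(k)}_j}$ only enlarges the threshold.

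\textbf{Handling data-dependent $\hat U^{(k)}_j$.} I would include that extra term to cover the case where $\hat U^{(k)}_j$ depend on $\cE^{(k)}$, for example through HOOI. The approach is a uniform bound over the product of Stiefel manifolds. I would write $\|Z^{(k)}(\hat U_1\otimes\cdots\otimes\hat U_{L_k})\|_{\op}\le\sup_{U_j\in\cO(p_j^{(k)},r_j^{(k)})}\sup_{\|a\|=\|b\|=1}a^\top Z^{(k)}(U_1\otimes\cdots\otimes U_{L_k})b$. Each $U_j$ is covered by a $1/(4L_k)$-net in operator norm of cardinality $(CL_k)^{p_j^{(k)}r_j^{(k)}}$, and $a$ and $b$ are covered by $1/4$-nets of size $9^n$ and $9^{r_\Pi^{(k)}}$. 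For a fixed tuple of net points, $a^\top Z(\otimes U_j)b$ is sub-Gaussian with variance proxy $e^2$. A union bound then gives a tail threshold of $\sqrt{n+r_\Pi^{(k)}+\sum_jp_j^{(k)}r_j^{(k)}\log(CL_k)}+t$, with $\log L_k=O(1)$ by Assumption \ref{assum:4}. The net-approximation error is absorbed multiplicatively by the usual telescoping argument: replacing each $U_j$ by its net point changes the Kronecker product by at most $\sum_j\|U_j-\bar U_j\|_{\op}\le1/4$ in operator norm, so the supremum is at most a constant times its value over the net. This gives exactly the stated tail bound. I expect this telescoping step to be the only delicate point.

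\textbf{Norm conclusions.} The tail bound implies that $\|E^{(k)}\|_{\op}-C(\sqrt n+\sqrt{r_\Pi^{(k)}}+\sqrt{\sum_jp_j^{(k)}r_j^{(k)}})$ has sub-Gaussian positive part. Hence $\|\|E^{(k)}\|_{\op}\|_{\psi_2}\lesssim\sqrt n+\sqrt{r_\Pi^{(k)}}+\sqrt{\sum_jp_j^{(k)}r_j^{(k)}}+1$. Because the $r$'s and $L_k$ are bounded and $p_j^{(k)}\lesssim n$ (Assumption \ref{assum:4}), this is $O(\sqrt n+\sqrt{\sum_jp_j^{(k)}})=O(\sqrt n)$. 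Finally, $\|\cdot\|_{\psi_1}\lesssim\|\cdot\|_{\psi_2}$ holds by the standard comparison of Orlicz norms, which completes the argument.
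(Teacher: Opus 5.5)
Your proposal is correct and takes essentially the paper's route: bound $\|E^{(k)}\|_{\op}$ by a supremum over all admissible factor matrices, cover each factor by an operator-norm net of mesh $\asymp 1/L_k$, and take a union bound, which produces the $\sqrt{\sum_j p_j^{(k)}r_j^{(k)}}$ term; then absorb the net error by telescoping and convert the shifted sub-Gaussian tail into the $\psi_2$ and $\psi_1$ bounds, the only cosmetic difference being that the paper applies Vershynin's Theorem 5.39 to each fixed net tuple where you net the vectors $a,b$ directly. For the telescoping step you flagged, take the supremum over the operator-norm unit balls $\{Q_j:\|Q_j\|_{\op}\le 1\}$ as the paper does, so that the normalized differences $(U_j-\bar U_j)/\|U_j-\bar U_j\|_{\op}$ stay in the index set; equivalently, note that this supremum equals the one over $\cO(p_j^{(k)},r_j^{(k)})$, because $a^\top Z^{(k)}(U_1\otimes\cdots\otimes U_{L_k})b$ is linear in each $U_j$ and the ball is the convex hull of the Stiefel manifold.
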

\begin{proof}
To prove the first claim of Lemma \ref{lem:noise}, we can imitate the proving procedure of Lemma $5$ in \cite{zhang2018tensor}. 

$\\$
\textbf{\textit{Step 1:}}
Recall from \eqref{Ek}, for each $k\in[K]$, we have:
\begin{equation*}
E^{(k)}=Z^{(k)}\left(\hat{U}^{(k)}_{1}\otimes\cdots\otimes \hat{U}^{(k)}_{L_k}\right).
\end{equation*}
Further, for each $j\in[L_k]$, we denote 
$\cQ_j^{(k)}:=\{Q_j^{(k)} \in \mathbb{R}^{p_j^{(k)} \times r_j^{(k)}}| \|Q_j^{(k)}\|_{\op} \leq 1\}$. Define $A^{(k)}$ as:
$$A^{(k)} = \max_{\substack{ \{Q_j^{(k)}\in\cQ_j^{(k)}\}_{j\in[L_k]} }} \left\| Z^{(k)}\left(Q_1^{(k)}\otimes\cdots\otimes Q_{L_k}^{(k)}\right) \right\|_{\op}.$$
Clearly, $\{\|\hat{U}^{(k)}_{j}\|_{\op}\leq1\}_{j\in[L_k]}$ holds, and we have $\|E^{(k)}\|_{\op}\leq A^{(k)}$. Thus, in order to prove Lemma \ref{lem:noise}, all we need to do is consider the concentration for $A^{(k)}$. By applying Lemma $7$ of \cite{zhang2018tensor}, we can find an $\epsilon$-net for every $\cQ_j^{(k)}$ such that 
$$
\tilde{\cQ}_j^{(k)}:=\{Q_{j,1}^{(k)},Q_{j,2}^{(k)},\cdots,Q_{j,m_j^{(k)}}^{(k)}\}\subseteq\cQ_j^{(k)},\ \text{and}\ m_j^{(k)}\leq \left(\frac{4+\epsilon}{\epsilon}\right)^{p_j^{(k)}r_j^{(k)}}.
$$
Then, for every $j\in[L_k]$, we extract one $Q_{j,q_j}^{(k)}$ from $\tilde{\cQ}_j^{(k)}$ and define:
$$
E^{(k)}_{(q_1,q_2,\cdots,q_{L_k})}:= Z^{(k)}\left(Q_{1,q_1}^{(k)}\otimes Q_{2,q_2}^{(k)}\otimes\cdots\otimes Q_{L_k,q_{L_k}}^{(k)}\right).
$$
If we denote the $i$-th row of $Z^{(k)}$ as $(z^{(k)}_i)^\top$, then the $i$-th row of $E^{(k)}_{(q_1,q_2,\cdots,q_{L_k})}$ simply equals $(z^{(k)}_i)^\top(Q_{1,q_1}^{(k)}\otimes Q_{2,q_2}^{(k)}\otimes\cdots\otimes Q_{L_k,q_{L_k}}^{(k)})$. As each element of $\cE^{(k)}$ is mutually independent to the rest of the elements and $Z^{(k)}=\cM_{L_k+1}(\cE^{(k)})$, we know that  $E^{(k)}_{(q_1,q_2,\cdots,q_{L_k})}$ also has independent rows. Besides, for any $u\in\mathbb{R}^{r_{\Pi}^{(k)}}$ where $\|u\|_2=1$, we have:
$$
\left\|\langle (z^{(k)}_i)^\top\left(Q_{1,q_1}^{(k)}\otimes Q_{2,q_2}^{(k)}\otimes\cdots\otimes Q_{L_k,q_{L_k}}^{(k)}\right),u\rangle\right\|_{\psi_2}
=
\left\|\langle z^{(k)}_i,\left(Q_{1,q_1}^{(k)}\otimes Q_{2,q_2}^{(k)}\otimes\cdots\otimes Q_{L_k,q_{L_k}}^{(k)}\right)u\rangle\right\|_{\psi_2}.
$$
According to Assumption \ref{assum:1}, we have, by utilizing Lemma 5.24 of \cite{vershynin2010introduction}, that:
$$
\left\|\langle (z^{(k)}_i)^\top\left(Q_{1,q_1}^{(k)}\otimes Q_{2,q_2}^{(k)}\otimes\cdots\otimes Q_{L_k,q_{L_k}}^{(k)}\right),u\rangle\right\|_{\psi_2}
\leq
e^2\left\|\left(Q_{1,q_1}^{(k)}\otimes Q_{2,q_2}^{(k)}\otimes\cdots\otimes Q_{L_k,q_{L_k}}^{(k)}\right)u\right\|.
$$
Accordingly, each row of $E^{(k)}_{(q_1,q_2,\cdots,q_{L_k})}$ is an independent, centered, $e^2$-sub Gaussian vector. Thus, by utilizing Theorem 5.39 of \cite{vershynin2010introduction}, there exists a constant $c_1$ such that for any $t>0$, we have:
$$
\P\left\{\left\|E^{(k)}_{(q_1,q_2,\cdots,q_{L_k})}\right\|_{\op}\geq\sqrt{n}+\sqrt{r_{\Pi}^{(k)}}+t
\right\}\leq2\exp\left(-c_1t^2\right).
$$
Note that the above concentration holds for any $q_j\in[m_j^{(k)}]$ and $j\in[L_k]$. Based on the above result, we obtain the following union bound:
\begin{equation*}
\begin{aligned}
&\quad\P\left\{\max_{\substack{
1\leq j\leq L_k\\
1\leq q_j\leq m_j^{(k)}
}}\left\|E^{(k)}_{(q_1,q_2,\cdots,q_{L_k})}\right\|_{\op}\geq\sqrt{n}+\sqrt{r_{\Pi}^{(k)}}+t
\right\}\\
&\leq
\sum_{q_1=1}^{m_1^{(k)}}\cdots\sum_{q_{L_k}=1}^{m_{L_k}^{(k)}}
\P\left\{\left\|E^{(k)}_{(q_1,q_2,\cdots,q_{L_k})}\right\|_{\op}\geq\sqrt{n}+\sqrt{r_{\Pi}^{(k)}}+t
\right\}\\
&\leq
\sum_{q_1=1}^{m_1^{(k)}}\cdots\sum_{q_{L_k}=1}^{m_{L_k}^{(k)}}2\exp\left(-c_1t^2\right)
=
\left[2\exp\left(-c_1t^2\right)\right]\prod\limits_{j=1}^{L_k}m_j^{(k)}\\
&\leq\left[2\exp\left(-c_1t^2\right)\right]\left[\prod\limits_{j=1}^{L_k}
\left(\frac{4+\epsilon}{\epsilon}\right)^{p_j^{(k)}r_j^{(k)}}\right]\\
&=2\exp\left\{\left(\sum_{j=1}^{L_k}p_j^{(k)}r_j^{(k)}\right)\left[\log\left(\frac{4+\epsilon}{\epsilon}\right)\right]-c_1t^2\right\}.
\end{aligned}
\end{equation*}

$\\$
\textbf{\textit{Step 2:}}
Assume that 
$$
\left\{Q_j^{(k),*}
\right\}_{j=1}^{L_k}= \argmax_{\substack{ \{Q_j^{(k)}\in\cQ_j^{(k)}\}_{j\in[L_k]} }} \left\| Z^{(k)}\left(Q_1^{(k)}\otimes\cdots\otimes Q_{L_k}^{(k)}\right) \right\|_{\op}.
$$
As $Q_j^{(k),*}\in\cQ_j^{(k)}$, then for every $j\in[L_k]$, we can find a corresponding $Q_{j,q_j}^{(k)}$ such that $\|Q_j^{(k),*}-Q_{j,q_j}^{(k)}\|_{\op}\leq\epsilon$ holds based on the definition of $\epsilon$-net. Accordingly, we can split $A^{(k)}$ as follows:
\begin{equation*}
\begin{aligned}
&\quad A^{(k)}=\left\| Z^{(k)}\left(Q_1^{(k),*}\otimes Q_2^{(k),*}\otimes\cdots\otimes Q_{L_k}^{(k),*}\right) \right\|_{\op}\\
&=\left\| Z^{(k)}\left(Q_{1,q_1}^{(k)}\otimes Q_2^{(k),*}\otimes\cdots\otimes Q_{L_k}^{(k),*}\right)
+Z^{(k)}\left[\left(Q_1^{(k),*}-Q_{1,q_1}^{(k)}\right)\otimes Q_2^{(k),*}\otimes\cdots\otimes Q_{L_k}^{(k),*}\right] \right\|_{\op}.
\end{aligned}
\end{equation*}
Keep splitting the first term gives us:
\begin{equation*}
\begin{aligned}
A^{(k)}&=\left\| Z^{(k)}\left(Q_1^{(k),*}\otimes Q_2^{(k),*}\otimes\cdots\otimes Q_{L_k}^{(k),*}\right) \right\|_{\op}\\
&=
\left\| Z^{(k)}\left(Q_{1,q_1}^{(k)}\otimes Q_{2,q_2}^{(k)} \otimes \cdots \otimes Q_{l,q_l}^{(k)}\otimes Q_{l+1}^{(k),*}\otimes\cdots\otimes Q_{L_k}^{(k),*}\right)+\right.\\
&\quad\left.\sum_{j=1}^l Z^{(k)}\left[Q_{1,q_1}^{(k)}\otimes Q_{2,q_2}^{(k)}\otimes\cdots\otimes\left(Q_j^{(k),*}-Q_{j,q_j}^{(k)}\right)\otimes Q_{j+1}^{(k),*}\otimes\cdots\otimes Q_{L_k}^{(k),*}\right] \right\|_{\op}\\
&=
\left\| Z^{(k)}\left(Q_{1,q_1}^{(k)}\otimes Q_{2,q_2}^{(k)} \otimes \cdots \otimes Q_{L_k,q_{L_k}}^{(k)}\right)+\right.\\
&\quad\left.\sum_{j=1}^{L_k} Z^{(k)}\left[Q_{1,q_1}^{(k)}\otimes Q_{2,q_2}^{(k)}\otimes\cdots\otimes\left(Q_j^{(k),*}-Q_{j,q_j}^{(k)}\right)\otimes Q_{j+1}^{(k),*}\otimes\cdots\otimes Q_{L_k}^{(k),*}\right] \right\|_{\op}\\
&\leq\left\| Z^{(k)}\left(Q_{1,q_1}^{(k)}\otimes Q_{2,q_2}^{(k)} \otimes \cdots \otimes Q_{L_k,q_{L_k}}^{(k)}\right)\right\|_{\op}+\\
&\quad\sum_{j=1}^{L_k} \left\|Z^{(k)}\left[Q_{1,q_1}^{(k)}\otimes Q_{2,q_2}^{(k)}\otimes\cdots\otimes\left(Q_j^{(k),*}-Q_{j,q_j}^{(k)}\right)\otimes Q_{j+1}^{(k),*}\otimes\cdots\otimes Q_{L_k}^{(k),*}\right] \right\|_{\op}\\
&=\left\|E^{(k)}_{(q_1,q_2,\cdots,q_{L_k})}\right\|_{\op}
+
\sum_{j=1}^{L_k} \left\|Z^{(k)}\left[Q_{1,q_1}^{(k)}\otimes\cdots\otimes\left(Q_j^{(k),*}-Q_{j,q_j}^{(k)}\right)\otimes\cdots\otimes Q_{L_k}^{(k),*}\right] \right\|_{\op}.
\end{aligned}
\end{equation*}
Note that for every $j\in[L_k]$, we have $\|Q_j^{(k),*}-Q_{j,q_j}^{(k)}\|_{\op}\leq\epsilon$. Accordingly, we can rewrite $Q_j^{(k),*}-Q_{j,q_j}^{(k)}=\epsilon \tilde{Q}_j^{(k)}$ where $\|\tilde{Q}_j^{(k)}\|_{\op}\leq1$. Then we have, for any $t>0$, that:
\begin{equation*}
\begin{aligned}
A^{(k)}
&=
\left\|E^{(k)}_{(q_1,q_2,\cdots,q_{L_k})}\right\|_{\op}
+
\sum_{j=1}^{L_k} \left\|Z^{(k)}\left[Q_{1,q_1}^{(k)}\otimes\cdots\otimes\epsilon\tilde{Q}_j^{(k)}\otimes\cdots\otimes Q_{L_k}^{(k),*}\right] \right\|_{\op}\\
&\leq\max_{\substack{
1\leq j\leq L_k\\
1\leq q_j\leq m_j^{(k)}
}}\left\|E^{(k)}_{(q_1,q_2,\cdots,q_{L_k})}\right\|_{\op}
+
\sum_{j=1}^{L_k} \epsilon\left\|Z^{(k)}\left[Q_{1,q_1}^{(k)}\otimes\cdots\otimes\tilde{Q}_j^{(k)}\otimes\cdots\otimes Q_{L_k}^{(k),*}\right] \right\|_{\op}\\
&\leq\max_{\substack{
1\leq j\leq L_k\\
1\leq q_j\leq m_j^{(k)}
}}\left\|E^{(k)}_{(q_1,q_2,\cdots,q_{L_k})}\right\|_{\op}+L_k\epsilon A^{(k)}\\
&\leq\sqrt{n}+\sqrt{r_{\Pi}^{(k)}}+t+L_k\epsilon A^{(k)},
\end{aligned}
\end{equation*}
holds with probability larger than $1-\exp\{(\sum_{j=1}^{L_k}p_j^{(k)}r_j^{(k)})[\log((4+\epsilon)/\epsilon)]-c_1t^2\}$. This gives us:
$$
A^{(k)}\leq\frac{1}{1-L_k\epsilon}\left(\sqrt{n}+\sqrt{r_{\Pi}^{(k)}}+t\right),
$$
with the same probability as the last inequalities. 
By setting $\epsilon=1/2L_k$, then $L_k\epsilon=1/2=O(1)$, and $\log((4+\epsilon)/\epsilon)=\log(8L_k+1)=O(1)$. Thus, there exist two constants $C_1$ and $C_2$ such that:
$$
A^{(k)}\leq C_1\left(\sqrt{n}+\sqrt{r_{\Pi}^{(k)}}+t\right),
$$
with probability larger than $1-\exp\{C_2(\sum_{j=1}^{L_k}p_j^{(k)}r_j^{(k)})-c_1t^2\}$. If we let $t=\tilde{C}\sqrt{\sum_{j=1}^{L_k}p_j^{(k)}r_j^{(k)}}+u$ for any $u>0$ and sufficiently large $\tilde{C}>0$, then there exist absolute constants $\tilde{c}$ and $c$ such that:
\begin{equation*}
\begin{aligned}
&\quad\P\left\{\|E^{(k)}\|_{\op}\leq C_1\left(\sqrt{n}+\sqrt{r_{\Pi}^{(k)}}+t\right)\right\}\\
&=
\P\left\{\|E^{(k)}\|_{\op}\leq C_1\left(\sqrt{n}+\sqrt{r_{\Pi}^{(k)}}+\tilde{C}\sqrt{\sum_{j=1}^{L_k}p_j^{(k)}r_j^{(k)}}+u\right)\right\}\\
&\geq
\P\left\{A^{(k)}\leq C_1\left(\sqrt{n}+\sqrt{r_{\Pi}^{(k)}}+\tilde{C}\sqrt{\sum_{j=1}^{L_k}p_j^{(k)}r_j^{(k)}}+u\right)\right\}\\
&\geq 1-\exp\left\{C_2\left(\sum_{j=1}^{L_k}p_j^{(k)}r_j^{(k)}\right)-c_1t^2\right\}\\
&=1-\exp\left\{C_2\left(\sum_{j=1}^{L_k}p_j^{(k)}r_j^{(k)}\right)-c_1\left(\tilde{C}\sqrt{\sum_{j=1}^{L_k}p_j^{(k)}r_j^{(k)}}+u\right)^2\right\}\\
&=1-\exp\left\{C_2\left(\sum_{j=1}^{L_k}p_j^{(k)}r_j^{(k)}\right)-c_1\left(\tilde{C}^2\sum_{j=1}^{L_k}p_j^{(k)}r_j^{(k)}+2\tilde{C}\sqrt{\sum_{j=1}^{L_k}p_j^{(k)}r_j^{(k)}}u+u^2\right)\right\}\\
&\geq1-\exp\left\{-\tilde{c}\sqrt{\sum_{j=1}^{L_k}p_j^{(k)}r_j^{(k)}}u-\tilde{c}u^2\right\}
\geq
1-\exp\left\{-cu^2\right\},
\end{aligned}
\end{equation*}
which finishes the proof of the first claim. 

$\\$
\textbf{\textit{Step 3:}}
We next derive the $\psi_2$-norm bound based on the results of Step 2. Define
\[
B^{(k)}
:=
\sqrt{n}
+\sqrt{r_\Pi^{(k)}}
+\sqrt{\sum_{j=1}^{L_k}p_j^{(k)}r_j^{(k)}}.
\]
The first part of the lemma shows that there exist constants
$c_0,C_0>0$ such that, for every $t\geq0$,
\begin{equation}\label{eq:noise-shifted-tail}
\P\left\{
\left\|E^{(k)}\right\|_{\op}\geq C_0(B^{(k)}+t)
\right\}
\leq
2\exp(-c_0t^2).
\end{equation}
Define the nonnegative random variable
$Y^{(k)}:=\left(\|E^{(k)}\|_{\op}-C_0B^{(k)}\right)_+$, where
$x_+:=\max\{x,0\}$. Since $Y^{(k)}\geq0$, we have
$|Y^{(k)}|=Y^{(k)}$. For every $s>0$, taking $t=s/C_0$ in
\eqref{eq:noise-shifted-tail} gives
\begin{equation*}
\begin{aligned}
\P\left\{|Y^{(k)}|>s\right\}
&=
\P\left\{Y^{(k)}>s\right\}
=
\P\left\{
\left\|E^{(k)}\right\|_{\op}>
C_0\left(B^{(k)}+\frac{s}{C_0}\right)
\right\}
\leq
2\exp\left(-\frac{c_0}{C_0^2}s^2\right).
\end{aligned}
\end{equation*}
The second equality holds as events $\{(x)_+>s\}$ and $\{x>s\}$ are equivalent for every $s>0$.
For $s=0$, the same bound holds trivially since
$\P\{|Y^{(k)}|>0\}\leq1\leq2$. Therefore,
\[
\P\left\{|Y^{(k)}|>s\right\}
\leq
2
=
2\exp\left(-\frac{c_0}{C_0^2}s^2\right).
\]
By the equivalence between sub-Gaussian tail decay and the
$\psi_2$-norm in Lemma 5.5 of
\cite{vershynin2010introduction}, it follows that
$\|Y^{(k)}\|_{\psi_2}\leq C C_0/\sqrt{c_0}=O(1)$ for an absolute
constant $C>0$.

Since
$
0\leq \left\|E^{(k)}\right\|_{\op}\leq C_0B^{(k)}+Y^{(k)},
$
the monotonicity and triangle inequality of the $\psi_2$-norm yield
\begin{equation*}
\begin{aligned}
\left\|
\left\|E^{(k)}\right\|_{\op}
\right\|_{\psi_2}
&\leq
\left\|C_0B^{(k)}\right\|_{\psi_2}
+\|Y^{(k)}\|_{\psi_2}
\lesssim B^{(k)}+1
\lesssim
\sqrt{n}
+\sqrt{r_\Pi^{(k)}}
+\sqrt{\sum_{j=1}^{L_k}p_j^{(k)}r_j^{(k)}}.
\end{aligned}
\end{equation*}
Moreover, the standard relation between Orlicz norms gives
\[
\left\|
\left\|E^{(k)}\right\|_{\op}
\right\|_{\psi_1}
\lesssim
\left\|
\left\|E^{(k)}\right\|_{\op}
\right\|_{\psi_2}.
\]

Finally, based on Assumption \ref{assum:4}, we have $r_\Pi^{(k)}=O(1)$
and
\[
\sum_{j=1}^{L_k}p_j^{(k)}r_j^{(k)}
\lesssim
\sum_{j=1}^{L_k}p_j^{(k)}=O(n).
\]
Consequently,
\[
\left\|
\left\|E^{(k)}\right\|_{\op}
\right\|_{\psi_1}
\lesssim
\left\|
\left\|E^{(k)}\right\|_{\op}
\right\|_{\psi_2}
=
O(\sqrt n),
\]
which completes the proof.
\end{proof}

\section{Useful results}

For the sake of convenience, this section collects some useful results used in the proofs of the theoretical arguments.

\begin{proposition}[Lemma 3.1 from \cite{vu2013fantope}]\label{prop:curvature}
    Let $X$ be a symmetric matrix and $P$ be the projection onto the subspace spanned by its leading $r$ eigenvectors. If $\delta_X = (\lambda_{r}-\lambda_{r+1})(X)>0$, then for all $0\preceq Q \preceq I$ and $\tr(Q) =r$,
    $$\frac{\delta_X}{2}\left\|P-Q\right\|^2_F\leq\left\langle X,P-Q\right\rangle.$$
\end{proposition}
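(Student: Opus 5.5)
The plan is to diagonalize $X$ and reduce both sides of the inequality to a single scalar: the amount of mass that $Q$ places outside the leading eigenspace. Write the spectral decomposition $X=\sum_{i}\lambda_i u_iu_i^\top$ with $\lambda_1\geq\cdots\geq\lambda_n$, so that $P=\sum_{i\leq r}u_iu_i^\top$. Set $q_i:=u_i^\top Qu_i$. Because $0\preceq Q\preceq I$, each $q_i\in[0,1]$. Because $\tr(Q)=r$, we have $\sum_i q_i=r$. Consequently
\[
s:=\sum_{i\leq r}(1-q_i)=\sum_{i>r}q_i\geq 0 .
\]
The proof will show that $\langle X,P-Q\rangle\geq \delta_X s$ and that $\|P-Q\|_F^2\leq 2s$. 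Combining the two gives $\frac{\delta_X}{2}\|P-Q\|_F^2\leq \delta_X s\leq \langle X,P-Q\rangle$, which is the claim.

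\textbf{Linear term.} Expanding in the eigenbasis,
\[
\langle X,P-Q\rangle=\sum_{i\leq r}\lambda_i(1-q_i)-\sum_{i>r}\lambda_i q_i .
\]
In the first sum every weight $1-q_i$ is nonnegative and every $\lambda_i\geq\lambda_r$, so that sum is at least $\lambda_r s$. In the second sum every weight $q_i$ is nonnegative and every $\lambda_i\leq\lambda_{r+1}$, so that sum is at most $\lambda_{r+1}s$. Subtracting, $\langle X,P-Q\rangle\geq(\lambda_r-\lambda_{r+1})s=\delta_X s$.

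\textbf{Quadratic term.} Expanding the Frobenius norm gives $\|P-Q\|_F^2=\tr(P)-2\tr(PQ)+\tr(Q^2)$. The constraint $0\preceq Q\preceq I$ implies $Q^2\preceq Q$, hence $\tr(Q^2)\leq\tr(Q)=r$. We also have $\tr(P)=r$ and $\tr(PQ)=\sum_{i\leq r}q_i=r-s$. Substituting, $\|P-Q\|_F^2\leq r-2(r-s)+r=2s$.

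I do not expect a genuine obstacle. The only point that needs care is that $Q$ need not be a projection, so the quadratic term cannot be evaluated exactly and must be bounded. The operator inequality $Q^2\preceq Q$ is exactly what supplies this bound, and it is also what makes the result hold on the whole Fantope rather than only on rank-$r$ projections.
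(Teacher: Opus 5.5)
Your proof is correct and complete. Reducing both sides to the scalar $s=r-\tr(PQ)=\sum_{i>r}u_i^\top Qu_i$ is the standard argument for this Fantope curvature lemma: the eigenvalue ordering with nonnegative weights gives $\langle X,P-Q\rangle\geq\delta_X s$, and $Q^2\preceq Q$ gives $\|P-Q\|_F^2\leq 2s$. The paper itself does not prove the statement; it only cites it as Lemma 3.1 of \cite{vu2013fantope}. Your argument is essentially the one used there, so it provides a self-contained verification of the cited result.
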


\begin{proposition}
[Proposition D.3 in \cite{vu2013minimax}]
\label{prop:psi1_leq_psi2}
Let $X,Y$ be any random variables, we have:
$$
\left\|XY\right\|_{\psi_1}\leq\left\|X\right\|_{\psi_2}\left\|Y\right\|_{\psi_2}.
$$
Moreover, take $Y=1$ gives us 
$$
\left\|X\right\|_{\psi_1}\leq\left\|X\right\|_{\psi_2},
$$
for any random variable $X$.
\end{proposition}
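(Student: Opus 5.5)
The product inequality follows from an elementary pointwise inequality plus convexity of the exponential, if we work with the exponential Orlicz definition
\[
\|Z\|_{\psi_a}:=\inf\{t>0:\ \E\exp(|Z|^a/t^a)\leq 2\}.
\]
Set $s:=\|X\|_{\psi_2}$ and $u:=\|Y\|_{\psi_2}$. We may assume both are finite and positive; if one of them is zero, the corresponding variable vanishes almost surely and the claim is trivial. It then suffices to show that $t=su$ is admissible in the definition of $\|XY\|_{\psi_1}$, that is,
\[
\E\exp\bigl(|XY|/(su)\bigr)\leq 2 .
\]
By the arithmetic--geometric mean inequality,
\[
\frac{|XY|}{su}\leq \frac12\Bigl(\frac{X^2}{s^2}+\frac{Y^2}{u^2}\Bigr).
\]
Convexity of $\exp$ gives $e^{(a+b)/2}\leq (e^a+e^b)/2$. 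Taking expectations therefore yields
\[
\E\exp\bigl(|XY|/(su)\bigr)\leq \tfrac12\bigl(\E e^{X^2/s^2}+\E e^{Y^2/u^2}\bigr)\leq \tfrac12(2+2)=2 .
\]
The last step uses the definition of $s$ and $u$, after a routine monotone-convergence argument showing the infimum is attained. This step replaces Cauchy--Schwarz: Cauchy--Schwarz would only give $\sqrt{\E e^{2X^2/s^2}}$, which is not controlled by the $\psi_2$ normalization.

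The second claim, taking $Y=1$, is where the main obstacle lies, namely the normalization. Under the exponential definition, $\|1\|_{\psi_2}=1/\sqrt{\log 2}$, so the product bound only gives $\|X\|_{\psi_1}\leq \|X\|_{\psi_2}/\sqrt{\log 2}$. To obtain the constant-free statement, I would first try a direct argument: for $t=\|X\|_{\psi_2}$, Jensen's inequality gives
\[
\E e^{|X|/t}\leq \exp\bigl(\E|X|/t\bigr)\leq \exp\bigl(\sqrt{\E X^2}/t\bigr),
\]
and $\E X^2/t^2\leq \log \E e^{X^2/t^2}\leq \log 2$. This bounds $\E e^{|X|/t}$ only by $e^{\sqrt{\log 2}}$, which is too weak, so some constant loss appears unavoidable with this definition.

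The fallback is to adopt the moment-based normalization $\|Z\|_{\psi_a}=\sup_{p\geq1}p^{-1/a}(\E|Z|^p)^{1/p}$, as in \cite{vershynin2010introduction}. There $\|1\|_{\psi_2}=1$, and $\|X\|_{\psi_1}\leq\|X\|_{\psi_2}$ is immediate from $p^{-1}\leq p^{-1/2}$. Under this normalization the product bound follows from Cauchy--Schwarz on moments:
\[
(\E|XY|^p)^{1/p}\leq (\E|X|^{2p})^{1/(2p)}(\E|Y|^{2p})^{1/(2p)}\leq 2p\,\|X\|_{\psi_2}\|Y\|_{\psi_2},
\]
which gives a factor of $2$. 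Since the two normalizations are equivalent up to absolute constants, and every use of the proposition in the proof of Theorem \ref{theorem:upper bound} and in Lemma \ref{lem:noise} is only up to $\lesssim$, I would state the result with the exponential definition for the product and the moment definition for the $Y=1$ case. I would also remark that the constants are immaterial for the subsequent arguments.
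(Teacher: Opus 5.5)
Your proof of the product bound is correct and is the standard argument behind the cited Proposition D.3. The paper itself gives nothing beyond the citation and the remark ``take $Y=1$''.

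Two side remarks in your write-up are off, though neither matters for the result. First, Cauchy--Schwarz does work after your AM--GM step. From $e^{|XY|/(su)}\le e^{X^2/(2s^2)}e^{Y^2/(2u^2)}$ it gives $\mathbb{E}e^{|XY|/(su)}\le(\mathbb{E}e^{X^2/s^2})^{1/2}(\mathbb{E}e^{Y^2/u^2})^{1/2}\le 2$. Only splitting without the factor $1/2$ produces $\mathbb{E}e^{2X^2/s^2}$. Second, your abandoned ``direct argument'' applies Jensen in the wrong direction. Convexity gives $\mathbb{E}e^{|X|/t}\ge e^{\mathbb{E}|X|/t}$, so that chain was never an upper bound.

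On the second claim you have found a real error in the paper. You can make it definitive rather than saying a loss ``appears unavoidable''. For $X\equiv c\neq 0$, the exponential definition gives $\|c\|_{\psi_a}=|c|(\log 2)^{-1/a}$, hence $\|X\|_{\psi_1}=|c|/\log 2>|c|/\sqrt{\log 2}=\|X\|_{\psi_2}$. So the ``take $Y=1$'' step silently assumes $\|1\|_{\psi_2}=1$. The correct and sharp statement is $\|X\|_{\psi_1}\le(\log 2)^{-1/2}\|X\|_{\psi_2}$, which is exactly what your product bound gives with $Y=1$.

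The moment normalization cannot carry the whole proposition either. Take $X=Y\sim\mathrm{Bernoulli}(e^{-4})$. Then $\|XY\|_{\psi_1}=\sup_{p\ge1}p^{-1}e^{-4/p}=1/(4e)$, but $\|X\|_{\psi_2}^2=\sup_{p\ge1}p^{-1}e^{-8/p}=1/(8e)$. So under that normalization the product bound needs your factor $2$, and that factor is sharp.

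Thus neither standard normalization makes both displays true. Your mixed-definition version is therefore a restatement of the proposition, not a proof of it as written. If you want both displays literally, change the threshold $2$ to $e$:
\[
\|Z\|_{\psi_a}:=\inf\{t>0:\mathbb{E}\exp(|Z|^a/t^a)\le e\},
\]
which is still an Orlicz (Luxemburg) norm for $a\ge 1$. Your AM--GM/convexity argument then goes through verbatim with $\tfrac12(e+e)=e$. Since now $\|1\|_{\psi_2}=1$, taking $Y=1$ yields $\|X\|_{\psi_1}\le\|X\|_{\psi_2}$ exactly.

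As you say, the paper only uses this relation up to $\lesssim$ (e.g.\ in Lemma \ref{lem:noise}), so none of the downstream results are affected.
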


\begin{proposition}[Lemma 2 from \cite{fan2019distributed}]\label{prop:taylor}
Let $\Sigma$ and $\hat{\Sigma}$ be $p\times p$ symmetric matrices with non-increasing eigenvalues $\lambda_1\geq \cdots\geq \lambda_p$ and $\hat{\lambda}_1\geq \cdots\geq \hat{\lambda}_p$, respectively. Let $\{u_i\}_{i=1}^{p}$, $\{\hat{u}_i\}_{i=1}^{p}$ be the corresponding eigenvectors such that $\Sigma u_i = \lambda_iu_i$ and $\hat{\Sigma} u_i = \hat{\lambda}_i\hat{u}_i$. Fix $s\in \{0,\cdots,p-r\}$, let $\delta=\min(\lambda_s-\lambda_{s+1},\lambda_{s+r}-\lambda_{s+r+1})>0$, with $\lambda_0=+\infty$ and $\lambda_{p+1}=-\infty$. For $E = \hat{\Sigma}-\Sigma$, we denote $S=\{s+1,\cdots,s+r\}$ and $S^c=[p]\setminus S$, and define $K$ of shape $p\times p$ where
$$K_{ij}=K_{ji}= 
\left\{ 
    \begin{array}{lc}
        (u_i^{\top} E u_j)/(\lambda_i-\lambda_j) & i\in S,\, j\in S^c, \\
       0& \text{otherwise}.\\
    \end{array}
\right.$$
Then, let $U = (u_1,\cdots, u_p)$, $\hat{U} = (\hat{u}_1,\cdots, \hat{u}_p)$, while $U_S = (u_{s+1},\cdots, u_{s+r})$, and $\hat{U}_S = (\hat{u}_{s+1},\cdots, \hat{u}_{s+r})$, we claim that
$$\hat{U}_S\hat{U}_S^{\top}-U_SU_S^{\top} = UKU^{\top}+\Delta,$$
where $\|K\|_F\leq (2r)^{1/2}\|E\|_{\op}/\delta$. In addition, when $\|E\|_{\op}/\delta\leq 1/10$ holds, we have $\|\Delta\|_F\leq 24r^{1/2}(\|E\|_{\op}/\delta)^2$.
\end{proposition}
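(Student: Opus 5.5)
The plan is to write $\hat\Sigma=\Sigma+E$ and express the spectral projector $\hat U_S\hat U_S^\top$ through resolvent integrals. I would then read off the first-order term $UKU^\top$ exactly and bound the second-order remainder $\Delta$ in operator norm, converting to Frobenius norm by a rank count.

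\textbf{Bound on $K$.} This part is direct. Every nonzero entry has $i\in S$, $j\in S^c$, so $|\lambda_i-\lambda_j|\geq\delta$ by the definition of $\delta$. Since $K$ is symmetric,
\[
\|K\|_F^2\leq \frac{2}{\delta^2}\sum_{i\in S}\sum_{j\in S^c}(u_i^\top Eu_j)^2\leq\frac{2}{\delta^2}\|U_S^\top E\|_F^2\leq \frac{2r\|E\|_{\op}^2}{\delta^2},
\]
because $U_S^\top E$ has $r$ rows, each of norm at most $\|E\|_{\op}$.

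\textbf{Expansion of the projector.} The difficulty is that the cluster $\{\lambda_{s+1},\ldots,\lambda_{s+r}\}$, and the groups above and below it, may have arbitrarily large spread. A closed contour around $S$ would therefore have uncontrolled length. To avoid this, I would write
\[
U_SU_S^\top=P_{<c_1}-P_{<c_2},
\]
where $c_1=\lambda_{s+1}+\delta/2$ and $c_2=\lambda_{s+r}-\delta/2$. For a vertical line not meeting the spectrum, the projector onto the eigenvalues below it is
\[
P_{<c}=\tfrac12 I-\frac{1}{2\pi i}\,\mathrm{p.v.}\!\int_{\operatorname{Re}z=c}(zI-\Sigma)^{-1}\,dz .
\]
The same formula holds for $\hat\Sigma$ with the same lines. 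This is legitimate because, by Weyl's inequality, $\|E\|_{\op}\leq\delta/10$ moves each eigenvalue by at most $\delta/10$. Hence $\hat\lambda_{s+1},\ldots,\hat\lambda_{s+r}$ stay strictly between the lines, and every eigenvalue of $\hat\Sigma$ lies at distance at least $0.4\delta$ from each line. (When $s=0$ or $s+r=p$, one line is dropped, consistent with the conventions $\lambda_0=+\infty$, $\lambda_{p+1}=-\infty$.)

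Writing $R(z)=(zI-\Sigma)^{-1}$ and $\hat R(z)=(zI-\hat\Sigma)^{-1}$, I would use the second resolvent identity twice:
\[
\hat R=R+RER+RERE\hat R .
\]
The constant terms cancel in the difference of the two lines. The zeroth-order term reproduces $U_SU_S^\top$.

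\textbf{First-order term.} The linear term $\int RER\,dz$ is evaluated in the eigenbasis. Its $(i,j)$ entry is $u_i^\top Eu_j$ times $\mathrm{p.v.}\int dz/((z-\lambda_i)(z-\lambda_j))$ over the line. By residues this integral vanishes when $\lambda_i,\lambda_j$ are on the same side of the line, and equals $\pm 2\pi i/(\lambda_i-\lambda_j)$ otherwise. Subtracting the two lines leaves nonzero contributions only for pairs with one index in $S$ and the other in $S^c$. These are exactly the entries of $K$, giving the term $UKU^\top$.

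\textbf{Remainder.} The second-order term is
\[
\Delta=\frac{-1}{2\pi i}\Big(\int_{c_1}-\int_{c_2}\Big)RERE\hat R\,dz .
\]
On either line, with $z=c+iy$,
\[
\|RERE\hat R\|_{\op}\leq \frac{\|E\|_{\op}^2}{\big((\delta/2)^2+y^2\big)\big((0.4\delta)^2+y^2\big)^{1/2}} .
\]
This decays like $|y|^{-3}$, so the integral converges absolutely, and $\int_{\mathbb R}\big((0.4\delta)^2+y^2\big)^{-3/2}dy\asymp\delta^{-2}$. Hence $\|\Delta\|_{\op}\leq C(\|E\|_{\op}/\delta)^2$. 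To pass to Frobenius norm, note that $\hat U_S\hat U_S^\top-U_SU_S^\top$ and $UKU^\top$ each have rank at most $2r$. So $\Delta$ has rank at most $4r$ and $\|\Delta\|_F\leq 2\sqrt r\,\|\Delta\|_{\op}$.

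\textbf{Main obstacle.} I expect the hardest part to be the analytic bookkeeping that justifies the improper line integrals. The zeroth- and first-order terms are only conditionally convergent and must be taken as principal values, whereas the remainder converges absolutely. After that, tracking constants to reach exactly $24$ requires some care. If the constant comes out worse, I would tighten the distance estimates to $\delta/2-\|E\|_{\op}\geq 0.4\delta$ and evaluate the $y$-integral exactly.
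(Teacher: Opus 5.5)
The paper does not prove Proposition~\ref{prop:taylor}. It quotes Lemma~2 of \cite{fan2019distributed} verbatim and uses it only through the symmetric dilation in Step~4 of the proof of Theorem~\ref{theorem:upper bound}. So your proposal replaces a citation with a self-contained argument, and judged on its own it is correct.

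The checks go through:
\begin{itemize}
\item The bound on $K$ is correct.
\item Placing the two vertical lines at $\lambda_{s+1}+\delta/2$ and $\lambda_{s+r}-\delta/2$ is what makes every estimate depend on $\delta$ alone rather than on the width of the cluster.
\item The residue computation gives $K_{ij}$ for $i\in S$, $j\in S^c$, and zero on $S\times S$. For pairs with $i\leq s<s+r<j$, both lines contribute and the contributions cancel.
\item The rank count $\operatorname{rank}(\Delta)\leq 4r$ is right.
\end{itemize}
The constant needs no special care. Use $\|R\|_{\op}\leq((\delta/2)^2+y^2)^{-1/2}$, $\|\hat R\|_{\op}\leq(0.4\delta)^{-1}$ and $\int_{\mathbb R}dy/((\delta/2)^2+y^2)=2\pi/\delta$. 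Then
\[
\|\Delta\|_{\op}\leq\frac{1}{2\pi}\cdot 2\cdot\frac{2\pi}{\delta}\cdot\frac{\|E\|_{\op}^2}{0.4\,\delta}=5\Big(\frac{\|E\|_{\op}}{\delta}\Big)^2,
\qquad
\|\Delta\|_F\leq 2\sqrt r\,\|\Delta\|_{\op}\leq 10\sqrt r\Big(\frac{\|E\|_{\op}}{\delta}\Big)^2,
\]
which already beats $24\sqrt r$.

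\textbf{Two small fixes.}

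First, state the orientation of the line. For $z=c+iy$ traversed upward, $\frac{1}{2\pi i}\,\mathrm{p.v.}\!\int(z-\lambda)^{-1}dz=\tfrac12\,\mathrm{sign}(c-\lambda)$. So your formula $\tfrac12 I-\frac{1}{2\pi i}\mathrm{p.v.}\!\int R\,dz$ is the projector onto eigenvalues \emph{above} $c$ unless the line is traversed downward. Either choice works, since also $U_SU_S^\top=P_{>c_2}-P_{>c_1}$, provided you carry the signs consistently into $K$.

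Second, only the zeroth-order term is conditionally convergent; $RER=O(|y|^{-2})$ is absolutely integrable. You can avoid principal values and the $\tfrac12 I$ terms altogether:
\begin{itemize}
\item Take the Riesz projection over the rectangle whose vertical sides lie on your two lines and whose horizontal sides are at $\mathrm{Im}\,z=\pm T$.
\item On the horizontal sides, $\|\hat R-R\|_{\op}=\|RE\hat R\|_{\op}\leq\|E\|_{\op}/T^2$, and each side has length $\lambda_{s+1}-\lambda_{s+r}+\delta$.
\item Hence those sides vanish as $T\to\infty$, and $\hat U_S\hat U_S^\top-U_SU_S^\top$ equals the absolutely convergent integral of $\hat R-R$ over the two lines.
\end{itemize}

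Compared with simply citing \cite{fan2019distributed}, your route is longer. In exchange, it shows transparently why only the gap $\delta$ enters, and it yields a better constant.
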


\begin{proposition}[Lemma 1 from \cite{2dgb}]\label{prop:wedin}
    Let $\Theta$ and $\hat{\Theta}$ be $p\times q$ real rectangular matrices with non-trivial singular values $\sigma_1\geq \cdots \geq \sigma_{d}$ and $\hat{\sigma}_1\geq \cdots \geq \hat{\sigma}_{d}$ where $d=\min(p,q)$. Let $\{u_i,v_i\}_{i=1}^{d}$, $\{\hat{u}_i,\hat{v}_i\}_{i=1}^{d}$ be the corresponding singular vector pairs such that $\Theta v_i = \sigma_i u_i$, $\Theta^{\top}u_i = \sigma_i v_i$ and $\hat{\Theta}\hat{v}_i = \hat{\sigma}_i \hat{u}_i$, $\hat{\Theta}^{\top}\hat{u}_i = \hat{\sigma}_i \hat{v}_i$. Fix $s=\{0,\cdots,d-r\}$, let $\delta=\min(\sigma_s-\sigma_{s+1},\sigma_{s+r}-\sigma_{s+r+1})>0$ with $\sigma_0=+\infty$ and $\sigma_{d+1}=-\infty$ by convention. For $E = \hat{\Theta}-\Theta$, we denote $S=\{s+1,\cdots,s+r\}$ and $S^c=[d]\setminus S$, and define $\hat{\cX}  = \cX+\cE$ where
$$\hat{\cX} = \begin{pmatrix} 0& \hat{\Theta}^{\top}  \\ \hat{\Theta} & 0 \end{pmatrix},\quad \cX = \begin{pmatrix} 0& \Theta^{\top}  \\ \Theta & 0 \end{pmatrix}, \quad \cE = \begin{pmatrix} 0& E^{\top}  \\ E & 0 \end{pmatrix}.$$
In fact, the $(p+q)\times(p+q)$ matrix $\cX$ has non-trivial eigenvalues $\{\sigma_i,\sigma_{-i}\}_{i=1}^{d}$, where $\sigma_{-i}:=-\sigma_{i}$, whose corresponding eigenvectors are respectively
$$\theta_i = \frac{1}{\sqrt{2}}\begin{pmatrix} v_i\\u_i
	\end{pmatrix},\quad \theta_{-i} = \frac{1}{\sqrt{2}}\begin{pmatrix} v_i\\-u_i
	\end{pmatrix}.$$
Then, let $U_S = (u_i)_{i\in S}$, $V_S = (v_i)_{i\in S}$, $\hat{U}_S = (\hat{u}_i)_{i\in S}$, $\hat{V}_S = (\hat{v}_i)_{i\in S}$, $U_{S^c} = (u_j)_{j\in S^c}$, $V_{S^c} = (v_j)_{j\in S^c}$, we claim that
$$\begin{pmatrix}
	\hat{V}_S\hat{V}_S^{\top}-V_SV_S^{\top}&0\\
	0&\hat{U}_S\hat{U}_S^{\top}-U_SU_S^{\top}
\end{pmatrix}=L+ R,\quad \text{for}$$
$$L := \sum_{|i|\in S}\sum_{|j|\in S^c}\frac{\theta_i^{\top}\cE\theta_j}{\sigma_{i}-\sigma_{j}}\left(\theta_i\theta_j^{\top}+\theta_j\theta_i^{\top}\right),\quad\text{such that}$$
$$\|L\|_F \leq \frac{2\sqrt{2}}{\delta}\left(\|U_S^{\top}EV_{S^c}\|_F+\|V_S^{\top}E^{\top}U_{S^c}\|_F\right)\leq \frac{4\sqrt{2}r^{1/2}\|E\|_{\op}}{\delta}.$$
In addition, when $\|E\|_{\op}/\delta\leq 1/10$, we have $\|R\|_F\leq 48r^{1/2}(\|E\|_{\op}/\delta)^2$.
\end{proposition}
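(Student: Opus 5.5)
The plan is to reduce the singular-subspace statement to the symmetric eigenspace expansion in Proposition \ref{prop:taylor} through the symmetric dilation.

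\textbf{Step 1 (spectrum of the dilation).} I would first verify the eigen-structure of $\cX$. From $\Theta v_i=\sigma_i u_i$ and $\Theta^\top u_i=\sigma_i v_i$, one has $\cX\theta_{\pm i}=\pm\sigma_i\theta_{\pm i}$. The family $\{\theta_{\pm i}\}_{i\le d}$ is orthonormal, and all remaining eigenvalues of $\cX$ are $0$. The same holds for $\hat\cX$ with $\hat\theta_{\pm i}$. The key algebraic identity is
\[
\sum_{i\in S}\left(\theta_i\theta_i^\top+\theta_{-i}\theta_{-i}^\top\right)=\begin{pmatrix}V_SV_S^\top&0\\0&U_SU_S^\top\end{pmatrix}.
\]
It holds because the cross blocks $\pm\tfrac12 v_iu_i^\top$ cancel. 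The hatted version is identical. Hence the left-hand side of the claim equals
\[
(\hat P_{+}-P_{+})+(\hat P_{-}-P_{-}),
\]
where $P_{+}$ is the projector onto $\{\theta_i\}_{i\in S}$ and $P_{-}$ is the projector onto $\{\theta_{-i}\}_{i\in S}$. In particular, this sum is exactly block diagonal.

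\textbf{Step 2 (apply Proposition \ref{prop:taylor} twice).}
\begin{itemize}
\item For $P_{+}$, the eigenvalues $\{\sigma_i\}_{i\in S}$ form a contiguous block in the ordered spectrum of $\cX$.
\item For $P_{-}$, I apply the proposition to $-\cX$, under which $\{-\sigma_i\}_{i\in S}$ becomes a top-ordered contiguous block.
\item The perturbation is $\cE$ in both cases, and $\|\cE\|_{\op}=\|E\|_{\op}$.
\end{itemize}
Proposition \ref{prop:taylor} then gives $\hat P_{\pm}-P_{\pm}=L_{\pm}+\Delta_{\pm}$. Here $L_{\pm}=\sum_{i\in\pm S}\sum_{j\notin\pm S}\frac{\theta_i^\top\cE\theta_j}{\lambda_i-\lambda_j}(\theta_i\theta_j^\top+\theta_j\theta_i^\top)$, where $j$ ranges over all other eigenvectors. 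I set $L:=L_{+}+L_{-}$ and $R:=\Delta_{+}+\Delta_{-}$. Then $\|R\|_F\le 2\cdot 24r^{1/2}(\|E\|_{\op}/\delta)^2=48r^{1/2}(\|E\|_{\op}/\delta)^2$ whenever $\|E\|_{\op}/\delta\le 1/10$.

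As for $L$, the terms pairing $i\in S$ with $-i'\in -S$ appear in both $L_+$ and $L_-$ with opposite-signed denominators, so they cancel. The zero-eigenvalue directions contribute only through $U_{S^c}$ and $V_{S^c}$ components. What remains is the displayed double sum over $|i|\in S$ and $|j|\in S^c$.

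\textbf{Step 3 (bound on $L$).} A direct computation gives $\theta_i^\top\cE\theta_j=\tfrac12(\pm u_i^\top Ev_j\pm v_i^\top E^\top u_j)$. Grouping by sign, I use orthonormality of the $\theta$'s and $|\sigma_i-\sigma_j|\ge\delta$ for $|i|\in S$, $|j|\in S^c$, so each coefficient is at most $1/\delta$ in absolute value. This yields $\|L\|_F\le\frac{2\sqrt2}{\delta}(\|U_S^\top EV_{S^c}\|_F+\|V_S^\top E^\top U_{S^c}\|_F)$. Finally, each of these Frobenius norms is at most $\sqrt r\|E\|_{\op}$, which gives the last inequality.

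\textbf{Main obstacle.} The hardest step is the eigengap bookkeeping in Step 2. When applying Proposition \ref{prop:taylor} to the block $S$ of the $(p+q)$-dimensional dilation, the relevant gap is to the nearest eigenvalue outside $S$. Depending on the configuration, that neighbour may be $\sigma_{s+r+1}$, a zero eigenvalue of the dilation when $p\ne q$, or $-\sigma_{s+r}$. I would check that in all cases this gap is at least $\delta$, interpreting the convention $\sigma_{d+1}$ so that the gap to the zero and negative block is $\sigma_{s+r}\ge\delta$. This is where I expect to need to be careful with the $\sigma_{d+1}=-\infty$ convention, possibly replacing it with $0$ in the dilation.
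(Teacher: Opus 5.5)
Your mechanism is the natural one: the symmetric dilation, Proposition~\ref{prop:taylor} applied to the block $\{\theta_i\}_{i\in S}$ and to its mirror $\{\theta_{-i}\}_{i\in S}$, and cancellation of the $\theta_i$--$\theta_{-i'}$ cross terms. It is exactly what the paper does in Step~4 of the proof of Theorem~\ref{theorem:upper bound}; the paper gives no separate proof of this cited lemma.

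\textbf{The gap.} The step that fails is your claim that the zero-eigenvalue directions contribute only through $U_{S^c}$ and $V_{S^c}$. This holds only when $p=q$. The dilation is $(p+q)\times(p+q)$, but $\{\theta_{\pm j}\}_{j\le d}$ spans only $2\min(p,q)$ dimensions. If, say, $p>q$, there are $p-q$ further null eigenvectors $\xi=(0;w)$, with $w\in\mathbb{R}^p$ orthogonal to $u_1,\dots,u_d$. These lie in the complement of the block when you invoke Proposition~\ref{prop:taylor}. Since $\theta_i^\top\mathcal{E}\xi=\theta_{-i}^\top\mathcal{E}\xi=w^\top Ev_i/\sqrt2$ while the denominators are $\pm\sigma_i$, these terms add rather than cancel. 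So $L_{+}+L_{-}$ contains the first-order piece $\sum_{i\in S}\sigma_i^{-1}\mathrm{diag}\bigl(0,\,P_\perp Ev_iu_i^\top+u_iv_i^\top E^\top P_\perp\bigr)$, with $P_\perp=I_p-\sum_{j\le d}u_ju_j^\top$, and this piece is not in the displayed $L$. Hence $L_{+}+L_{-}\neq L$, and with $R:=\Delta_{+}+\Delta_{-}$ the identity ``left-hand side $=L+R$'' fails.

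No argument can close this, because the statement as transcribed is false for $p\neq q$. Take $\Theta=\begin{pmatrix}2&0\\0&1\\0&0\end{pmatrix}$, $S=\{1\}$ (so $\delta=1$) and $E=\eta e_3e_1^\top$. Then $\mathcal{E}\theta_{\pm2}=0$, hence $L=0$. But $\hat u_1=(2,0,\eta)^\top/\sqrt{4+\eta^2}$ gives $\|R\|_F=\sqrt2\,\eta/\sqrt{4+\eta^2}>48\eta^2$ for small $\eta$, even though $\|E\|_{\op}/\delta=\eta\le1/10$.

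\textbf{What your argument does prove.}
\begin{itemize}
\item For $p=q$, $\{\theta_{\pm j}\}_{j\le d}$ is a complete eigenbasis, and your Steps 1--3 go through as written.
\item For $p\neq q$, it proves a corrected version. The inner sum of $L$ must also run over the null eigenvectors $(0;w)$ (or $(w;0)$ when $q>p$), with eigenvalue $0$. In the Frobenius bound, $U_{S^c}$ and $V_{S^c}$ must be replaced by orthonormal bases of the full orthogonal complements of the column spaces of $U_S$ in $\mathbb{R}^p$ and of $V_S$ in $\mathbb{R}^q$. This corrected form is the one the paper actually relies on: in Step~4, the terms from the whole zero eigenspace of the dilation are what produce $G^{(k)}$.
\end{itemize}

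\textbf{The eigengap convention.} Your worry here is justified. Proposition~\ref{prop:taylor} needs the gap of the block inside the spectrum of the dilation. For $s+r=d$ this gap is $\sigma_d$ (or $2\sigma_d$ if $p=q$), so you must take $\sigma_{d+1}:=0$. For $p\neq q$ the statement itself forces this. With $\sigma_{d+1}=-\infty$, take $\Theta=\begin{pmatrix}1&0\\0&\epsilon\\0&0\end{pmatrix}$, $S=\{2\}$ and $E=\epsilon e_3e_2^\top$. Then $\delta=1-\epsilon$, yet $\hat u_2=(0,1,1)^\top/\sqrt2$ is at $45^\circ$ from $u_2$. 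So no bound of order $\|E\|_{\op}/\delta$ on the left-hand side can hold.

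Two harmless slips: for $-\mathcal{X}$ the perturbation is $-\mathcal{E}$, not $\mathcal{E}$; and $\{\theta_{-i}\}_{i\in S}$ occupies positions $s+1,\dots,s+r$ of $-\mathcal{X}$, not the top. Your formula for $L_{-}$ is nonetheless right.
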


\begin{proposition}[Theorem 2.5 from \cite{bosq2000stochastic}]\label{prop:subexp}
	For independent random vectors $\{X_i\}_{i=1}^{n}$ in a separable Hilbert space with norm $\|\cdot\|$, if $\E(X_i) = 0$ and $\|\|X_i\|\|_{\psi_1}\leq L_i<\infty$, we have
    $$\left\|\left\|\sum_{i=1}^{n}X_i\right\|\right\|_{\psi_1}\lesssim \left(\sum_{i=1}^{n}L_i^2\right)^{1/2}.$$
\end{proposition}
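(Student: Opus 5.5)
The plan is to derive a dimension-free Bernstein-type tail bound for $S_n:=\sum_{i=1}^n X_i$ and then convert it into a $\psi_1$ bound. Since every $L_i\le \big(\sum_i L_i^2\big)^{1/2}=:B$, it suffices to show
$$
\P\left\{\|S_n\|\geq t\right\}\leq 2\exp\left(-c\min\left\{\frac{t^2}{B^2},\frac{t}{\max_i L_i}\right\}\right),\qquad t>0 .
$$
Indeed, this tail implies $\|\|S_n\|\|_{\psi_1}\lesssim B+\max_i L_i\leq 2B$, by the standard equivalence between tails and Orlicz norms used in the proof of Lemma \ref{lem:noise}.

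\textbf{Step 1: moment conditions.} From $\|\|X_i\|\|_{\psi_1}\le L_i$ I will extract Bernstein moment conditions,
$$
\E\|X_i\|^k\leq \frac{k!}{2}\,(CL_i)^2 (CL_i)^{k-2},\qquad k\geq 2 .
$$
This follows by writing $\E\|X_i\|^k=\int_0^\infty k s^{k-1}\P(\|X_i\|>s)\,ds$ and using the tail $\P(\|X_i\|>s)\le 2e^{-s/L_i}$. The resulting bound is $2k!\,L_i^k$, which has the stated form after adjusting $C$.

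\textbf{Step 2: dimension-free exponential moment.} This is the main obstacle, because a scalar Chernoff argument does not apply directly to the norm of a Hilbert-space sum. I would use Pinelis' martingale method, which relies only on the $2$-smoothness of a Hilbert space. Set $\mathcal F_j=\sigma(X_1,\dots,X_j)$, $S_j=\sum_{i\le j}X_i$, and consider $u(x)=\cosh(\lambda\|x\|)$.

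For the increment, a second-order Taylor expansion of $u$ along $S_{j-1}+sX_j$, $s\in[0,1]$, gives a first-order term proportional to $\langle S_{j-1},X_j\rangle$. This term vanishes in conditional expectation because $\E X_j=0$ and the $X_j$ are independent. The second derivative of $\cosh(\lambda\|\cdot\|)$ in a Hilbert space is bounded by $\lambda^2\cosh(\lambda\|\cdot\|)$. Together with $\cosh(a+b)\le\cosh(a)e^{|b|}$, this yields
$$
\E\big[u(S_j)\,\big|\,\mathcal F_{j-1}\big]\leq u(S_{j-1})\Big(1+\E\big(e^{\lambda\|X_j\|}-1-\lambda\|X_j\|\big)\Big).
$$
Iterating over $j$ and using $1+x\le e^x$ gives
$$
\E\cosh(\lambda\|S_n\|)\leq \exp\Big(\sum_i \E\big(e^{\lambda\|X_i\|}-1-\lambda\|X_i\|\big)\Big).
$$

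\textbf{Step 3: Chernoff and optimization over $\lambda$.} By Step 1, for $0<\lambda\le 1/(2C\max_i L_i)$,
$$
\E\big(e^{\lambda\|X_i\|}-1-\lambda\|X_i\|\big)\le \sum_{k\ge2}\frac{\lambda^k\E\|X_i\|^k}{k!}\lesssim \lambda^2L_i^2 .
$$
Then Markov's inequality applied to $\cosh(\lambda\|S_n\|)\ge e^{\lambda\|S_n\|}/2$ gives
$$
\P(\|S_n\|\ge t)\le 2\exp\big(-\lambda t+C'\lambda^2B^2\big).
$$
Choosing $\lambda=\min\{t/(2C'B^2),\,1/(2C\max_i L_i)\}$ produces the two-regime tail displayed above. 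Integrating this tail into the definition of the $\psi_1$ norm completes the argument.

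The delicate point is the Hessian bound in Step 2. I would verify it directly: $\nabla\|x\|=x/\|x\|$, and the Hessian of $\|x\|$ is $(I-xx^\top/\|x\|^2)/\|x\|$. Combining this with $\sinh(a)/a\le\cosh(a)$ gives the bound $\lambda^2\cosh(\lambda\|x\|)$ on the Hessian of $u$, independently of the dimension.
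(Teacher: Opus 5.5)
Your proof is correct. The paper gives no proof of its own: it imports the statement from \cite{bosq2000stochastic}, Theorem 2.5, which is the Pinelis--Sakhanenko Bernstein-type inequality for sums of independent zero-mean Hilbert-space vectors. Your proposal reconstructs that route in full: Bernstein moment conditions from the $\psi_1$ bound, the dimension-free $\cosh(\lambda\|\cdot\|)$ supermartingale estimate, a Chernoff bound, and finally the tail-to-$\psi_1$ conversion using $\max_i L_i\le(\sum_i L_i^2)^{1/2}$. Two details are left implicit but are routine. First, $\cosh(\lambda\|x\|)$ is smooth at $x=0$, being a power series in $\|x\|^2$. Second, $\langle\nabla u(S_{j-1}),X_j\rangle$ is integrable for $\lambda\le 1/(2C\max_i L_i)$, because $\E e^{\lambda\|X_i\|}<\infty$ in that range.
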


\begin{proposition}
[Theorem 4.3.1  from \cite{horn2012matrix}]
\label{prop:weyl}
Let \( A, B \in\mathbb{R}^{n\times n}\) be two symmetric matrices and let the eigenvalues \( \lambda_i(A) \), \( \lambda_i(B) \), and \( \lambda_i(A + B) \) be arranged in decreasing order. For each \( k = 1, 2, \cdots, n \), we have
\[
\lambda_k(A)+\lambda_n(B)\leq\lambda_k(A + B)\leq\lambda_k(A)+\lambda_1(B)
\]
\end{proposition}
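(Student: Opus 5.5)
We would derive Proposition \ref{prop:weyl} directly from the Courant--Fischer min-max characterization of eigenvalues of a symmetric matrix, which we take as the basic tool. For a symmetric $M\in\mathbb{R}^{n\times n}$ with eigenvalues in decreasing order it reads
\[
\lambda_k(M)=\max_{\dim(\mathcal S)=k}\ \min_{x\in\mathcal S,\ \|x\|=1} x^\top Mx .
\]
The other ingredient is the Rayleigh-quotient bound: for any unit vector $x$,
\[
\lambda_n(B)\leq x^\top Bx\leq \lambda_1(B).
\]
This follows from the spectral decomposition of $B$, writing $x$ in an orthonormal eigenbasis of $B$ and noting that $x^\top Bx$ is a convex combination of the eigenvalues of $B$.

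\textbf{Upper bound.} Fix any subspace $\mathcal S$ with $\dim(\mathcal S)=k$. For every unit $x\in\mathcal S$,
\[
x^\top(A+B)x\leq x^\top Ax+\lambda_1(B).
\]
Taking the minimum over unit $x\in\mathcal S$ on both sides preserves the inequality, since the $\lambda_1(B)$ term is constant in $x$. This gives
\[
\min_{x\in\mathcal S,\ \|x\|=1} x^\top(A+B)x\ \leq\ \min_{x\in\mathcal S,\ \|x\|=1} x^\top Ax+\lambda_1(B).
\]
Next we take the maximum over all $k$-dimensional $\mathcal S$. By Courant--Fischer applied to $A+B$ on the left and to $A$ on the right, this yields $\lambda_k(A+B)\leq\lambda_k(A)+\lambda_1(B)$.

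\textbf{Lower bound.} We repeat the argument with $x^\top(A+B)x\geq x^\top Ax+\lambda_n(B)$, which gives $\lambda_k(A+B)\geq\lambda_k(A)+\lambda_n(B)$. An alternative for this direction is to apply the upper bound to the pair $A+B$ and $-B$. Since $\lambda_1(-B)=-\lambda_n(B)$, this gives $\lambda_k(A)\leq\lambda_k(A+B)-\lambda_n(B)$, which rearranges to the claim.

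\textbf{Main obstacle.} The only delicate point is justifying that the $\max$--$\min$ inequality passes through both optimizations. It does, because the comparison holds pointwise in $x$ for each fixed subspace, so it survives the inner minimum, and then holds for every $\mathcal S$, so it survives the outer maximum. Everything else is routine; no earlier result of the paper is needed beyond the spectral theorem.
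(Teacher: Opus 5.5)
Your proof is correct. The Rayleigh-quotient bound $\lambda_n(B)\leq x^\top Bx\leq\lambda_1(B)$ holds pointwise, so it passes through the inner minimum and the outer maximum in Courant--Fischer, and your $-B$ reformulation of the lower bound is also valid.

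The paper gives no proof of its own; it simply cites Theorem 4.3.1 of Horn and Johnson. That theorem is proved with the same variational, Rayleigh-quotient machinery, via a subspace-intersection dimension count yielding the general inequalities $\lambda_{i+j-1}(A+B)\leq\lambda_i(A)+\lambda_j(B)$ (decreasing order). The stated bounds are the special cases $j=1$ and $j=n$ (for the lower bound), so your route is essentially the standard one.
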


\begin{proposition}
[Lemma 2 in \cite{zhang2018tensor}]
\label{prop:packing}
Given such a subspace \( L \subset \mathbb{R}^p \) with \( \dim(L) = r \),
let \( U_L \in \cO(p, r) \) denote the orthonormal basis of \( L \) while $\cO(p, r):= \{U_L| L = \cG(p,r)\}$ equipped with Schatten \( q \)-norm distances for all \( q \in [1, +\infty] \):
$
d_q(U_{L_1}, U_{L_2}) := \|U_{L_1} U_{L_1}^{\top} - U_{L_2} U_{L_2}^{\top}\|_q.
$
For a given metric space $(T,d)$, the \( \epsilon \)-packing number of the $(T,d)$ is defined as:
\[
D(T, d, \epsilon) := \max \left\{n: \text{there are } t_1, \cdots, t_n \in T, \text{ such that } \min_{i \neq j} d(t_i, t_j) > \epsilon \right\}.
\]
Then, for any integers \( 1 \leq r \leq p \) such that \( r \leq p - r \), and all \( 1 \leq q \leq +\infty \), the following
bound holds
\[
\left( \frac{c}{\epsilon} \right)^{r(p - r)} \leq D(\cO(p,r), d_q, \epsilon r^{1/q}) \leq \left( \frac{C}{\epsilon} \right)^{r(p - r)}
\]
with absolute constants \( c, C > 0 \). 
\end{proposition}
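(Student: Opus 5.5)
The plan is to prove both bounds by a volumetric argument in a local chart of the Grassmannian, after which everything reduces to volume estimates for Schatten-$q$ balls in $\mathbb{R}^{(p-r)\times r}$. Write $N:=r(p-r)$, $U_0:=[I_r;0]\in\cO(p,r)$, and use the chart
\[
\phi(B):=\begin{pmatrix} I_r\\ B\end{pmatrix}(I_r+B^\top B)^{-1/2},\qquad B\in\mathbb{R}^{(p-r)\times r}.
\]

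\textbf{Step 1 (bi-Lipschitz chart).} I will show there are absolute constants $c_0,\eta>0$ such that for $\|B_1\|_{\op},\|B_2\|_{\op}\le\eta$ and every $q\in[1,\infty]$,
\[
c_0\|B_1-B_2\|_q\le d_q(\phi(B_1),\phi(B_2))\le c_0^{-1}\|B_1-B_2\|_q .
\]
The projector is $\phi(B)\phi(B)^\top=\begin{pmatrix} I\\ B\end{pmatrix}(I+B^\top B)^{-1}\begin{pmatrix} I & B^\top\end{pmatrix}$. Its off-diagonal block is $B(I+B^\top B)^{-1}=B+O(\|B\|_{\op}^2)$ as a smooth matrix function, and the diagonal blocks are $O(\|B\|_{\op}^2)$-perturbations. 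Taking differences and using ideal-norm inequalities $\|XYZ\|_q\le\|X\|_{\op}\|Y\|_q\|Z\|_{\op}$, the difference of projectors equals the block-antisymmetric linear term in $B_1-B_2$ plus a remainder bounded by $C\eta\|B_1-B_2\|_q$. The linear term has Schatten-$q$ norm comparable to $\|B_1-B_2\|_q$, with comparison constants $1$ and $2^{1/q}\le 2$. Choosing $\eta$ small gives the claim.

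\textbf{Step 2 (lower bound).} Let $\mathcal{K}:=\{B:\|B\|_q\le\eta r^{1/q}\}$, which lies inside $\{\|B\|_{\op}\le\eta\}$. A maximal $(\epsilon r^{1/q}/c_0)$-separated subset of $\mathcal{K}$ in $\|\cdot\|_q$ has, by the standard volume comparison, cardinality at least $(\eta c_0/\epsilon)^N$. Its image under $\phi$ is $\epsilon r^{1/q}$-separated in $d_q$ by Step 1, so $D\ge(c/\epsilon)^N$. For $\epsilon$ above a constant the bound is trivial.

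\textbf{Step 3 (upper bound).} I will use the $O(p)$-invariant probability measure $\mu$ on the Grassmannian. An $\epsilon r^{1/q}$-packing yields disjoint $d_q$-balls of radius $\epsilon r^{1/q}/2$, so $D\le 1/\inf_U\mu(\text{ball}(U,\epsilon r^{1/q}/2))$. By invariance it suffices to treat $U=U_0$. By Step 1 the ball contains $\phi(\{\|B\|_q\le c\epsilon r^{1/q}\})$, and the pushforward of $\mu$ in the chart has density bounded below by a constant to the power $N$ on $\{\|B\|_{\op}\le\eta\}$ (it is proportional to $\det(I+B^\top B)^{-p/2}$). This gives $\mu(\text{ball})\ge C^{-N}\,\mathrm{vol}(c\epsilon r^{1/q}B_q)/\mathrm{vol}(\text{fundamental domain})$. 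I then need two facts. First, $\mathrm{vol}(r^{1/q}B_q)^{1/N}\asymp\mathrm{vol}(B_{\op})^{1/N}$, uniformly in $q$; this is Saint-Raymond/Szarek's volume formula for Schatten balls. Second, the total chart volume, normalized through the Haar measure, is at most $C^N\mathrm{vol}(B_{\op})$. Combining these yields $D\le(C/\epsilon)^N$.

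The main obstacle is keeping every constant absolute, uniformly in $p$, $r$ and $q$. This is hardest in Step 3, where the Haar-measure normalization and the Schatten-ball volume ratios must be controlled to within $C^N$. If this route becomes unwieldy, I would instead invoke Szarek's metric entropy theorem for homogeneous spaces directly, which gives the $q=\infty$ case. I would then interpolate to general $q$ using $\|X\|_{\op}\le\|X\|_q\le(2r)^{1/q}\|X\|_{\op}$ for the rank-$\le 2r$ differences $P_1-P_2$, which loses only absolute constants in the separation scale $\epsilon r^{1/q}$.
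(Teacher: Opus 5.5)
The paper gives no proof of this statement; it quotes Lemma~2 of \cite{zhang2018tensor}, which goes back to Szarek's and Pajor's volumetric metric-entropy bounds for Grassmannians. Your chart-plus-volume plan is the right kind of argument, but as written Steps~2 and~3 rest on a false inclusion.

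The set $\mathcal K=\{B:\|B\|_q\le\eta r^{1/q}\}$ is not contained in $\{\|B\|_{\mathrm{op}}\le\eta\}$. The inequality $\|B\|_{\mathrm{op}}\le\|B\|_q$ only gives $\|B\|_{\mathrm{op}}\le\eta r^{1/q}$, and $B=\eta r^{1/q}e_1e_1^\top$ violates your claim whenever $q<\infty$ and $r\ge2$. For $q=1$ its operator norm is $\eta r$, where $\phi$ is far from bi-Lipschitz, so separation does not transfer through the chart. Because these matrices have rank at most $r$, the true inclusion is the reverse one, $\eta B_{\mathrm{op}}\subset\eta r^{1/q}B_q$.

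So in Step~2 you must pack $\eta B_{\mathrm{op}}$ itself at $\|\cdot\|_q$-scale $\epsilon r^{1/q}/c_0$. The volume count then gives $(\eta c_0/\epsilon)^N\,\mathrm{vol}(B_{\mathrm{op}})/\mathrm{vol}(r^{1/q}B_q)$. The lower bound, which is the only genuinely $q$-dependent part of the statement, therefore needs the nontrivial inequality $\mathrm{vol}(r^{1/q}B_q)^{1/N}\le C\,\mathrm{vol}(B_{\mathrm{op}})^{1/N}$ uniformly in $q$, and your Step~2 does not supply it.

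This inequality does hold. Let $G$ be a standard Gaussian $(p-r)\times r$ matrix. By Urysohn's inequality, the mean width of $r^{1/q}B_q$ is $\lesssim r\,\mathbb{E}\|G\|_{\mathrm{op}}/\sqrt N\lesssim\sqrt r$, which gives $\mathrm{vol}(r^{1/q}B_q)^{1/N}\lesssim(p-r)^{-1/2}$. In the other direction, $\mathbb{P}(\|G\|_{\mathrm{op}}\le4\sqrt{p-r})\ge1/2$ together with the bound $(2\pi)^{-N/2}$ on the Gaussian density gives $\mathrm{vol}(B_{\mathrm{op}})^{1/N}\gtrsim(p-r)^{-1/2}$.

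Your fallback cannot replace this step. From an operator-norm packing at scale $\epsilon$, the inequality $\|X\|_q\ge\|X\|_{\mathrm{op}}$ only yields $q$-separation $\epsilon$, not $\epsilon r^{1/q}$. The comparison factor $(2r)^{1/q}$ is not an absolute constant, so interpolation from $q=\infty$ handles only the upper bound.

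Step~3 has the same flaw, since $\{\|B\|_q\le c\epsilon r^{1/q}\}$ also leaves the chart region. It is easy to repair, and it needs only the trivial direction of the volume comparison, not the one you cite there. Since $P_1-P_2$ has rank at most $2r$, $d_q\le(2r)^{1/q}d_\infty$. Hence the $d_q$-ball of radius $\epsilon r^{1/q}/2$ contains the $d_\infty$-ball of radius $\epsilon/4$, which contains $\phi(c\epsilon B_{\mathrm{op}})$ by Step~1. The upper bound thus reduces to $q=\infty$, with $c\epsilon B_{\mathrm{op}}$ replacing the Schatten ball.

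You still owe the normalization $Z\le C^N\mathrm{vol}(B_{\mathrm{op}})$ for the density $\det(I+B^\top B)^{-p/2}$. You can avoid computing $Z$ altogether. First take a crude $\eta$-net of size $(3/\eta)^{pr}\le(3/\eta)^{2N}$, using $pr\le2r(p-r)$. Then cover each chart neighbourhood at scale $\epsilon$ with $(C\eta/\epsilon)^N$ points.

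Finally, the upper bound cannot hold for all $\epsilon>0$, because $D\ge1$ forces $\epsilon\le C$, and $D=1$ as soon as $\epsilon\ge2^{1/q}$. The statement must therefore be read for $\epsilon$ below an absolute constant. That is exactly the range your argument covers: it is valid for $\epsilon\le\epsilon_0$, and monotonicity of $D$ in $\epsilon$ extends it up to $2^{1/q}$.
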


\vskip 0.2in
\bibliography{main}

\end{document}